\newcommand{\memory}{\mu}
\definecolor{bgcolor}{rgb}{0.65, 0.85, 0.99}
\newcommand{\E}{\mathbb{E}}
\newcommand{\set}[1]{\left\{#1\right\}}
\newcommand{\brackets}[1]{\left[#1\right]}
\def\<{\left\langle}
\def\>{\right\rangle}
\def\[{\left[}
\def\]{\right]}
\def\({\left(}
\def\){\right)}
\newcommand{\ie}{\textit{i.e.}}
\newcommand{\vol}{\operatorname{vol}}
\newcommand{\poly}{\operatorname{poly}}
\newcommand{\tmix}{\tau_{\operatorname{mix}}}
\newcommand{\congest}{\ensuremath{\mathsf{CONGEST}}\xspace}
\newcommand{\local}{\ensuremath{\mathsf{LOCAL}}\xspace}
\newcommand{\clique}{\ensuremath{\mathsf{Congested\ Clique}}\xspace}
\newcommand{\switch}{\ensuremath{\mu}-\congest}
\newcommand{\floor}[1]{\lfloor#1\rfloor}
\newcommand{\parentheses}[1]{\left(#1\right)}
\begin{document}
\title{Bounded Memory in Distributed Networks}

\begin{CCSXML}
<ccs2012>
   <concept>
       <concept_id>10003752.10003809.10010172</concept_id>
       <concept_desc>Theory of computation~Distributed algorithms</concept_desc>
       <concept_significance>500</concept_significance>
       </concept>
 </ccs2012>
\end{CCSXML}

\ccsdesc[500]{Theory of computation~Distributed algorithms}

\keywords{Distributed graph algorithms, Bounded memory, Subgraph detection, Streaming algorithms}

\author{Ran Ben Basat}
\orcid{0000-0003-0196-9190}
\affiliation{%
  \institution{University College London}
  \city{London}
  \country{UK}}
\email{r.benbasat@cs.ucl.ac.uk}

\author{Keren Censor-Hillel}
\orcid{0000-0003-4395-5205}
\affiliation{%
  \institution{Technion}
  \city{Haifa}
  \country{Israel}}
\email{ckeren@cs.technion.ac.il}

\author{Yi-Jun Chang}
\orcid{0000-0002-0109-2432}
\affiliation{%
  \institution{National University of Singapore}
  \country{Singapore}}
\email{cyijun@nus.edu.sg}

\author{Wenchen Han}
\orcid{0009-0001-8262-5658}
\affiliation{%
  \institution{University College London}
  \city{London}
  \country{UK}}
\email{wenchen.han.22@ucl.ac.uk}

\author{Dean Leitersdorf}
\orcid{0000-0002-2775-9207}
\affiliation{%
  \institution{Technion}
  \city{Haifa}
  \country{Israel}}
\email{dean.leitersdorf@gmail.com}

\author{Gregory Schwartzman}
\orcid{0000-0002-8461-1479}
\affiliation{%
  \institution{JAIST}
  \city{Nomi}
  \country{Japan}}
\email{gregory.schwartzman@gmail.com}

\newif\ifcomm
\ifcomm
\else
\commfalse
\fi
\ifcomm
    \newcommand\Wenchen[1]{\textcolor{blue}{Wenchen: #1}}
    \newcommand\ran[1]{\textcolor{red}{Ran: #1}}
    \newcommand\gregory[1]{\textcolor{purple}{Gregory: #1}}
    \newcommand\keren[1]{\textcolor{violet}{Keren: #1}}
    \newcommand\yijun[1]{\textcolor{cyan}{Yi-Jun: #1}}
    \newcommand\dtodo[1]{\textcolor{blue}{[Dean: #1]}}
    \newcommand{\CRdel}[1]{\textcolor{red}{\sout{#1}}}
    \definecolor{wenchen}{RGB}{0,0,200}
    \newcommand{\update}{\textcolor{wenchen}}
\else
    \newcommand{\mycomm}[3]{}
    \newcommand{\CRdel}[1]{}
    \newcommand\Wenchen[1]{}
    \newcommand\ran[1]{}
    \newcommand\gregory[1]{}
    \newcommand\keren[1]{}
    \newcommand\yijun[1]{}
    \newcommand\dtodo[1]{}
    \newcommand{\update}{}
\fi


\begin{abstract}
The recent advent of programmable switches makes distributed algorithms readily deployable in real-world datacenter networks. However, there are still gaps between theory and practice that prevent the smooth adaptation of \congest algorithms to these environments.
In this paper, we focus on the memory restrictions that arise in real-world deployments. We introduce the $\mu$-\congest model where on top of the bandwidth restriction, the memory of nodes is also limited to $\mu$ words, in line with real-world systems. We provide fast algorithms of two main flavors.

First, we observe that many algorithms in the \congest model are memory-intensive and do not work in $\mu$-\congest. A prime example of a family of algorithms that use large memory is clique-listing algorithms. We show that the memory issue that arises here cannot be resolved without incurring a cost in the round complexity, by establishing a lower bound on the round complexity of listing cliques in $\mu$-\congest. We introduce novel techniques to overcome these issues and generalize the algorithms to work within a given memory bound. Combined with our lower bound, these provide tight tradeoffs between the running time and memory of nodes. 

Second, we show that it is possible to efficiently simulate various families of streaming algorithms in $\mu$-\congest. These include fast simulations of $p$-pass algorithms, random order streams, and various types of mergeable streaming algorithms.

Combining our contributions, we show that we can use streaming algorithms to efficiently generate statistics regarding combinatorial structures in the network. An example of an end result of this type is that we can efficiently identify and provide the per-color frequencies of the frequent monochromatic triangles in $\mu$-\congest.

\end{abstract}
\maketitle
\ifcomm
\fi



\ifcomm
\fi





\section{Introduction}
\label{sec:intro}

Traditionally, network switches were fixed-purpose hardware that typically require manufacturing new chips for any customization. While some distributed algorithms (e.g., Spanning Tree Protocol \cite{IEEESTP}) were used in the past, there was a significant industry shift to Software Defined Networking (SDN) in which almost all network functionalities are performed on a centralized controller that optimizes the configuration and updates the switches~\cite{hong2018b4, ferguson2021orion, mckeown2008openflow, medved2014opendaylight, berde2014onos}.

Quite recently, a new generation of switches emerged that allows programmability \cite{bosshart2013forwarding, Tofino, bosshart2014p4,trident}. These switches allow us to compile code that modifies their behavior without changing the hardware.
This opens opportunities to leverage advances in theoretical distributed algorithms, allowing quick prototyping and evaluation of new ideas. 
Essentially, by running distributed algorithms, one can avoid going through the controller and optimize the reaction time to network changes (e.g., link/switch failures). 
Indeed, recently, \cite{HanFSMYB22} showed the benefit of adapting some theoretical algorithms to these switches in several networking applications, such as clock synchronization and source-routed multicast, in which reaction time is paramount.
For example, they show that distributed algorithms such as graph spanners can be leveraged to react to network changes two orders of magnitude \mbox{faster than traditional SDN-based solutions.}



However, the crux is that implementing algorithms on these switches introduces some challenges that are not accounted for in most theory models, the main ones being:
(i) these switches have a limit on the number of operations they can invoke per arriving packet, i.e., they have a limit on their computational abilities, and
(ii) their memory is limited.

In this paper, we take a first step towards modeling the limited resources available at these switches. Our intent is to allow theory researchers to design fast algorithms that will be easier to implement on switches. 
Current datacenter networks can have 10s-100s of thousands of switches~\cite{guo2015pingmesh}, each with 10s-100s of ports (neighbors in the communication graph)~\cite{Tofino,trident}. The local memory for each switch is often measured in just 10s of MB~\cite{Tofino}, as we must use a very fast SRAM memory to allow accessing it while processing packets. Thus, it is reasonable to work with roughly quasi-linear memory. Yet, to allow our abstraction to have wider applicability, we will parametrize the memory bound by $\mu$, as follows.

\paragraph{The $\mu$-\congest model.} Our abstraction is based on the classic \congest model~\cite{PelegBook}, in which the network is represented by an undirected graph $G = (V, E)$, where $n$ nodes are computational units (switches) and edges are communication links. Communication proceeds in synchronous rounds, in which each node is allowed to send an $O(\log n)$-bit message to each of its neighbors. The main complexity measure is the number of rounds required for an algorithm to complete. The point in which the $\mu$-\congest model deviates from the classic \congest model is that each node only has $\mu$ words (each of $O(\log n)$ bits) of memory to store the inputs and some auxiliary variables. For outputs, once an output word of a \switch algorithm is ready, it will be emitted and no longer take any of the nodes' memory. This is well motivated both from a theoretical and a practical point of view. For example, consider the fundamental problem of triangle listing: a node might need to output a number of triangles that is much larger than its memory. From a practical perspective, it is common that intermediate inputs are transmitted to a higher-level application. We note that while many \congest algorithms are naturally memory efficient as they require a small number of rounds (e.g., nodes can store the received messages), some applications require a substantial amount of memory, as we exemplify with $k$-clique listing below.


%
%

Throughout the paper, unless otherwise stated, we assume that $\mu$ is at least the maximum degree $\Delta$. If we allow the memory size $\mu$ of a node $v$ to be smaller than $\deg(v)$, then the complexity of a problem might depend on the model of how the incoming messages arrive within a round. For example, in the \emph{adversarial} setting,  the $\deg(v)$ incoming messages may arrive in any arbitrary order and $v$ needs to process these messages in a streaming fashion under the memory constraint $\mu$.

When $\mu=\omega(1)$, we may implicitly use $O(\mu)$ memory instead of $\mu$, for simplicity of proofs. In such cases, it is always possible to use $\mu$ memory by running the proof with $\mu^{'} = \mu/c$, for some constant $c$.  

\paragraph{Related work.}
To the best of our knowledge, prior work on distributed algorithms in general networks with low memory focused specifically on routing tables, in a rich line of work (see, e.g., ~\cite{PelegU89, AwerbuchBLP90, AwerbuchP92, FraigniaudG95, FraigniaudG96, GavoilleP96, FraigniaudG06, LenzenP13, LenzenP15, ElkinN18, ElkinN18b, CastanedaDT18,CastanedaLT20}).

A distributed model of computation that is tightly-related to \switch is the Massively Parallel Computation model (MPC), introduced in~\cite{KarloffSV10} and widely studied for abundant computational tasks. In MPC, the computing machines are also restricted in terms of their local memory. However, the communication network is an all-to-all network, as opposed to an arbitrary graph that is allowed in \switch. Not being able to rely on having a complete communication graph substantially increases the difficulties in the computational challenges that arise in the \switch model compared to the MPC model, and better corresponds to the real-world settings that we wish to abstract in this work. We do address an all-to-all model in Section~\ref{subsec:listingUBcc}, but this is done as a warm-up for our \switch algorithms rather than as an end goal.

Our model is also related to the Stone Age Distributed Computing model by Emek and Wattenhofer~\cite{EmekW13} where communication is asynchronous, and nodes are implemented as networked finite state machines (nFSM) and thus work with constant-size memory.

\subsection{Our Contributions}
\label{subsec:contributions}
While some algorithms in the classic \congest model already use low memory \update{to be able to fit into the limited $O(\mu)$ memory constraint}, in others, there is at least one node that collects a large amount of information. Naturally, our goal in this work is to investigate the influence of memory limitations on the latter. \update{Intuitively, as a tradeoff for the reduced memory available for the distributed algorithm, the lower bound of the round complexity required (that reflects the amount of computation being performed on the nodes) will increase accordingly. } \Wenchen{Add some high-level insights regarding the tradeoff between the decreased $\mu$ and increased round complexity / compute, and add some more introductions.}

\update{In the following sections, we showcase how we establish lower bounds and upper bounds (as algorithmic solutions) of distributed graph problems under the \switch model to provide more insights on how limited memory affects the round complexity of the \switch algorithms.}

\subsubsection{Clique listing} \quad


\paragraph{Background.} Finding $k$-cliques\footnote{For simplicity, we assume throughout the paper that $k$ is a constant.} is a prominent task in many computational settings. In the well-studied variant of \emph{listing} or \emph{enumerating} $k$-cliques, each node is required to output a set of $k$-cliques, such that the union of all outputs equals the set of all $k$-cliques in the input graph. In fact, listing $k$-cliques is an immediate example of a task whose algorithms use a rather large local memory in the classic \congest model. For example, optimal triangle listing completes in $\widetilde{O}(n^{1/3})$ rounds, and uses $\widetilde\Theta(n^{4/3})$ memory per node \cite{ChangPSZ21,DolevLP12}. It remains unclear, though, whether such memory usage is optimal (what the lower bound is), and how we could design a listing algorithm if the memory size is smaller than $\widetilde\Theta(n^{4/3})$.

\paragraph{Our results.} 
We give a lower bound on the round complexity of algorithms that use $\mu=o(n^{4/3})$ memory.

Notice that our lower bound requires a value $\ell$ which is an upper bound on the number of messages each node receives per round -- in \switch, this, of course, is at most $\Delta$, and in another all-to-all version of \switch which we introduce later, this is at most $n$.

\begin{restatable}{theorem}{ListingLB}
\label{thm: lb-listing}
Let $\ell \leq n$ be an upper bound on the number of messages any node can receive per round. Given $\ell \leq \memory\leq n^{2-2/k}$,
the round complexity for any $k$-clique listing algorithm in $\mu$-\congest is at least $\Omega(n^{k-1}/(\memory^{k/2-1}\cdot \ell)) = \Omega(n^{k-2}/\memory^{k/2-1})$. 
\end{restatable}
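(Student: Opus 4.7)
The plan is a counting argument that, for each node $v$, bounds the number of $k$-cliques $v$ can ever emit, then sums across the network. For the hard instance I would take a graph on $n$ vertices -- the complete graph $K_n$, or, when $\mu$ forces smaller degrees, a regular subgraph of comparable density -- that contains $\Theta(n^k)$ distinct copies of $K_k$ distributed roughly evenly across the nodes. Every clique must be emitted by at least one of its $k$ vertices, so the aggregate number of emissions required is $\Omega(n^k)$.

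For a fixed node $v$, I would track its ``knowledge subgraph'' $H_v^t$: the subgraph on (a subset of) $v$'s neighbors consisting of the vertex identifiers and the edges among them that currently reside in $v$'s memory at round $t$. A $k$-clique $\{v,u_1,\dots,u_{k-1}\}$ emitted by $v$ at round $t$ must manifest as a $(k-1)$-clique in $H_v^t$, so the number of distinct $k$-cliques $v$ ever emits is at most the size of $\bigcup_{t \le T} \mathcal{K}_{k-1}(H_v^t)$, where $\mathcal{K}_{k-1}$ denotes the set of $(k-1)$-cliques. Because the memory bound forces $|E(H_v^t)| \le \mu$, the Kruskal--Katona inequality gives $|\mathcal{K}_{k-1}(H_v^t)| = O(\mu^{(k-1)/2})$ at any fixed $t$; more importantly, \emph{adding} a single new vertex to a $\mu$-edge graph introduces at most $O(\mu^{(k-2)/2})$ new copies of $K_{k-1}$, since each new copy corresponds to a $(k-2)$-clique among the old vertices that is completely joined to the new one, and a $\mu$-edge graph has at most $O(\mu^{(k-2)/2})$ copies of $K_{k-2}$.

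Each round $v$ receives at most $\ell$ incoming words, so $H_v^t$ can grow by at most $\ell$ new vertex identifiers, contributing at most $O(\ell \cdot \mu^{k/2-1})$ newly identifiable $(k-1)$-cliques to $v$'s running union. Summing over $T$ rounds and across all $n$ nodes, the total number of distinct $k$-cliques that can ever be emitted is $O(n \cdot T \cdot \ell \cdot \mu^{k/2-1})$. Forcing this to dominate the $\Omega(n^k)$ cliques of the hard instance yields $T = \Omega(n^{k-1}/(\ell \cdot \mu^{k/2-1}))$; the second stated form follows immediately via $\ell \le n$.

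The main obstacle is formalizing the amortized ``one new identifier per incoming word'' bookkeeping in the presence of unrestricted local computation: $v$ can perform arbitrary operations on its memory and swap contents in and out between rounds, so the evolution of $H_v^t$ can be subtle, and the argument must charge each newly identifiable $(k-1)$-clique to a specific received word in a way that is robust under such reorganizations. A second, smaller delicacy is that the upper bound $\mu \le n^{2-2/k}$ is precisely the threshold at which the Kruskal--Katona estimate $\mu^{(k-1)/2}$ remains no larger than the trivial vertex-count bound $\binom{\mu}{k-1}$, and one needs to check that the Kruskal--Katona estimate is the binding one throughout the stated range of $\mu$.
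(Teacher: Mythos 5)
Your approach is genuinely different from the paper's: you attempt a direct combinatorial amortization (Kruskal--Katona on a ``knowledge subgraph''), whereas the paper reduces to a modified model where a node holding $3\mu$ memory may only emit at the end of non-overlapping windows of $\lceil\mu/\ell\rceil$ rounds, and then runs an information-theoretic argument over a random graph $G_{n,1/2}$ bounding the mutual information $I(E;T_i^w)$ by the memory entropy, following Izumi--Le~Gall. The windowing trick is what lets the paper avoid losing a $\mu/\ell$ factor when $\mu\gg\ell$; your per-word amortization avoids that loss for free, which is an appealing feature. However, two genuine gaps remain.

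First, $K_n$ (or any fixed dense regular graph with many cliques) is not a valid hard instance. In $\mu$-\congest each node's input is its adjacency list, and $\mu\geq\Delta$, so in $K_n$ every node already knows the entire graph before any communication occurs; the listing problem is then solvable in zero rounds, and there is no lower bound to prove. The paper chooses the \emph{random} graph $G_{n,1/2}$ precisely because a node's own adjacency list reveals essentially nothing about the edges among other pairs, forcing $\Omega(n^2)$ bits of uncertainty to be resolved by communication.

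Second, the obstacle you flag at the end is not a technicality -- it is the heart of the proof, and the combinatorial ``knowledge subgraph'' framing cannot resolve it. A node's $O(\mu\log n)$ bits of memory need not encode a definite set of edges; they can encode an arbitrary function of the transcript, i.e.\ partial, correlated information about many potential edges. Concretely, your amortization treats one incoming word as ``one new vertex identifier together with all its adjacencies to the existing subgraph,'' but revealing those adjacencies costs $\Theta(|V(H_v^t)|)$ bits, not one $O(\log n)$-bit word; and a word can in principle reveal a fraction of a bit about each of $\Theta(n)$ edges. The paper sidesteps this exactly by replacing ``edges known'' with ``mutual information between $E$ and the emitted set,'' using the convexity-type bound $I(E;T_i)\geq\E[|\mathcal{P}(T_i)|]$ together with \Cref{lemma:numCopies} (the same Kruskal--Katona-style count you invoke, but used in the contrapositive: many output cliques $\Rightarrow$ many involved edges $\Rightarrow$ much information $\Rightarrow$ large memory entropy). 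If you want to make the combinatorial route rigorous you would essentially need to re-derive that information-theoretic bridge, at which point you have recovered the paper's argument.
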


For example, for triangle listing ($k=3$), we get a bound of $\Omega(n / \mu^{1/2})$.
Indeed, we then show that our lower bound faithfully characterizes the trade-off between $\mu$ and the round complexity of triangle listing, by providing a randomized algorithm that completes within a corresponding round complexity, up to subpolynomial factors, w.h.p.\footnote{With high probability stands for a probability that is at least $1-1/n^c$ for some constant $c\geq 1$.} While our algorithms follow the high-level framework of their \congest counterparts, several intriguing technical challenges arise, which we overcome using new methods.

\begin{restatable}{theorem}{ListingUBcongest}
\label{thm: ub-listing-congest}
Given $\memory \le n^{4/3}$, there exists a 
algorithm for listing all $3$-cliques (triangles) within $n^{1+o(1)}/\memory^{1/2}$ rounds, w.h.p.
\end{restatable}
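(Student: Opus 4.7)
My plan is to extend the $\widetilde{O}(n^{1/3})$-round \congest triangle-listing algorithm of Chang, Pettie, Su, and Zhang~\cite{ChangPSZ21} by tuning its block size to the memory budget. Where that algorithm partitions $V$ into blocks of size $\Theta(n^{2/3})$ so that each auditor stores $O(n^{4/3})$ edges, I would randomly partition $V$ into $s = \Theta(n/\memory^{1/2})$ blocks of size $\Theta(\memory^{1/2})$, so that each unordered triple $T = \{V_i, V_j, V_k\}$ induces at most $O(\memory)$ edges, which fit in a single auditor's memory. A Chernoff argument shows all pair loads $|E(V_i, V_j)|$ are balanced w.h.p.; triangles that lie inside one or two blocks are gathered in $s$ or $s^2$ analogous subproblems of size $O(\memory)$ each, handled by the same machinery below.

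\textbf{Scheduling and routing.} The $\Theta(s^3) = \Theta(n^3/\memory^{3/2})$ triples are split into $B = \Theta(n^2/\memory^{3/2})$ batches of $n$ triples, with every node auditing exactly one triple per batch (an explicit round-robin schedule suffices). In each batch, a node collects its $O(\memory)$ assigned edges, lists all triangles on its triple in local time, emits them, and discards the buffer. Doing each batch in isolation would cost $\Omega(\memory)$ rounds per batch, i.e.\ $n^{2+o(1)}/\memory^{1/2}$ total, which is one factor of $n$ too many. The savings come from \emph{pipelining} all batches concurrently: across the whole run, each node must merely receive $B \cdot O(\memory) = O(n^2/\memory^{1/2})$ edges, for an aggregate network demand of $O(n^3/\memory^{1/2})$ messages. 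Using the expander-decomposition routing framework of~\cite{ChangPSZ21}, this demand is serviced in $n^{1+o(1)}/\memory^{1/2}$ rounds, matching the claimed bound.

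\textbf{Main obstacle.} The principal technical subtlety is executing this pipeline while honoring the $\memory$ memory bound at \emph{every} node and \emph{every} time step, not just at auditors. I would resolve this via flow-control: each intermediate forwarder keeps at most $O(\memory)$ words in transit, and each auditor is not scheduled its next batch until it has enumerated, emitted, and flushed its current one. Ensuring that this back-pressure does not idle the network, and showing that the expander-decomposition-based routing can be sliced into $O(\memory)$-sized windows without losing its $n^{o(1)}$ overhead, is the main work; both follow by carefully adapting the standard analysis, since the congestion per edge across the run is only an $n^{o(1)}/\memory^{1/2}$ multiple of the native bandwidth, which each forwarder can serve in a streaming fashion without ever holding more than $O(\memory)$ pending messages.
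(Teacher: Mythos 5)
Your high-level plan --- partition nodes into blocks of size $\Theta(\memory^{1/2})$, assign triples of blocks to auditors who each hold $O(\memory)$ edges, then pipeline the batches over expander routing --- captures the right intuition, and it's essentially the \clique-style decomposition from the paper's warm-up (Theorem~4). However, there are substantive gaps when transferring this to arbitrary graphs, and the ``flow control'' step papers over the hardest part of the paper's argument.

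First, your block partitioning and auditor assignment is described globally over $V$, but expander routing only functions within a single high-conductance cluster. The paper first computes a strong expander decomposition, then does the block partitioning \emph{per cluster} (only over the cluster's vertices and the external vertices $V_i'$ incident to it), and recurses on the inter-cluster edges; triangles with edges from two or three different clusters are handled in later recursive levels. Without that per-cluster structure, the routing you invoke has no high-conductance graph to run on. Second, your assignment of ``exactly one triple per node per batch'' is degree-oblivious, but in \switch each node can receive only $\deg(v)$ messages per round; low-degree nodes would be bottlenecks. The paper handles this by assigning $\Theta(k_v\log n)$ triples to each node in proportion to its degree, and further restricts auditors to a single degree class $K_{j^*}$ carrying a $1/\log n$ fraction of the total bandwidth so that all auditors start and finish chunks synchronously --- a synchronization issue you don't address. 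Third, and most importantly, your ``main obstacle'' paragraph asserts that the expander-routing machinery can be sliced into $O(\memory)$-sized windows via back-pressure ``by carefully adapting the standard analysis.'' This is not a streaming/buffering issue: the $\deg(v)\cdot\poly\log n$ (or worse, $\deg(v)\cdot 2^{O(\sqrt{\log n})}$) memory footprint of expander routing comes from \emph{storing the hierarchical embedding of random graphs} used for the routing, not from in-flight messages. Flow control on forwarders does not reduce it. The paper needs an entirely separate, nontrivial round--space tradeoff (Appendix~A, Lemma~12): sample only a $1/\alpha$ fraction of virtual nodes per round, route among those, repeat $O(\alpha^2\log n)$ times, which buys a factor-$\alpha$ memory saving at a factor-$\alpha^2$ round cost. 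A similar tradeoff is needed for the expander decomposition itself (Lemma~13). These two lemmas are the crux of the upper bound; asserting they follow from ``carefully adapting the standard analysis'' without identifying this specific mechanism leaves the central difficulty unresolved. Finally, you don't address high-degree nodes, which in the \congest algorithm simply learn the entire graph but in \switch cannot; the paper defers them to later recursion levels where the residual graph is sparser.
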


Exemplifying our results for a linear memory of $\mu=\Theta(n)$ gives that the complexity of triangle listing in this case is $\Theta(n^{1/2+o(1)})$.


\subsubsection{Simulating Streaming Algorithms on a Single Node} \quad

\paragraph{Background.} A successful approach for handling memory limitations in centralized computations is to use streaming algorithms. This is a setting in which the computational unit receives the input as a stream of bits and processes it along the way, under the restriction that it can only store $M$ words of $O(\log n)$ bits in memory at any given time. In the following context we further assume that the input data only consists of edges (and their respective properties) of the graph, and we call this the edge-streaming model. 

We study how to \textit{simulate} edge-streaming algorithms in the distributed setting on a graph, using the \switch model. A single-node \switch simulation is defined as an algorithm in which only a single node is responsible for the execution of the streaming algorithm, \ie, it receives the inputs of all nodes as an input stream, executes the \mbox{streaming algorithm, and outputs the result.} 


\paragraph{Our results.} To this end, a na\"ive approach would be to \textit{directly} send all graph information (edges, weights, etc.) to the simulator node. This is clearly the best that can be done in general, yet if the streaming algorithm makes multiple passes over the stream, say $p\geq 1$ times, then repeating this procedure incurs an expensive $O(n\cdot\Delta\cdot p)$ complexity. We show that one can do better when simulating $p$-pass streaming algorithms, by observing that the neighbors of the simulator node collectively have sufficient memory for storing the required information. {We leverage this for speeding up subsequent passes, to obtain the following results.}

\begin{restatable}{theorem}{StreamingSimulation}
\label{thm:streamingSimulation}
For any $p$-pass edge-streaming algorithm that uses $M$ memory, there exists a single-node \switch simulation within $O(n \cdot (\Delta + p))$ rounds using $\mu =M+n$ memory.
\end{restatable}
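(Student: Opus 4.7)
The plan is to designate a simulator node $s$ of maximum degree (so $\deg(s) = \Delta$) and build a BFS tree $T$ rooted at $s$ in $O(D)\le O(n)$ preparatory rounds. Every non-simulator node simply keeps its own adjacency list, which fits in $\mu$ since $\Delta\le n\le \mu$. The high-level strategy is to pay a one-time cost of $O(n\Delta)$ during pass~$1$ to simultaneously (i) execute the first streaming pass at $s$ and (ii) distribute a cached copy of the entire edge stream across $s$'s neighbors, so that each of the remaining $p-1$ passes can be replayed in $O(n)$ rounds straight from this local cache.

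For pass~$1$, I would pipeline all $m\le n\Delta/2$ edges up $T$ to $s$ using standard pipelined convergecast, which completes in $O(m+D)=O(n\Delta)$ rounds. As each edge arrives, $s$ performs two actions in the same round: it feeds the edge into pass~$1$ of the streaming algorithm, and it forwards the edge out along one of its incident edges, assigning edges round-robin to its $\Delta$ neighbors so that each neighbor caches roughly $m/\Delta\le n/2$ edges. Because $s$ has bandwidth $\Delta$ on both incoming and outgoing edges per round, the relay reuses the same schedule at no asymptotic overhead, and the cache at every neighbor fits inside its $n+M$ memory budget; $s$ itself uses $M$ words for streaming state plus $O(\Delta)=O(n)$ words for its adjacency list and a transient relay buffer.

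For each of the remaining $p-1$ passes, every neighbor of $s$ streams its cached edges directly to $s$, one per round and in the same order in which they were cached; with $\Delta$ parallel senders and $O(n)$ edges each, one pass costs $O(n)$ rounds, giving $O(np)$ in total. Summing the two phases yields the claimed $O(n(\Delta+p))$ bound. The main obstacle I expect to handle carefully is buffer control during the pass-$1$ convergecast, since an internal tree node that naively forwards messages can accumulate up to $\Theta(m_v)$ edges in flight, which may exceed the $n+M$ budget near the root. I would address this by rate-limiting each child-to-parent link to match its parent's emission rate toward $s$, which keeps every internal buffer at $O(1)$ in steady state while preserving the overall $O(m+D)$ end-to-end schedule, and by piggybacking the round-robin relay so that no edge lingers in $s$'s buffer for more than $O(1)$ rounds.
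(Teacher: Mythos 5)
Your construction matches the paper's proof: both designate the maximum-degree node as the simulator, route all $m\le n\Delta$ edges to it once in $O(n\Delta)$ rounds, cache them across its $\Delta$ neighbors so that each holds $O(n)$ edges (which fits in $\mu=M+n$), and replay from this cache for the remaining $p-1$ passes at $O(n)$ rounds per pass, giving $O(n(\Delta+p))$ total. You are more careful than the paper's one-paragraph argument about buffer control during the pass-$1$ convergecast, and your rate-limiting fix keeping intermediate buffers $O(1)$ while preserving the $O(m+D)$ schedule is sound.
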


We further show that unless the memory $\memory$ is super-linear, one cannot do better than our above approach, as follows.
\begin{restatable}{theorem}{StreamingSimulationLB}
\label{thm:streaminSimulationLB}
For $\mu \le n/4$, any \switch algorithm that simulates a $p$-pass edge-streaming algorithm on one node requires $\Omega(n\cdot \Delta \cdot p)$ rounds.
\end{restatable}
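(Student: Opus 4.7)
\medskip

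\noindent\textbf{Proof plan.} The plan is to construct a hard instance in the low-memory regime $\mu \le n/4$ on which any single-node simulation must pay $\Omega(n\Delta p)$ rounds. The intuition is that, when the simulator's memory is much smaller than the number of edges in the graph, it cannot cache an entire pass's worth of input across pass boundaries and must therefore re-gather the stream from the network in each pass, while simultaneously being bandwidth-constrained at its neighborhood. Concretely, I would choose an input graph $G$ with maximum degree $\Delta$ and $m = \Theta(n\Delta)$ edges, but with the designated simulator $v$ placed at a degree-$1$ position: attach $v$ by a single edge to an essentially $\Delta$-regular subgraph on the remaining $n-1$ vertices. This pins the bandwidth into $v$ to $O(1)$ words per round, regardless of how the network routes internally.

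\medskip

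\noindent For the streaming algorithm, I would take a $p$-pass algorithm whose passes are genuinely sequential and input-intensive -- for instance, a $p$-level pointer-chasing-style computation encoded over the edge stream, or a $p$-pass linear-sketch algorithm in which the seed used in pass $i+1$ is derived from the sketch produced in pass $i$, so that pass $i+1$'s output cannot be obtained from cached state together with only a sub-linear number of additional edge observations. Then, by a standard indistinguishability (or counting) argument over all inputs consistent with the simulator's cached $\mu \le n/4$ words of state, correctly executing pass $i+1$ requires $\Omega(m - \mu) = \Omega(n\Delta)$ fresh words to be delivered to $v$ during that pass. Combined with the $O(1)$-word-per-round bandwidth into $v$, this yields $\Omega(n\Delta)$ rounds per pass and, summed over the $p$ passes, the claimed $\Omega(n\Delta p)$ round lower bound.

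\medskip

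\noindent The main technical obstacle I anticipate is formalizing the per-pass step: because the remaining $n-1$ nodes collectively hold the entire graph and can precompute and relay information, one must rule out schemes in which the simulator offloads work to them or amortizes information across passes. I plan to handle this via a communication-complexity reduction across the unique size-$1$ cut incident to $v$: any \switch simulation induces a two-party protocol between $v$ and the rest of the network whose total communication is upper-bounded by the round count times the cut capacity. Instantiating the stream on the far side of the cut with a problem whose $p$-round communication complexity with $\mu$-bit memory is $\Omega(pm) = \Omega(pn\Delta)$ (e.g., $p$-pass pointer chasing, or an augmented-indexing variant tuned to force a pass-by-pass lower bound) then converts the protocol lower bound into the desired round-complexity lower bound in \switch.
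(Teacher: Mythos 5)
Your core intuition—that small memory near the simulator forces per-pass re-transmission of the edge stream across a bandwidth-constrained cut—is exactly right, and it is the same intuition the paper uses. However, your concrete construction has a gap, and your reduction machinery is heavier than what the problem actually requires.

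The gap: you place the ``designated simulator'' $v$ at a degree-$1$ position, but the choice of simulator node is part of the algorithm, not the adversary (indeed, the matching upper bound in \cref{thm:streamingSimulation} explicitly picks the highest-degree node). In your graph, only one node has degree $1$; an algorithm that instead designates one of the degree-$\Delta$ nodes in the regular part as the simulator is not bottlenecked by your size-$1$ cut, and your argument says nothing about it. To make the construction work you need \emph{every} node to be bandwidth-isolated, which is precisely what the paper's cycle-of-cliques graph achieves: each node sits inside a clique $C_i$ of $\Delta-1$ nodes whose total memory $\mu(\Delta-1)<n\Delta/4$ is smaller than $m=\Theta(n\Delta)$, and only two edges leave $C_i$, so whichever node is chosen as simulator, $\Omega(n\Delta)$ edges must cross a $2$-edge cut in each of the $p$ passes.

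The over-engineering: you propose to instantiate a pointer-chasing or augmented-indexing stream and invoke multi-pass communication-complexity lower bounds. This would be necessary if ``simulation'' only meant \emph{producing the same output}, but the paper defines single-node simulation as the simulator receiving the inputs as a stream and executing the streaming algorithm on them (Section~\ref{sec:streaming}). Under that definition, \emph{any} $p$-pass edge-streaming algorithm forces the simulator to receive all $\Theta(n\Delta)$ edges in each pass, so the lower bound reduces to a purely information-flow counting argument: there is not enough memory near the simulator to cache the stream, and not enough cut bandwidth to refetch it quickly. No indistinguishability argument, no per-problem streaming lower bound, and no two-party protocol are needed. If you did want a stronger statement—that even an algorithm producing only the streaming algorithm's output must pay $\Omega(n\Delta p)$—then something like your pointer-chasing reduction would be the right tool, but that would also require fixing the construction as above.
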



\paragraph{Applications.} Our simulation result is applicable for a wide range of semi-streaming algorithms, where any $p=\omega(1)$-pass algorithm benefits from our ``edge caching'' method.
Here, we give a couple of examples.
The work of Liu et al.~\cite{2020arXiv200906106L} shows that maximum weight matching can be solved in $\widetilde O(\sqrt m)$ passes and $\widetilde O(n)$ space. By simulating this algorithm, we get that it admits a $\widetilde O(n\cdot (\Delta+\sqrt m))$-round \switch algorithm for $\mu=n\ \mbox{polylog}\ n$.
%

~\\It is well-known that in many cases, the order of the stream elements directly affects the guarantees of streaming algorithms. 
In particular, random-order streams play an important role in this aspect. However, in some cases it could be less reasonable to assume that the order of the stream is random. Interestingly, in the \switch model, we show that we can \emph{generate} a random order stream for the simulator node, within the same round complexity as our \mbox{general simulation result, as stated next.}

\begin{restatable}{theorem}{StreamingRandom}
\label{thm:random-order}
    For any $p$-pass \emph{random-order} edge-streaming algorithm that uses $M$ memory, there exists a 
    method that simulates it within $O(n \cdot (\Delta + p))$ rounds using $\mu = M+n+\Delta^2$ memory.
\end{restatable}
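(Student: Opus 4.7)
The plan is to extend the caching-based simulation of Theorem~\ref{thm:streamingSimulation} by folding a randomized delivery schedule into the first pass, so that reading one cached edge per neighbor per round in a fixed bucket order reproduces a truly uniformly random permutation of the edges. The round budget $O(n(\Delta+p))$ is preserved.

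Let $u_1,\dots,u_\Delta$ be the neighbors of $s$ and set $T=\lceil m/\Delta\rceil$. During the first pass, each edge $e$ (generated by its lower-ID endpoint, using fresh private randomness) proposes a slot $(t_e,b_e)\in[T]\times[\Delta]$, with collisions resolved so that the final edge-to-slot map is a uniformly random injection. The edge is then routed toward $s$ and cached at the prescribed neighbor $u_{b_e}$ rather than an arbitrary one; the routing cost remains $O(n\Delta)$ as in Theorem~\ref{thm:streamingSimulation}. Because each $u_i$ may simultaneously receive edges from up to $\Delta$ of its graph-neighbors, each forwarding up to $\Delta$ of its own edges, the transient buffer at $u_i$ can grow to $\Theta(\Delta^2)$, which is the source of the extra $\Delta^2$ term in $\mu$; the long-term cache size $|C_i|$ is $O(m/\Delta)=O(n)$ w.h.p.\ by concentration, accounting for the $+n$ term. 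In every one of the $p$ passes, each $u_i$ iterates over its cached edges in order of $t_e$ and sends one per round to $s$, which feeds the (up to) $\Delta$ simultaneously received edges to the streaming algorithm in bucket-index order. The batch observed in round $t$ is exactly the set of edges at slots $(t,1),\dots,(t,\Delta)$, so the concatenated stream is $\sigma(1),\dots,\sigma(m)$ for the uniformly random permutation $\sigma$ defined by the slot map. Each pass costs $T=O(n)$ rounds, for a total of $O(n(\Delta+p))$.

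The main technical obstacle will be the distributed implementation of the collision resolution in a way that preserves uniformity: when two edges initially sample the same $(t,b)$, one of them must re-sample, and if this moves it to a different bucket it must be re-routed through $s$. The delicate step is bounding the number and bandwidth of these re-routings so that they fit inside the $\Theta(\Delta^2)$ transient buffer at each $u_i$ and inside the $O(n\Delta)$ first-pass budget, while guaranteeing that the final slot map is a uniformly random injection and that the per-pass cost stays at $O(n)$.
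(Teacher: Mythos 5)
Your proposal is not a complete proof and, more importantly, it misses the central algorithmic obstacle that the paper's proof is built around, namely congestion during the re-routing step. You propose that each edge independently samples a slot $(t_e,b_e)$, and you leave the collision-resolution step explicitly unresolved (``the main technical obstacle will be...''). But even granting a uniform injective slot map, the harder problem is the one you gloss over when you write that ``the routing cost remains $O(n\Delta)$.'' After the first pass the edges sit at arbitrary neighbors of the simulator $s$ (that is how the Theorem~\ref{thm:streamingSimulation} caching works), and moving each edge to its designated bucket $u_{b_e}$ must go through $s$. A careless schedule can funnel many edges toward the same target neighbor in the same round while other outgoing ports of $s$ sit idle, so the naive bound of $m/\Delta=O(n)$ rounds for this redistribution does not hold in general. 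The paper resolves exactly this: it observes that after scaling, the inter-bucket traffic matrix $B$ (with $B[i][j]$ the number of edges to move from bucket $j$ to bucket $i$) is $n$ times a doubly stochastic matrix, decomposes it via Birkhoff's theorem into a sum of $n$ permutation matrices, and schedules one perfect matching per round. This is where the $O(\Delta^2)$ memory term really comes from in the paper --- storing and manipulating $B$ and computing a matching at $s$ --- not from transient receive buffers at the neighbors, which is your (unjustified) accounting.

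There is also a structural difference that removes one of your open problems. Rather than having each edge independently propose a slot and then repair collisions, the paper runs a ``bucketized'' Fisher--Yates shuffle centrally at $s$: for $i=0,\dots,n\Delta-1$, $s$ samples only which \emph{bucket} $A'[i]$ comes from (with probability proportional to the bucket's remaining count), records the cross-bucket transfer counts in $B$, and delegates the choice of which specific edge within a bucket moves to the neighbor holding that bucket. A final intra-batch shuffle of each group of $\Delta$ elements at $s$ decouples the stream order from the (arbitrary) order in which edges arrive in a round. Because the counts $a_k$ are maintained exactly, there are no collisions to resolve and the resulting permutation is uniformly random by construction. So the paper never needs concentration bounds for the cache sizes $|C_i|$: by padding to exactly $n$ edges per neighbor and moving counts rather than resampling, the bucket sizes are exact, not ``$O(n)$ w.h.p.'' If you want to salvage your route, you would need to (i) specify a concrete collision-resolution protocol and prove it yields a uniform injection within the round budget, and (ii) prove a congestion-free (or low-congestion) schedule for the re-routing step; the second of these is essentially the Birkhoff argument in disguise.
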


\subsubsection{Mergeable Streaming Simulations}\quad

\paragraph{Background.} Finally, we consider the case where the simulated streaming algorithm adheres to certain \emph{mergeability}~\cite{agarwal2013mergeable} properties. With such properties, we could leverage the parallel nature of the \switch model in order to execute a simulation of streaming algorithms throughout the entire network, rather than only on a single simulator node. This way, we can further reduce the round complexity. In what follows, we start by having $t_v$ words of information in each node $v$, and the goal is to compute some summary of all the information in the graph. 

The above is achieved by computing intermediate summaries of the data and then \emph{merging} them. We consider three notions of mergeability~\cite{agarwal2013mergeable}, and provide an intuitive explanation for each below (defined formally in Section~\ref{sec: mergeable streaming}).
\begin{enumerate}
    \item One way mergeability - It is possible to merge all summaries into one main summary (e.g., $S_1$ can be merged with $S_2$ to get $S'$ and then $S'$ can be merged with $S_3$ and so on).
    \item Full mergeability - the summaries support arbitrary merging patterns (e.g., $S_1$ can be merged with $S_2$ and $S_3$ with $S_4$ before merging the two results).
    \item Composable - it is possible to merge multiple $M$-sized summaries into a single $M$-sized summary in a streaming fashion without requiring more than $M$ space.
\end{enumerate}

\paragraph{Our results.} We state the following theorems and applications.

\begin{restatable}{theorem}{oneway}
\label{thm:oneway}
    The deterministic round complexity of simulating a one-way mergeable streaming algorithm in \switch is $O\parentheses{\min \set{n\cdot M,\sqrt{|\mathcal{I}|\cdot M}} + D}$, where $M$ is the size of the summary, $|\mathcal{I}|=\sum_{v\in V} t_v$, $D$ is the diameter of the graph, and $\mu = \Omega(\Delta + M)$.
\end{restatable}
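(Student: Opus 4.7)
The plan is to use a BFS-tree-based divide-and-conquer that interpolates between two regimes: either each node summarizes only its own raw data and we merge $n$ summaries sequentially (giving $O(nM+D)$), or we group nodes into balanced subtrees of total raw weight roughly $\sqrt{|\mathcal{I}|M}$, compute one summary per subtree, and merge only $\sqrt{|\mathcal{I}|/M}$ summaries (giving $O(\sqrt{|\mathcal{I}|M}+D)$). Setting $k=\min\{n,\lceil\sqrt{|\mathcal{I}|/M}\rceil\}$ and $s=|\mathcal{I}|/k$ automatically picks the cheaper regime.

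First, I would build a BFS tree rooted at an arbitrary node $r$ in $O(D)$ rounds, convergecast $|\mathcal{I}|=\sum_v t_v$ to $r$, and broadcast the chosen $k$ and $s$, also in $O(D)$ rounds. Next, partition the BFS tree into $k$ connected subtrees whose weights are each $\Theta(s)$: first convergecast subtree-weights to $r$, then perform a downward pass that, following an Euler-tour ordering, commits a fresh subtree root whenever a running prefix sum of raw weights crosses a multiple of $s$; this is a standard weighted tree-partitioning procedure implementable in $O(D)$ rounds. Inside each subtree $T_i$, pipeline its raw data up to the designated subtree root $r_i$; by the standard upcast lemma this takes $O(s_i+d_i)=O(s+D)$ rounds and the subtrees run in parallel. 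Each $r_i$ then applies the streaming algorithm locally to its aggregated raw data to produce a summary $S_i$ of $M$ words. When $k=n$, each subtree is a singleton and this aggregation phase is free.

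Finally, convergecast the $k$ summaries to $r$ via pipelined routing on the BFS tree: sending $k$ tokens to a single root of a tree of diameter $D$ takes $O(k+D)$ rounds, so the $kM$ words take $O(kM+D)$ rounds. As summaries arrive, $r$ folds each one into its growing main summary in the order received, which is valid by one-way mergeability. Summing the three phases gives $O(s+kM+D)$, which simplifies to $O(\min\{nM,\sqrt{|\mathcal{I}|M}\}+D)$ by the choice of $k$. Per-node memory use is $O(\Delta+M)$ beyond the input, namely one pipeline slot per neighbor plus at most one stored summary, which fits within $\mu=\Omega(\Delta+M)$.

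The hardest step will be the distributed tree partitioning: producing $O(k)$ connected subtrees whose weights are simultaneously within a constant factor of $s$ and whose roots are identified globally within $O(D)$ rounds. I plan to handle this with the convergecast-then-downcast scheme above, using the fact that once subtree-weights are known at $r$, a single downward pass that commits cuts at prefix-sum thresholds produces subtrees of weight between $s$ and $2s$ (plus one remainder subtree of weight at most $s$). A secondary subtlety is ensuring the convergecast of $kM$ words never forces an intermediate node to buffer more than $O(\Delta)$ in-flight words, which follows from pipelining one word per edge per round in the standard convergecast schedule.
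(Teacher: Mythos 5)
Your proposal is correct and follows essentially the same strategy as the paper's proof: build a BFS tree, partition the nodes into $O(\min\{n,\sqrt{|\mathcal I|/M}\})$ clusters each holding $O(\sqrt{|\mathcal I|\cdot M})$ raw items, pipeline each cluster's raw data to a designated cluster leader, and then convergecast the $M$-sized summaries to a single global leader that folds them in sequentially via one-way mergeability. The only substantive difference is the clustering procedure: you cut at prefix-sum thresholds along an Euler tour after a convergecast/downcast of subtree weights, yielding connected subtrees, whereas the paper clusters greedily bottom-up (a leaf or internal node becomes its own singleton cluster if $t_v>s$, and otherwise sibling subtrees are greedily bundled by their parent into groups of weight $[s,3s]$), allowing clusters that are disconnected but that communicate through the common parent. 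Both variants give the same pipelining bounds and the same $O(\min\{nM,\sqrt{|\mathcal I| M}\}+D)$ total, and both have to handle the same boundary cases (a single node with $t_v\gg s$, and the $k=n$ regime where the intra-cluster aggregation phase must vanish), which you address explicitly. So the two proofs differ only cosmetically in how the partition is computed.
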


\paragraph{Applications.}
Consider having integer labels of length $O(\log n)$ on each edge of the graph and deterministically computing approximate quantiles of the set of labels. By using a quantile sketch such as the GK algorithm~\cite{greenwald2001space}, and leveraging its one-way mergeability~\cite{agarwal2013mergeable}, we can estimate quantiles up to an $m\cdot \epsilon$ (which gives $M=O(\epsilon^{-1}\cdot\log(m\cdot \epsilon))$) additive error using $O\parentheses{\sqrt{m\cdot \epsilon^{-1}\cdot \log(m\epsilon)} + D}$ rounds and $\mu=O(\epsilon^{-1}\log(m\epsilon) + \Delta)$ space.


Next, we consider a stronger notion of mergeability, using which we may hierarchically merge summaries in parallel to accelerate the computation.
\begin{restatable}{theorem}{fullym}
\label{thm:fullym}
    The deterministic round complexity of simulating a fully-mergeable streaming algorithm in \switch is $O\parentheses{\parentheses{\log\min\set{n\cdot M, |\mathcal I|}}\cdot\parentheses{M\cdot \log\frac{\Delta}{\mu / M}+D}}$, where $M$ is the size of the summary, $|\mathcal{I}|=\sum_{v\in V} t_v$, $D$ is the diameter of the graph and $\mu = \Omega(\Delta + M)$.
\end{restatable}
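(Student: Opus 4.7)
The plan is to aggregate the local summaries across the network using a balanced binary merge tree that exploits full mergeability. First, I would elect a root $r$ (e.g., the minimum-ID node) and construct a BFS tree $T$ of depth at most $D$ rooted at $r$ in $O(D)$ rounds. Each node $v$ then computes its local summary $S_v$ of size $M$ from its $t_v$ input words, which is a purely local operation. Since $|\mathcal{I}|$ total input words can be packed into at most $|\mathcal{I}|/M$ meaningful summaries and there are at most $n$ summary-holding nodes, the effective number of initial summaries is at most $\min\{n,|\mathcal{I}|/M\}$, so a balanced binary merge tree $\mathcal{T}$ over them has depth $L = O(\log\min\{nM,|\mathcal{I}|\})$.

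Next, I would execute $\mathcal{T}$ in $L$ phases, where phase $\ell$ performs every merge at level $\ell$ in parallel. Full mergeability is exactly the property that lets sibling merges at the same level be carried out independently and in any order, so the scheme is correct. The target is that a single phase runs in $O(M\log(\Delta/(\mu/M))+D)$ rounds, so that multiplying by $L$ gives the theorem. The $+D$ term comes from routing each summary to its merge partner along $T$: pipelining $M$-word summaries along a single tree path costs $O(M+D)$ rounds, provided no edge carries more than $O(1)$ summaries at once in a given phase. The $M\log(\Delta/(\mu/M))$ term comes from memory-constrained fan-in at a ``merger'' node that must collapse up to $\Delta$ incoming summaries but can hold only $\mu/M$ of them in memory simultaneously. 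I would realize this collapse by a tournament among the relevant neighbors: at each step the live summaries are paired and one of each pair ships its $M$ words to the other, who merges them on the fly; this halves the count and costs $M$ rounds per step. After $\log(\Delta/(\mu/M))$ steps at most $\mu/M$ summaries remain, and the merger can finally absorb and merge them locally.

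The main obstacle I anticipate is proving that the tournament and the BFS routing really compose without exceeding the memory bound $\mu$ at any node and without blowing up the per-phase cost beyond $O(M\log(\Delta/(\mu/M))+D)$. Concretely, I would need a scheduling argument that (i) matches tournament pairs to directly connected nodes so that an $M$-word exchange really uses $M$ rounds on a single edge, (ii) ensures each node stores at most $O(\mu/M)$ partial summaries at any time, using the slack provided by $\mu = \Omega(\Delta+M)$, and (iii) pipelines the distinct merges happening at the same level through disjoint subtrees of $T$ so that the $D$ term is additive rather than multiplied by the number of parallel merges. Once these invariants hold, each phase costs the sum of the tournament and latency bounds, and summing over the $L = O(\log\min\{nM,|\mathcal{I}|\})$ phases yields the claimed round complexity.
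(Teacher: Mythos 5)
Your proposal uses a flat, abstract balanced binary merge tree over the initial summaries and tries to execute it level by level, but this is where the argument breaks. First, the $M\cdot\log\frac{\Delta}{\mu/M}$ term cannot arise from a binary merge tree: every internal node of such a tree has fan-in exactly $2$, so no node ever has to ``collapse up to $\Delta$ incoming summaries.'' Your later description of a tournament that halves a set of $\Delta$ summaries contradicts the binary-tree structure you set up, and the per-phase target of $O(M\cdot\log\frac{\Delta}{\mu/M}+D)$ rounds is not something a level of a binary tree would cost. Second, and more fundamentally, because the merge tree is abstract (its leaves are arbitrary nodes in the graph), the $2^{L-\ell}$ merges at level $\ell$ can be pairwise ``crossed'' with respect to the BFS tree $T$: for instance, if the graph is a balanced binary tree and the merge tree pairs each leaf of the left subtree with a leaf of the right subtree, then all $\Theta(n)$ routes in the first phase cross the two root edges, yielding $\Omega(nM)$ rounds for that phase rather than $O(M+D)$. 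This is exactly your obstacle (iii), but nothing in the flat construction guarantees the ``disjoint subtrees'' you need; the guarantee must be built into the decomposition itself.

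The paper resolves both issues in one stroke by recursively decomposing the BFS tree $T$ by an information centroid. Starting from the root, it walks down to a node $u$ such that every subtree of $T\setminus\{u\}$ holds at most $|\mathcal I|/2$ information; these (at most $\Delta+1$) subtrees are edge-disjoint, so recursive calls on them proceed in parallel with no congestion, and the recursion depth is $O(\log|\mathcal I|)$ because information halves each level. At the end of each recursive call, the subtree summaries are propagated to the subtree roots (which are all neighbors of $u$), and then the fan-in at $u$, which is at most $\Delta+1$, is collapsed by the pairwise tournament \emph{relayed through $u$}: in each step, matched subtree roots exchange an $M$-word summary through $u$ in $M$ rounds, halving the count, until only $\mu/M$ summaries remain and $u$ can absorb them directly. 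That relay is where the fan-in of $\Delta$ really appears, and the centroid decomposition is what makes the parallelization across subtrees safe. Without some graph-aware decomposition of this kind — or an explicit congestion bound for your abstract merge tree — your obstacles (i) and (iii) are not just technicalities to schedule around; they are where the proof actually lives.
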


\paragraph{Applications.}
Consider having integer labels of length $O(\log n)$ on each edge of the graph and deterministically computing the most frequent labels and their frequencies. By using a heavy hitters sketch algorithm such as the MG sketch~\cite{misra1982finding}, which is known to be fully-mergeable~\cite{agarwal2013mergeable}, we can estimate the frequency of each label to within an $m\cdot \epsilon$ additive error (which gives $M=\epsilon^{-1}$) using $O\parentheses{{\log m\cdot\parentheses{\epsilon^{-1}\cdot \log\parentheses{\frac{\Delta\epsilon^{-1}}{\mu }}+D}}}$ \mbox{rounds and $\mu=O(\Delta + \epsilon^{-1})$ space.}

We can also find all labels that appear at least $m\cdot \epsilon$ times and their \emph{exact} frequencies in the same round complexity. To that end, we first use the above algorithm to compute the estimated frequencies within an $\epsilon/3$ additive error. Next, we consider the set of labels whose estimated frequencies are at least $2\epsilon/3$. Notice that we have at most $3/\epsilon$ such labels and that the target set of labels is among these. Then, we count the frequencies of these labels exactly by propagating their count up a BFS tree. Note that this requires just $O(\epsilon^{-1}+D)$ rounds and $O(\Delta+\epsilon^{-1})$ space.

Finally, we consider the case where the algorithm is also composable, which allows a streaming aggregation of summaries to facilitate a further speedup.
\begin{restatable}{theorem}{composable}
\label{thm:composable}
    The deterministic round complexity of simulating a fully-mergeable and composable streaming algorithm in \switch is $O\parentheses{\parentheses{\log\min\set{n\cdot M, |\mathcal I|}}\cdot\parentheses{M+D}}$, where $M$ is the size of the summary, $|\mathcal{I}|=\sum_{v\in V} t_v$ and $\mu = \Omega(\Delta + M)$.
\end{restatable}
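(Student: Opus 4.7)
The plan is to adapt the hierarchical merging simulation behind Theorem~\ref{thm:fullym}, using composability to eliminate the $M\log(\Delta/(\mu/M))$ factor from each merging phase. The outer framework is unchanged: build a BFS tree of depth $D$ in $O(D)$ rounds, let each node $v$ apply the underlying streaming algorithm to its $t_v$-word local input to obtain an initial $M$-word summary (or skip this step and work with raw items whenever $|\mathcal I|<nM$), and then execute $O(\log\min\{nM,|\mathcal I|\})$ merging phases that halve the number of active summaries until a single summary remains at the root.

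The savings arise in the implementation of a single phase. In the fully-mergeable algorithm, a node aggregating up to $\Delta$ incoming summaries can only hold $\mu/M$ of them at once, so it must organise them into $\log_{\mu/M}\Delta$ local batches and pay $M$ rounds per level of batching. Composability removes this bottleneck: each node maintains a single $M$-word accumulator $S$ and, as summaries become available, absorbs each one into $S$ via the streaming merge operator, immediately discarding the absorbed copy. No node ever holds more than $O(M)$ words from the streaming computation together with $O(\Delta)$ per-edge reception state, which fits within the $\mu=\Omega(\Delta+M)$ memory budget. A phase thus reduces to a single pipelined $M$-word upcast along the BFS tree, taking $O(M+D)$ rounds; multiplying by the $O(\log\min\{nM,|\mathcal I|\})$ number of phases yields the claimed bound.

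The main technical obstacle is handling concurrent arrivals at a node without materialising $\Theta(\Delta\cdot M)$ simultaneous buffers, which would violate the memory bound. I would resolve this by coordinating the phase so that a node's children release their summaries in a fixed order, set once at the start of the phase, and the node absorbs each summary fully into $S$ before the next begins, while in parallel it pipelines the outbound summary toward its own parent along the BFS tree. The order-independence provided by composability ensures correctness regardless of schedule, and pipelining across the depth of the BFS tree preserves the $O(M+D)$ per-phase cost. Combining this with the $O(\log\min\{nM,|\mathcal I|\})$ phase count completes the proof.
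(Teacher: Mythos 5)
Your identification of composability as the mechanism to remove the $\log(\Delta/(\mu/M))$ batching factor is right, and the outer centroid-recursion framework (inherited from Theorem~\ref{thm:fullym}) is correct. But the concrete implementation of the single merge step is wrong, and the error is exactly where the theorem's speedup must come from.

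You propose that the merging node maintains one $M$-word accumulator $S$ and "absorbs each summary fully into $S$ before the next begins," with children releasing summaries in a fixed sequential order. That is just repeated application of the fully-mergeable operator -- it never invokes the extra structure that composability grants. More importantly, a centroid node $u$ may have up to $\Delta$ subtrees hanging off it, and if their roots transmit their $M$-word summaries one after the other, $u$ spends $\Theta(\Delta\cdot M)$ rounds absorbing them. That is not $O(M+D)$; in fact it is worse than the $O(M\log(\Delta/(\mu/M)))$ bound of Theorem~\ref{thm:fullym} that you are trying to beat. The "pipeline the outbound summary in parallel" remark does not rescue this either: a general mergeable summary is not determined word-by-word as inputs arrive, so $u$ cannot start forwarding the merged result until all $\ell$ inbound summaries have been absorbed.

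The fix is to use composability as actually defined in the paper. Composability says the merge operator $A_1$ processes the words \emph{positionally across all $\ell$ incoming summaries simultaneously}: in step $i$ it consumes the multiset $\{w_i^j : j\in[\ell]\}$ of $i$-th words from all $\ell$ summaries, using only $M$ working memory. Distributedly, all subtree roots (which are neighbors of the centroid $u$ in the BFS tree) send their $i$-th word to $u$ in round $i$. In that round $u$ receives $\ell\le\Delta$ words, which it can buffer with $O(\Delta)$ memory and feed into $A_1$ together with its $O(M)$-word accumulator, staying within $\mu=\Omega(\Delta+M)$. After $M$ such rounds the merge at $u$ is complete. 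Adding the $O(M+D)$ cost of propagating each subtree's summary to its root gives the $O(M+D)$ per-recursion-step cost, and multiplying by the $O(\log\min\{nM,|\mathcal I|\})$ recursion depth yields the claimed bound.
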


\paragraph{Applications.}
All linear sketches are composable in a straightforward manner. While most such solutions are randomized, there are also deterministic sketches such as CR-Precis~\cite{ganguly2007cr}. This sketch allows one to estimate a variety of metrics over the underlying data.
For example, for parameters $\epsilon\in(0,1),\alpha>1$, using the CR-Precis sketch we can produce an estimator $\widehat H$, of the entropy $H$ of the edge label distribution, that satisfies 
\begin{equation*}
    \frac{H(1-\epsilon)}{\alpha} \le \widehat H \le \parentheses{1+\frac{\epsilon}{\sqrt{\log m}}}\cdot H,
\end{equation*}
using a round complexity of $O\parentheses{\log m\cdot\parentheses{M+D}}$ and memory $\mu=O(M+\Delta)$, where
\begin{equation*}
M = \log^2 m \cdot \epsilon^{-4} \cdot m^{2/\alpha}\cdot \log(m\epsilon^{-1}).
\end{equation*}

\subsection{Technical Overview}
\label{subsec:overview}

\subsubsection{Clique listing}\quad

\paragraph{Lower bounds for clique listing (\cref{thm: lb-listing}).} Here we overview the challenge and approach for the case of $k=3$, but the proof applies to all constant values of $k$. Essentially, we would like to follow the outline of the lower bound of~\cite{IzumiLG17}, which very roughly speaking works as follows. It first bounds from below by $\widetilde\Omega(n^{4/3})$ bits the worst-case amount of information that a node needs to receive throughout a triangle-listing algorithm in a random graph in order to output sufficiently many triangles. This is because there are $\Theta(n^3)$ triangles and so some node needs to output $\Omega(n^2)$ triangles, for which it needs $\widetilde\Omega(n^{4/3})$ bits of information. The argument then concludes that since a node can only receive $\widetilde O (n)$ bits per round, the complexity of the algorithm must be $\widetilde\Omega(n^{1/3})$. 

When we bound the memory at each node, we need to adjust the above outline to analyze the number of triangles that can be output by a node per round (with $\memory$ memory), rather than throughout the entire algorithm. To this end, we would like to say that if an algorithm completes in $L$ rounds, then for the node that needs to output $\Omega(n^2)$ triangles there needs to be a round in which it outputs $\Omega(n^2/L)$ triangles, and so because of the amount of information it has to store for this, we obtain a lower bound on $L$ in terms of $n$ and $\mu$. This comes with a caveat: if we change our viewpoint to consider \emph{only} the communication overhead that arises from the bound on the local memory then we miss the cost of communication for obtaining the information at the node. A bit more precisely, if the memory is larger than the incoming bandwidth, i.e., if roughly $\mu\geq n$, then an approach that only looks at the output ``per round'' does not take into account the number of rounds required for \emph{collecting} $\memory$ memory in the first place. Specifically, such an approach loses a factor of $\memory/n$ in the lower bound it obtains.

We overcome this challenge by considering a slightly modified setup. In this new setup, the memory size is $3\memory$, but a node is only allowed to output triangles at the end of every (non-overlapping) window of $\memory/n$ rounds. We prove that (i) our claimed lower bound holds for this setting without losing the $\memory/n$ factor because we provide this window size for collecting $\memory$ information, and that (ii) any algorithm in the original setup, with $\memory$ memory and no restriction on when a node can output triangles, can be simulated in the new setup with no overhead. These yield our statement.


\paragraph{Expander decomposition and routing (elaborated in \Cref{sect:expander-appendix})} To show an optimal algorithm for triangle listing in \switch, we first adapt existing expander decomposition and routing algorithms from \congest to \switch. Roughly speaking, an \emph{expander decomposition} of a graph removes a small fraction of the edges such that each remaining connected component induces a high-conductance graph, and an \emph{expander routing} algorithm allows each node $v$ in a high-conductance graph to efficiently communicate with arbitrary $\deg(v)$ nodes and not just the local neighbors of $v$.

The all-to-all communication of expander routing allows us to emulate \clique in a high-conductance graph in \congest. Based on this idea, it was shown that listing all $k$-node subgraphs can be done in $\widetilde{O}(n^{1-2/k})$ rounds with high probability~\cite{ChangPSZ21,ChangPZ18} and deterministically~\cite{chang2024deterministic} in any graph of conductance $\phi=1/\poly \log n$ in the \congest model, matching the tight bound $\widetilde{\Theta}(n^{1-2/k})$  for $k$-clique listing that holds even in the $\clique$ model. After a sequence of works~\cite{CensorCLL21,Censor+PODC20,ChangPZ18,changS19,Eden+DISC19}, it was shown, via expander decompositions, in~\cite{CensorCLL21} that the same upper bound $\widetilde{O}(n^{1-2/k})$ not only applies to high-conductance graphs but also  applies to arbitrary graphs in the \congest model.

As we will later explain, all existing expander decomposition~\cite{ChangS20,ChangPSZ21} and routing~\cite{ChangS20,GhaffariKS17,GhaffariL2018} algorithms require at least $\mu = \deg(v) \cdot \poly \log (n)$ space per node. In order to design a triangle listing algorithm in \switch with a memory bound $\mu$ that is as small as $\Delta$, it is necessary that we modify the existing expander decomposition and routing algorithms. We will show that some existing expander decomposition and routing algorithms can be modified in such a way that yields a \emph{smooth round-space tradeoff}, which allows us to achieve $\mu = \deg(v)$ \emph{and even} $\mu = o(\deg(v))$, at the cost of slightly increasing the round complexity.

We first consider expander routing. Roughly speaking, the existing expander routing algorithms require a large space per node because these algorithms have a hierarchical structure wherein each layer has lots of routing paths needed to be stored. 
We explain the idea of our round-space tradeoff, as follows.
First of all, we will emulate a constant-degree graph in the underlying high-conductance graph by letting each node $v$ simulate $\deg(v)$ virtual nodes, and then, to save memory, our idea is to solve expander routing only for a random $1/\alpha$ fraction of the virtual nodes.
If we can do this, then repeating the above for $O(\alpha^2 \log n)$ times solves the original routing problem. The reason is that for each pair $(u,v)$ of virtual nodes such that we want to send a message from $u$ to $v$, the probability that both $u$ and $v$ are selected is $1/\alpha^2$, so repeating for $O(\alpha^2 \log n)$ times ensure that all pairs are covered.
Note that when we repeat, we forget about the routing structure we built in the previous iterations. This allows us to save space.

In order to run the existing routing algorithm on the sampled virtual nodes, we will  
 embed an $O(\log n)$-degree random graph to the selected virtual nodes
 by random walk simulation. We will show that space per node $v$ in the original graph $G$ needed to store the embedding is  $\lceil\deg(v)/\alpha \rceil \cdot O(\tmix \log n)$ with high probability,\footnote{$\tmix$ is roughly the time for a lazy random walk to reach the stationary distribution. See \Cref{sect:expander-appendix}.} so this approach allows us to save roughly a factor of $\alpha$ in memory usage at the cost of increasing the round complexity by a factor of roughly $\alpha^2$.

Next, we consider expander decomposition. Roughly, there are two different approaches in distributed expander decomposition algorithms: the nibble algorithm of Spielman and Teng~\cite{spielman2004nearly} and cut-matching games~\cite{khandekar2007cut,KRV09,RST14}. 
We will show that the distributed algorithms~\cite{ChangPSZ21,ChangPZ18,changS19} based on the nibble algorithm can be modified to yield a round-space tradeoff.

The nibble algorithm calculates the probability distribution of a lazy random walk of length $t$ starting from a node $v$ with threshold $\epsilon'$. In the calculation, probabilities below $\epsilon'$ are truncated to zero. The algorithm then ranks all the nodes by their normalized probabilities, which are random walk probabilities divided by node degrees. If the graph has a sparse cut, then for some choices of $v$, $t$, and $j$, the top $j$ nodes in the ranking form a sparse cut, as shown in~\cite{spielman2004nearly}.


The distributed algorithm in~\cite{ChangPSZ21,changS19} runs multiple instances of the nibble algorithm in parallel in such a way that each node participates in at most $O(\log n)$ instances, meaning that each node needs to maintain  $O(\log n)$ truncated random walk probabilities at any time, so during one step of the random walk probability calculation, a node $v \in V$ may receive up to $\deg(v) \cdot O(\log n)$ random walk probabilities from its neighbors, so $v$ needs to allocate a space of size  $\deg(v) \cdot O(\log n)$ to store and \mbox{process these incoming messages.}


To improve the above space complexity, we will use a strategy that is similar to the round-space tradeoff for expander routing discussed above. We will use $\alpha$ rounds to simulate one round of the distributed algorithm above in such a way that in each round each node only needs to handle incoming messages from roughly a $\frac{1}{\alpha}$ fraction of the instances of the nibble algorithm. This allows us to save roughly a factor of $\alpha$ in memory usage at the cost of increasing the round complexity by a factor of $\alpha$.

\paragraph{Upper bounds for triangle listing.} 
We show an optimal algorithm (up to sub-polynomial factors) for listing all triangles, i.e., cliques of size three, in \switch. As an overview, the algorithm of \cite{ChangPSZ21} for listing triangles in \congest works as such: compute an expander decomposition of the graph, within each expander they use expander routing to list all the triangles which use the edges of the expander or edges incident to it, then remove all the edges of the expanders from the graph and finally recurse on the edges between the expanders. While our algorithm follows the general outline for listing triangles which was established for \congest in \cite{ChangPSZ21}, the implementation of the step for listing triangles inside the clusters greatly differs due to the memory constraints of \switch. It turns out that many straightforward operations in \congest, which are required by the algorithm of \cite{ChangPSZ21}, are rather complicated to implement in \switch.

As an example, in order to deal with nodes with high degree compared to the average degree in the graph, \cite{ChangPSZ21} simply learn the entire graph in such high degree nodes. In our case, the memory bound might not allow any node to learn the entire graph, even if it has the bandwidth to receive all this information. Therefore, we show that it is possible to defer treatment of such high-degree nodes to later recursive iterations of the algorithm where the graph will become more sparse.

Further, at the heart of the listing algorithm, we are faced with a great challenge: ordering the information sent to each node such that it can perform the listing without storing too much information at once. In the \congest model, the algorithm of \cite{ChangPSZ21} reaches a state whereby every node $v$ desires to send and receive at most $O(\deg(v) \cdot n^{1/3})$ messages in order to list the triangles it is responsible to find. Using expander routing algorithms, 
this is solved directly in $\widetilde{O}(n^{1/3})$ rounds. However, in our case, we cannot afford for all the information to be directly sent to the nodes at once -- rather, we require that each node receives it in chunks of size matching the memory bound $\mu$. Further, we cannot arbitrarily chunk the data into batches of size $\mu$, as we must ensure that the edges of a given triangle are all seen at once (appear at the same batch). 

To overcome this, we begin by splitting the nodes of each cluster in the expander decomposition into $\log(n)$ buckets according to their degrees, i.e., to buckets of nodes with degrees $[1, 2)$, $[2, 4)$, $[4, 8)$, etc. Then, out of the $\log(n)$ sets we identify a set with at least $1/\log(n)$ of the total bandwidth in the graph and use only this set for the listing from now on -- this allows us to ensure that the degrees of the nodes performing the listing are very close to each other and thus we can coordinate that all these nodes will start and finish receiving each chunk of $\mu$ data at the same time. Then, in order to ensure that all the nodes which send the data are synchronized, we shuffle around the messages the nodes desire to send such that if a node is required to send more messages than it can, then other nodes in the graph lend their \mbox{bandwidth to \emph{help} this node.}

Overall, we encounter such problems in \switch compared to \congest throughout the entire implementation of the algorithm and devise unique solutions to overcome these scenarios.

\subsubsection{Simulating streaming algorithms}

We now briefly present the key techniques for Theorems~\ref{thm:streamingSimulation}--\ref{thm:composable}.

\textit{Theorems~\ref{thm:streamingSimulation}-\ref{thm:streaminSimulationLB}}. For the upper bound algorithm of \switch streaming simulation (Theorem~\ref{thm:streamingSimulation}), we propose to assign the node with the highest degree as the simulator $v$, and temporarily cache the edge information in the memory of $v$'s neighbors during the first pass. One can see that if $\mu \ge n$, the total memory of $v$'s neighbors is enough to cache all the $m \le n\cdot \Delta$ edges. During the subsequent passes, $v$ receives the cached edges directly from its neighbors to execute its streaming algorithm, reducing the round complexity from $O(n\cdot \Delta\cdot p)$ to $O(n\cdot (\Delta + p))$. 
We also prove that such an efficient simulation cannot be achieved if $\mu \le n / 4$ in the worst case (Theorem~\ref{thm:streaminSimulationLB}). 
We construct a cycle-of-cliques graph and show that the intrinsic ``bandwidth'' bottleneck of a clique (since every clique has only two outbound edges that connect to other cliques) along with the small memory results in an unavoidable $O(n \cdot \Delta)$ congestion rounds per pass.

\textit{Theorem~\ref{thm:random-order}}. In Appendix~\ref{subsec:random-order}, the key idea is to simulate the Fischer-Yates random permutation algorithm~\cite{fisher1953statistical} in the \switch model. With $\Delta$ neighbors and a total of $n\cdot \Delta$ edges, we show that the permutation problem is equivalent to selecting an arbitrary set of edges of size $n$ each time for each neighbor and re-route them to its memory, which essentially simulates Fischer-Yates algorithm. Our algorithm has two steps. First, we simulate the selection algorithm in the distributed settings with $O(n)$ memory. To that end, the simulator node $v$ decides the number of edges originally stored in each neighbor to be selected, without transmitting the actual identity of these edges, and lets the neighbors randomly draw the identity of the edges. The second is that we need to efficiently reroute each edge to its destination neighbor within $O(n)$ rounds, which means that we have to avoid congestion. We frame this rerouting as a matrix decomposition problem, where the matrix stores how many edges are to be transmitted from neighbor $i$ to neighbor $j$, and in each round, we can subtract a permutation matrix from it. By Birkhoff's Theorem, one can construct the matrix decomposition in the simulator node \mbox{$v$ with $O(\Delta^2)$ memory and thus have congestion-free scheduling.}

\textit{Theorems~\ref{thm:oneway}--\ref{thm:composable}}. We partition the data into clusters and run a streaming algorithm to summarize each cluster. The summaries are then merged in various ways, depending on their mergeability characteristics. Namely, for one-way mergeable summaries, we first divide the graph into clusters and compute summaries for every cluster, then all summaries are sent to a single node for merging. For fully-mergeable summaries, the above can be improved via a recursive process that repeatedly partitions the graph to clusters, summarizing them, and merging the results. When the summaries are also composable the merging step in the above can be done in a streaming fashion, resulting in an improved round complexity. 

\paragraph{Streaming algorithms and subgraph statistics.}
It may appear that our results regarding the simulation of streaming algorithms and subgraph listing are unrelated, however, this is not the case: we can use streaming algorithms to generate statistics regarding combinatorial structures in the network.

For example, consider the following setup: edges have colors in the range $\set{1,\dots,c}$ and we execute our triangle listing algorithm. Assume we are interested in computing some statistic regarding the number of monochromatic triangles in the graph (e.g., approximate counting, median, entropy etc.). Aggregating the triangles to a single node cannot be done in the \switch model, however, there are extremely efficient streaming algorithms to approximate these statistics. 

The applications in Section~\ref{subsec:contributions} are presented for edge streaming for simplicity; however, they readily extend to computing statistics for subgraphs, assuming that each subgraph is detected by exactly one (or constant-many) nodes in the graph.
For example, for detecting and counting the frequencies of all monochromatic triangles that appear at least $\mathfrak T\cdot \epsilon$ times, where $\mathfrak T$ is the total number of triangles, we require $O\parentheses{n^{1+o(1)}/\memory^{1/2}+{\log m\cdot\parentheses{\epsilon^{-1}\cdot \log\parentheses{\frac{\Delta\epsilon^{-1}}{\mu }}+D}}}$ rounds and $\mu\in\parentheses{(\Delta + \epsilon^{-1}),n^{4/3}}$ space.

\subsection{The roadmap of the paper}

The remaining sections are organized as follows. Our proofs regarding clique listing are elaborated in Section~\ref{sec:listing} and detailed in \Cref{appendix:triangle-listing}. Section~\ref{sec:listing} begins with lower-bound proof on clique listing (\Cref{thm: lb-listing}). For our upper-bound algorithms, we first present in \Cref{subsec:listingUBcc} as a warm-up our optimal $k$-clique listing algorithm \textit{in the all-to-all communication model} (\Cref{thm: ub-listing}). Then we expand our optimal randomized algorithm of triangle listing under the original $\mu$-CONGEST model in \Cref{appendix:triangle-listing}. Results regarding streaming simulation algorithms are elaborated in ~\Cref{sec:streaming}. In particular, we first expand on the single node $p$-pass edge-streaming problem, achieving Theorems~\ref{thm:streamingSimulation}-\ref{thm:streaminSimulationLB}. Then follows our method of simulating random-order streams (stated in Theorem~\ref{thm:random-order}) in \Cref{subsec:random-order}. Finally, as proofs of Theorems~\ref{thm:oneway}-\ref{thm:composable}, we detail our algorithms for simulating streaming algorithms that satisfy different levels of mergeability properties in~\Cref{subsec:merge-reduce}.




\section{Clique Listing in $\mu$-\congest}
\label{sec:listing}


In this section we address clique listing. We start with our lower bound proof in \Cref{subsec:listingLB}. Our corresponding optimal algorithm is presented in \Cref{subsec:listingUBcongest}, after a warm-up algorithm in an all-to-all communication setting which we give in \Cref{subsec:listingUBcc}. 

\subsection{The Lower Bound for Clique Listing in $\mu$-\congest}
\label{subsec:listingLB}

Here we prove the lower bound on clique listing. 
\ListingLB*

\begin{proof}

 We first show that any algorithm in our setting can be simulated by an algorithm that has $3\memory$ memory but can only output $k$-cliques at the end of every non-overlapping window \mbox{of $W=\lceil\memory/\ell\rceil$ rounds. }
 
 \paragraph{The simulation in the modified setup.} Consider any algorithm $A$ in the original setup. Define a new algorithm $A'$ in the new setup, which works as follows. The $3\memory$ memory at each node is split into three pieces, each of size $\memory$. One piece is called \emph{the working memory}, another piece is called \emph{the window memory}, and the other is called \emph{the storage memory}. The working memory and the window memory are initialized to the input of the node, and \mbox{the storage memory is initially empty.}

In each round $r$ of $A'$, every node $v$ collects the additional $\ell$ messages it receives in round $r$ of $A$, and stores them in the storage memory. However, if $r > W$ (recall that $W=\lceil\memory/\ell\rceil$) then the storage memory is already full. In this case, in order to store the new information for round $r$, the node deletes the information it received in round $r-W$ from the storage memory. 

Further, the node $v$ updates the working memory to be the memory that $v$ has in $A$ at the end of rounds $r$. It then prepares its messages to be sent in round $r+1$ according to algorithm $A$, according to the content of the working memory.

Finally, if $r$ is the last round in a non-overlapping window of size $W$, then the working memory is copied into the window memory.

The working memory of a node in $A'$ exactly behaves as its memory in $A$, which implies that all messages sent in $A$ are exactly those sent in $A'$, and hence the simulation is correct.

It remains to handle the output of $k$-cliques in $A'$, which can be done only in rounds that are then end of non-overlapping windows of size $W$. This is where the window memory and storage memory come into play. Specifically, if at the end of round $r$ of $A$ the node outputs some set of $k$-cliques $T$, then it outputs this set $T$ only at the end of window $w_r=\lceil W/r\rceil$ in $A'$, which is round $r'=w_r \cdot W$. Note that this is allowed in $A'$ because $r'$ is the end of a window of size $W$. The way that this is done is that before the node copies its working memory into its window memory, it takes the window memory, which holds the content of the memory of $v$ in $A$ at the end of round $r_{prev}=(w_r-1)W$, and simulates rounds $r_{prev}+1,\dots,r'$ from scratch but without sending any messages. When simulating each round $r$ in the above regime, the node $v$ can output all the $k$-cliques it outputs in round $r$ in $A$ (recall that in $A'$ the node $v$ is now in the last round of a window, as required by the setup). The reason we can re-invoke the simulation of these $W$ rounds is that we have the memory at their start stored in the window memory, and we have all of their incoming messages stored in the storage memory.

This completes the proof of the reduction to the modified setup.

\paragraph{The lower bound proof for the modified setup.} We now show the lower bound for any algorithm in the modified setup. 
The following lemma is a generalization of the case $k=3$ which is proven in~\cite{Rivin02} and used in~\cite{IzumiLG17}. 

\begin{lemma}[{See, e.g., \cite{AtseriasGM08, FischerGKO18}}]
\label{lemma:numCopies}
For every $k\geq 3$, a graph with $m$ edges contains at most $O(m^{k/2})$ copies of a $k$-clique.
\end{lemma}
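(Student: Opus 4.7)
The plan is to deduce the bound from one of two classical combinatorial inequalities. The cleanest route, matching the lemma's citation to Atserias, Grohe, and Marx, is the \emph{AGM inequality}: for any graph $H$ and any host graph $G$ with $m$ edges, the number of homomorphisms from $H$ to $G$ is at most $m^{\rho^\ast(H)}$, where $\rho^\ast(H)$ denotes the fractional edge cover number of $H$. Applied to $H=K_k$, the uniform fractional cover assigning weight $1/(k-1)$ to each edge has value $\binom{k}{2}/(k-1)=k/2$, and one checks this is optimal, so $\rho^\ast(K_k)=k/2$. Each copy of $K_k$ in $G$ gives rise to $k!$ injective homomorphisms, so the number of subgraph copies of $K_k$ is at most $m^{k/2}/k!=O(m^{k/2})$.

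If one prefers a self-contained argument, the same bound follows from the Kruskal-Katona theorem. View the set of $k$-cliques of $G$ as a family $\mathcal F \subseteq \binom{V(G)}{k}$. Its $(k-2)$-fold shadow $\partial^{k-2}\mathcal F$ is contained in $E(G)$, and hence has size at most $m$. The real-parameter form of Kruskal-Katona asserts that if $|\mathcal F|=\binom{x}{k}$ then $|\partial^{k-2}\mathcal F|\ge \binom{x}{2}$, so $\binom{x}{2}\le m$. Rearranging gives $x=O(\sqrt m)$, and therefore $|\mathcal F|=\binom{x}{k}=O(x^k)=O(m^{k/2})$, as required.

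Either proof is routine; the only care needed is to cite the correct form of AGM (valid for arbitrary graphs $H$, not merely acyclic ones) or to use the continuous version of Kruskal-Katona so that the cardinality constraint $|\partial^{k-2}\mathcal F|\le m$ can be inverted cleanly. No step presents a substantial obstacle, since the lemma is a well-known extremal bound and is invoked here purely to translate the earlier edge-count lower bound on random graphs into an upper bound on how many $k$-cliques a single node can be asked to output within a limited-memory window in the proof of \Cref{thm: lb-listing}.
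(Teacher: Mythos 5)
The paper states this lemma as a black box, citing \cite{AtseriasGM08, FischerGKO18} without giving any in-text argument, so there is no internal proof to compare against. Your proposal supplies two complete and correct derivations, and both are faithful to the sources the paper points at. The AGM route is the one the first citation is meant to invoke: the uniform weighting $1/(k-1)$ on the edges of $K_k$ covers every vertex exactly once and has value $\binom{k}{2}/(k-1)=k/2$, while the dual fractional matching assigning $1/2$ to each vertex certifies optimality, so $\rho^\ast(K_k)=k/2$; dividing the homomorphism bound by $k!$ then correctly passes from labeled to unlabeled copies. (Depending on the exact normalization one uses $m^{\rho^\ast}$ or $(2m)^{\rho^\ast}$ for an undirected host with $m$ edges, but since $k$ is a constant here this only changes the hidden constant, which the $O(\cdot)$ absorbs.) The Kruskal--Katona route is equally sound and more elementary: the $(k-2)$-fold shadow of the clique family lands inside $E(G)$, the Lov\'asz real-parameter form gives $\binom{x}{2}\le m$, hence $x=O(\sqrt m)$ and $|\mathcal F|=\binom{x}{k}=O(m^{k/2})$. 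Either argument would serve as a self-contained replacement for the citation; the AGM version generalizes cleanly to arbitrary pattern graphs $H$, while the Kruskal--Katona version is the one most naturally seen as the $k\ge 3$ analogue of Rivin's triangle bound that the paper alludes to in the surrounding text.
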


For a random variable $X$ from domain $\mathcal{X}$, its \emph{entropy} is $H(X) = -\sum_{x\in\mathcal{X}}{p(x)\log(p(x))}$. The \emph{conditional entropy} of two random variables is $H(X|Y) = -\sum_{x\in\mathcal{X},y\in\mathcal{Y}}{p(x,y)\log(p(x,y)/p(y))}$, and their \emph{mutual information} is $I(X;Y) = H(X)-H(X|Y)$.

We extend here the notations given in~\cite{IzumiLG17} for $k=3$ to general values of $k$. Given a graph $G=(V,E)$, for a set $X\subseteq V$, the set $\mathcal{T}(X)$ denotes all non-ordered $k$-tuples of $X$. For any $R\subseteq \mathcal{T}$, the set $\mathcal{P}(R) \subseteq E$ denotes the set of all $e\in E$ such that $e\in t$ for some $k$-tuple $t\in R$, where $e\in t$ if and only if $t=\{i_1,\dots,i_k\}$ and $e=\{i_{\ell},i_{q}\}$, for some $1\leq \ell,q\leq k$.
Now, let $\mathcal{A}$ be a $k$-clique listing algorithm, and denote by $T_i$ the output of node $i$.

The following was proven for $k=3$.
\begin{lemma}[{\cite[Lemma 4.3]{IzumiLG17}}]
\label{lemma:infoBound}
For any node $i$, the inequality $I(E;T_i) \geq \E[|\mathcal{P}(T_i)|]$ holds.
\end{lemma}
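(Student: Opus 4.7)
The plan is to observe that the proof of Lemma~4.3 in~\cite{IzumiLG17} for $k=3$ extends essentially verbatim to any constant $k\geq 3$, because the argument relies only on the correctness of $\mathcal{A}$ (every $k$-tuple output by node $i$ is a genuine clique of $G$) rather than on any structural property specific to $k=3$. The input distribution is the standard hard distribution used in \cite{IzumiLG17}, namely $G(n,1/2)$, so every potential edge is an independent $\mathrm{Bernoulli}(1/2)$ random variable.

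Concretely, I would unfold the mutual information as $I(E;T_i) = H(E) - H(E \mid T_i)$, with $H(E) = \binom{n}{2}$ under $G(n,1/2)$. To bound $H(E \mid T_i)$ from above, I would fix any realisable output $T_i = t$. By correctness, every $k$-tuple in $t$ induces a clique of $G$, hence each edge of $\mathcal{P}(t)$ is certain to belong to $E$. Conditioned on this event, the remaining $\binom{n}{2} - |\mathcal{P}(t)|$ potential edges are still independent $\mathrm{Bernoulli}(1/2)$ variables, so
\[
H(E \mid T_i = t) \;\leq\; \binom{n}{2} - |\mathcal{P}(t)|.
\]
Averaging over $t$ gives $H(E \mid T_i) \leq \binom{n}{2} - \E[|\mathcal{P}(T_i)|]$, and subtracting from $H(E)$ yields the claim.

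The main subtlety — rather than a genuine obstacle — is handling algorithmic randomness cleanly. If $\mathcal{A}$ is randomized I would condition on its random tape throughout, so that $T_i$ is a deterministic function of $E$ and the above deterministic counting argument applies; taking expectation over the tape at the end preserves the inequality. I would also remark that \Cref{lemma:numCopies} is not used in this proof: it plays no role in the information-theoretic step and enters only later when $\E[|\mathcal{P}(T_i)|]$ is translated into a lower bound on the round complexity via the memory and bandwidth budgets $\memory$ and $\ell$.
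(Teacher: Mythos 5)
The paper does not actually prove \Cref{lemma:infoBound}; it imports it as Lemma~4.3 from \cite{IzumiLG17} for $k=3$ and only remarks that ``a similar line of proof'' gives the per-window variant \Cref{lemma:infoBoundPerRound}. Your proposal is therefore a reconstruction of the cited argument rather than an alternative to something in this paper, and it is essentially the standard proof one would expect: decompose $I(E;T_i) = H(E) - H(E\mid T_i)$, use $H(E)=\binom{n}{2}$ for $G(n,1/2)$, and upper-bound $H(E\mid T_i=t)$ by noting that correctness forces every edge in $\mathcal P(t)$ to be present. That yields $H(E\mid T_i) \le \binom{n}{2} - \E[|\mathcal{P}(T_i)|]$ and the claim follows.

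One justification step is stated too strongly, even though your inequality is correct. Conditioned on the event $T_i=t$, the remaining potential edges are generally \emph{not} independent $\mathrm{Bernoulli}(1/2)$: the output event constrains the whole edge set in ways that can skew and correlate them. What you actually need --- and all you need --- is subadditivity of entropy together with the trivial one-bit bound: $H(E\mid T_i=t) \le \sum_{e} H\bigl(\mathbf{1}_{e\in E}\mid T_i=t\bigr)$, where the summands corresponding to $e\in\mathcal{P}(t)$ vanish by correctness and every other summand is at most $1$. Swapping in that justification makes the step airtight. Your observations that the argument is indifferent to $k$ (since it uses only correctness and edge-counting), that randomness is handled by conditioning on the tape, and that \Cref{lemma:numCopies} plays no role here and only enters in converting $\E[|\mathcal{P}(T_i)|]$ into a round lower bound, are all accurate and match how the paper deploys these pieces.
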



Here, we bound the number of $k$-cliques that are output at the end of each window of size $W$, so we denote $T_i^w$ the output of node $i$ at the end of window $w$. Recall that we always address non-overlapping {windows. A similar line of proof to that of ~\Cref{lemma:infoBound} yields the following per round.}
\begin{lemma}
\label{lemma:infoBoundPerRound}
For all $k$, for every window $w$ and any node $i$, the inequality $I(E;T_i^w) \geq \E[|\mathcal{P}(T_i^w)|]$ holds.
\end{lemma}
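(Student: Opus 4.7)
The plan is to follow the proof of Lemma~\ref{lemma:infoBound} (Izumi--Le~Gall's Lemma~4.3) almost verbatim, observing that neither the extension from $k=3$ to general $k$ nor the restriction of the output to a single window changes the underlying information-theoretic argument.

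First, I would fix the hard input distribution used in that proof: each potential edge of $G$ is present independently with probability $1/2$. With this distribution, $H(E)=\binom{n}{2}$ bits, and the conditional entropy $H(E\mid Z)$ of $E$ given any variable $Z$ has a clean bound in terms of how many edges $Z$ reveals. Any randomness used by the algorithm is folded into the probability space, so $T_i^w$ is a random variable defined on the joint space of $E$ and the algorithm's coins.

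Next, I would write $I(E;T_i^w)=H(E)-H(E\mid T_i^w)$ and focus on upper bounding $H(E\mid T_i^w)$. The key deterministic observation is that $T_i^w$ is a list of $k$-cliques of $E$, so on every outcome every edge appearing in some clique of $T_i^w$ must actually be present in $E$. Hence, conditioning on $T_i^w=t$ fixes the $|\mathcal{P}(t)|$ edges of $\mathcal{P}(t)$ to be present, while the remaining $\binom{n}{2}-|\mathcal{P}(t)|$ edges retain their independent uniform distribution. Therefore
\[
H(E\mid T_i^w=t)\;\le\;\binom{n}{2}-|\mathcal{P}(t)|.
\]
Averaging over the distribution of $T_i^w$ gives $H(E\mid T_i^w)\le \binom{n}{2}-\E[|\mathcal{P}(T_i^w)|]$, and subtracting from $H(E)=\binom{n}{2}$ yields $I(E;T_i^w)\ge \E[|\mathcal{P}(T_i^w)|]$, as claimed.

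Finally, I would remark that the argument uses only two facts about $T_i^w$: that its realizations are sets of cliques of $E$, and that $\mathcal{P}(\cdot)$ returns the set of edges covered by these cliques. It is indifferent to $k$ (the definition of $\mathcal{P}$ in the excerpt is already stated for arbitrary $k$-tuples) and to whether the output was accumulated over a single window or over the entire execution. The main subtlety to be careful about is that $T_i^w$ is not a deterministic function of $E$, since the algorithm may use private randomness; this is handled cleanly by the above pointwise conditioning, because the ``every edge of $\mathcal{P}(T_i^w)$ is present'' property holds on every sample of the coins, not just in expectation. I do not anticipate any further obstacle.
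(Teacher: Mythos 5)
Your proof is correct and follows the same route the paper implicitly takes, namely transporting Izumi and Le Gall's Lemma~4.3 mutatis mutandis to general $k$ and per-window output $T_i^w$ (the paper itself only cites ``a similar line of proof'' and does not spell out the details as you do). One small inaccuracy: after conditioning on $T_i^w=t$, the remaining $\binom{n}{2}-|\mathcal{P}(t)|$ potential edges do \emph{not} in general retain their independent uniform distribution, since the value (or emptiness) of the output can leak information about edges outside $\mathcal{P}(t)$; however, the bound $H(E\mid T_i^w=t)\le\binom{n}{2}-|\mathcal{P}(t)|$ still holds simply because, conditionally, $E$ is supported on at most $2^{\binom{n}{2}-|\mathcal{P}(t)|}$ outcomes, which is all your argument actually needs.
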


Let $G=G_{n,1/2}$ be a random graph. Let $T_{i_{max}}$ be the node which maximizes $|T_i|$ over all nodes. The following is proven in~\cite{IzumiLG17}, using~\Cref{lemma:numCopies} for $k=3$ and~\Cref{lemma:infoBound}, where $N$ is the number of triples of nodes, i.e., $N=\Theta(n^3)$.
\begin{lemma}[{\cite[Implicit in the proof of Theorem 4.1]{IzumiLG17}}]
\label{lemma:mutualInfo}
There exists a constant $c>0$ such that 
$I(E;T_{i_{max}}) \geq c\cdot(N/n)^{2/3}$. 
\end{lemma}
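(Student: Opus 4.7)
The plan is to reproduce the chain of inequalities implicit in the proof of \cite[Theorem 4.1]{IzumiLG17}, combining a concentration argument on the number of $k$-cliques in $G_{n,1/2}$, pigeonhole on the output partition $(T_1,\dots,T_n)$, the edge/clique tradeoff of \Cref{lemma:numCopies}, and finally the mutual-information inequality of \Cref{lemma:infoBound}.

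First I would note that the expected number of $k$-cliques in $G=G_{n,1/2}$ is $\binom{n}{k}/2^{\binom{k}{2}} = \Theta(N)$, and standard subgraph-count concentration (for example via Janson's inequality or a Lipschitz/martingale argument on the edge exposure) keeps the true count within a constant factor of this expectation with probability $1-o(1)$. Because any correct listing algorithm requires $\bigcup_i T_i$ to cover every $k$-clique of $G$, the pigeonhole principle produces, on this high-probability event, a node $i_{max}$ with $|T_{i_{max}}| = \Omega(N/n)$.

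Next I would apply \Cref{lemma:numCopies} contrapositively. The cliques in $T_{i_{max}}$ are all supported on the edge set $\mathcal{P}(T_{i_{max}})$, so if $|\mathcal P(T_{i_{max}})| < c'\cdot(N/n)^{2/k}$ for a sufficiently small constant $c'$, the subgraph spanned by those edges would contain fewer than $\Omega(N/n)$ $k$-cliques, contradicting the pigeonhole bound. Thus $|\mathcal P(T_{i_{max}})| = \Omega\bigl((N/n)^{2/k}\bigr)$, which for $k=3$ specializes to the $(N/n)^{2/3}$ factor in the lemma statement. Plugging this into \Cref{lemma:infoBound} gives $I(E;T_{i_{max}}) \geq \E\bigl[|\mathcal P(T_{i_{max}})|\bigr] = \Omega\bigl((N/n)^{2/3}\bigr)$, as claimed.

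The main subtlety, and the step I expect to require the most care, is moving from the high-probability pointwise bound on $|\mathcal P(T_{i_{max}})|$ to its expectation: the function $x\mapsto x^{2/k}$ is concave, so Jensen's inequality points the wrong way. The concentration of subgraph counts is what saves us, because the $1-o(1)$ probability mass on which the pointwise bound holds already contributes $\Omega((N/n)^{2/3})$ to the expectation, while the complementary event contributes only a vanishing additive term that can be absorbed into the constant $c$.
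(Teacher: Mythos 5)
Your reconstruction follows the outline that the paper itself sketches in the technical overview (``there are $\Theta(n^3)$ triangles and so some node needs to output $\Omega(n^2)$ triangles, for which it needs $\widetilde\Omega(n^{4/3})$ bits of information''), and the four ingredients you assemble --- concentration of the $k$-clique count in $G_{n,1/2}$, pigeonhole over the $n$ outputs, the edge/clique tradeoff of \Cref{lemma:numCopies}, and the information inequality of \Cref{lemma:infoBound} --- are exactly the right ones. You are also right to flag the concavity of $x \mapsto x^{2/k}$ as the place where a naive ``take expectations'' argument would break.

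One point deserves to be made explicit, because your fix quietly trades one subtlety for another. Your pigeonhole step selects $i_{\max}$ \emph{per realization} of $G$, i.e., $i_{\max}$ is a random index depending on $E$. That is exactly what lets you assert the pointwise bound $|\mathcal{P}(T_{i_{\max}})| = \Omega((N/n)^{2/k})$ on the high-probability event and then lower-bound the expectation directly, sidestepping Jensen. But \Cref{lemma:infoBound} as stated is a claim ``for any (fixed) node $i$,'' and plugging in a data-dependent index is not literally within its scope. The gap is benign, but it needs to be closed: the proof of \Cref{lemma:infoBound} (that $H(E\mid T_i=t)\le\binom{n}{2}-|\mathcal{P}(t)|$, since revealing $t$ pins down the presence of every edge in $\mathcal{P}(t)$, and hence $I(E;T_i)=H(E)-H(E\mid T_i)\ge\E[|\mathcal{P}(T_i)|]$) does not use that $i$ is fixed; it applies verbatim to any random variable $Z$ jointly defined with $E$ that always outputs $k$-cliques present in $G(E)$, and in particular to $Z=T_{i_{\max}(E)}$. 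If you state this extension, your argument is complete and matches the argument the paper is importing from~\cite{IzumiLG17}. Without it, the handoff from the pigeonhole step to the information inequality is formally unjustified even though the paper's own (deferred) exposition commits the same informality.
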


We derive a similar statement, which differs from the above by addressing windows, as follows. Suppose that algorithm $\mathcal{A}$ completes within $L$ windows (with high probability, if $\mathcal{A}$ is randomized). Let $w_{max}$ and $i_{max}$ be a window number and node identifier that maximize $T_{i_{max}}^{w_{max}}$ over all windows and nodes. By~\Cref{lemma:infoBoundPerRound}, we have that $I(E;T_{i_{max}}^{w_{max}}) \geq \E[|\mathcal{P}(T_{i_{max}}^{w_{max}})|]$. 

We can generalize~\cref{lemma:mutualInfo} to hold for $k \geq 3$, and for the maximum output size per round, as follows. First, instead of $N=\Theta(n^3)$, we use the number of $k$-tuples of nodes $N_k$. Second, 
note that the $n$ term in the denominator of the bound in~\cref{lemma:infoBound} describes the size of the set over which $i_{max}$ is chosen. In our case, we maximize over all nodes and over all windows, so we need to plug $nL$ instead. Using~\cref{lemma:numCopies} for a general $k$ and~\Cref{lemma:infoBoundPerRound}, we obtain the following.

\begin{lemma}
\label{lemma:mutualInfoPerRound}
There exists a constant $c>0$ such that 
$I(E;T^{w_{max}}_{i_{max}}) \geq c\cdot(N_k/nL)^{2/k}$. 
\end{lemma}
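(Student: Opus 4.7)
The plan is to imitate the structure of the proof of \Cref{lemma:mutualInfo}, making two adjustments: (i) replace the expected triangle count $\Theta(n^3)$ in $G_{n,1/2}$ by the expected number of $k$-cliques, which for constant $k$ is $\binom{n}{k}/2^{\binom{k}{2}} = \Theta(N_k)$; and (ii) instead of pigeonholing only over the $n$ nodes, pigeonhole over the $nL$ (node, window) pairs to account for the fact that a node's output is now spread across the $L$ windows of $\mathcal{A}$.

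The first step is a counting/pigeonhole argument. By linearity of expectation and a standard second-moment bound (the $k$-clique count is a sum of $\binom{n}{k}$ indicator random variables whose covariance structure is controlled because $k$ is constant), the number of $k$-cliques in $G_{n,1/2}$ is concentrated around $\Theta(N_k)$ with probability $1-o(1)$. Condition on this event $\mathcal{E}$. Since $\mathcal{A}$ is a correct $k$-clique listing algorithm (w.h.p.), on $\mathcal{E}$ we have $\bigcup_{i,w} T_i^w$ equal to the entire set of $k$-cliques, and hence $\sum_{i,w} |T_i^w| = \Omega(N_k)$. Averaging this sum over the $nL$ pairs $(i,w)$ gives some pair $(i^\ast,w^\ast)$ with $|T_{i^\ast}^{w^\ast}| \ge \Omega(N_k/(nL))$; in particular, $|T_{i_{max}}^{w_{max}}| \ge \Omega(N_k/(nL))$ on $\mathcal{E}$.

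The second step converts this clique-count lower bound into an edge-count lower bound. The contrapositive of \Cref{lemma:numCopies} says that any set $R$ of $k$-cliques must be supported on at least $\Omega(|R|^{2/k})$ distinct edges, i.e., $|\mathcal{P}(R)| \ge \Omega(|R|^{2/k})$. Applying this pointwise on $\mathcal{E}$ yields $|\mathcal{P}(T_{i_{max}}^{w_{max}})| \ge \Omega\bigl((N_k/(nL))^{2/k}\bigr)$. Taking expectation and absorbing the $1-o(1)$ conditioning probability into the constant, we obtain $\E[|\mathcal{P}(T_{i_{max}}^{w_{max}})|] \ge c\cdot (N_k/(nL))^{2/k}$ for some constant $c>0$. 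Plugging this into \Cref{lemma:infoBoundPerRound} yields the claimed mutual-information bound $I(E;T_{i_{max}}^{w_{max}}) \ge c\cdot (N_k/(nL))^{2/k}$.

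The main obstacle here, and the reason the proof is not a one-line adaptation, is the concavity of $x \mapsto x^{2/k}$ for $k \ge 2$: applying \Cref{lemma:numCopies} after taking expectations would yield, via Jensen, the \emph{wrong} direction of inequality. The fix is precisely to use concentration of the $k$-clique count in $G_{n,1/2}$ so that the pigeonhole lower bound on $|T_{i_{max}}^{w_{max}}|$ holds with high probability rather than only in expectation. Once this is in place, \Cref{lemma:numCopies} can be applied pointwise on the high-probability event and the expectation is then taken at the very end, which avoids the Jensen obstruction entirely.
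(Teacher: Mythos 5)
Your proposal is correct and follows essentially the same route as the paper: generalize the argument behind \Cref{lemma:mutualInfo} by replacing $\Theta(n^3)$ with $N_k = \Theta(n^k)$ and pigeonholing over the $nL$ (node, window) pairs rather than over $n$ nodes, then combine \Cref{lemma:numCopies} with \Cref{lemma:infoBoundPerRound}. Your explicit use of concentration of the $k$-clique count to apply \Cref{lemma:numCopies} pointwise on a high-probability event, thereby sidestepping the Jensen obstruction from the concavity of $x\mapsto x^{2/k}$, is a genuinely useful clarification that the paper's terse sketch leaves implicit.
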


Since the number of $k$-tuples of nodes is $N_k=\Theta(n^k)$, we get $I(E;T^{w_{max}}_{i_{max}}) \geq c\cdot(n^{k-1}/L)^{2/k}$.
Finally, denote by $\pi_i$ the transcript that node $i$ sees in algorithm $\mathcal{A}$ and by $\rho_i$ its initial state. Then~\cite{IzumiLG17} prove the following.
\begin{lemma}[{\cite[Implicit in the proof of Theorem 4.1]{IzumiLG17}}]
\label{lemma:entropy}
It holds that 
$H(\pi_{i_{max}}) \geq I(E; T_{i_{max}}) - H(\rho_{i_{max}})$.
\end{lemma}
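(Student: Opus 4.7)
The plan is to establish the inequality via three standard information-theoretic steps: the data processing inequality, the bound of mutual information by entropy, and the subadditivity of joint entropy. First I would observe that in any distributed algorithm, the output $T_i$ of node $i$ is a deterministic function of that node's initial state $\rho_i$ (into which I fold any private random bits used at node $i$, so this works for randomized algorithms as well) together with its transcript $\pi_i$, which contains all messages received over the course of the execution. Since $T_i = f(\rho_i, \pi_i)$ for some deterministic $f$, the data processing inequality gives $I(E; T_i) \leq I(E; \rho_i, \pi_i)$.

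Next I would bound
\[
I(E; \rho_i, \pi_i) \;=\; H(\rho_i, \pi_i) - H(\rho_i, \pi_i \mid E) \;\leq\; H(\rho_i, \pi_i),
\]
using non-negativity of conditional entropy. Applying the chain rule together with $H(X \mid Y) \leq H(X)$, I get $H(\rho_i, \pi_i) = H(\rho_i) + H(\pi_i \mid \rho_i) \leq H(\rho_i) + H(\pi_i)$. Chaining these inequalities and rearranging yields $H(\pi_i) \geq I(E; T_i) - H(\rho_i)$, and instantiating at $i = i_{max}$ gives the claim.

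The main subtlety to check is that $T_i$ really is a function of $(\rho_i, \pi_i)$ in the randomized setting; this is handled by the convention that each node's private coin flips belong to $\rho_i$. A second subtle point is that $\rho_i$ itself depends on $E$ (it encodes $i$'s incident edges), so one might worry about ``double counting'' of information about $E$. However, the argument only uses that the pair $(\rho_i, \pi_i)$ determines $T_i$, so this correlation is automatically absorbed by the data processing step and no independence assumption is required. I expect this verification of the deterministic dependence $T_i = f(\rho_i, \pi_i)$ to be the only place where care is needed; the rest is a direct chain of entropy inequalities.
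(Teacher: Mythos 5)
Your proposal is correct and is essentially the argument implicit in Izumi–Le Gall: express $T_i$ as a deterministic function of $(\rho_i,\pi_i)$ by folding private coins into $\rho_i$, apply data processing to get $I(E;T_i)\le I(E;\rho_i,\pi_i)$, bound the latter by $H(\rho_i)+H(\pi_i)$ via subadditivity, and rearrange. The paper simply cites this as implicit in~\cite{IzumiLG17} rather than reproving it, and your derivation reconstructs the same chain of inequalities with the right attention to the one subtle point (that $T_i$ must be a deterministic function of the pair, which the coin-folding convention ensures).
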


In our case, we define $\pi_i^w$ to be the memory at node $i$ in window $w$, and $\rho_i$ its initial state. Adjusting the proof of~\Cref{lemma:entropy} gives the following.
\begin{lemma}
\label{lemma:entropyPerRound}
It holds that 
$H(\pi_{i_{max}}^{w_{max}}) \geq I(E; T_{i_{max}}^{w_{max}}) - H(\rho_{i_{max}}^{w_{max}})$. 
\end{lemma}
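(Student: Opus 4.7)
The plan is to adapt the Izumi--Le~Gall argument for Lemma~\ref{lemma:entropy} to the window-localized quantities, rather than the whole-algorithm quantities. First I would pin down the precise functional dependence of the output. In the modified setup constructed earlier in the proof, every $k$-clique emitted by node $i$ during window $w$ is produced either directly from the updated working memory or via the from-scratch re-simulation of the $W$ rounds in that window, which itself reads only from the window-memory snapshot and the storage memory (together with $v$'s private randomness and input bits). All three pieces of local state at the end of window $w$ are encoded in $\pi_i^w$, so there exists a deterministic map $f$ with $T_i^w = f(\pi_i^w, \rho_i)$.

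Next, instantiating this at the maximizing pair $(i_{max}, w_{max})$ and invoking the data-processing inequality for the Markov chain $E \to (\pi_{i_{max}}^{w_{max}}, \rho_{i_{max}}) \to T_{i_{max}}^{w_{max}}$, I would write
\[
I(E; T_{i_{max}}^{w_{max}}) \;\le\; I\bigl(E;\, \pi_{i_{max}}^{w_{max}}, \rho_{i_{max}}\bigr) \;\le\; H\bigl(\pi_{i_{max}}^{w_{max}}, \rho_{i_{max}}\bigr).
\]
Then a single application of the subadditivity of joint entropy gives
\[
H\bigl(\pi_{i_{max}}^{w_{max}}, \rho_{i_{max}}\bigr) \;\le\; H(\pi_{i_{max}}^{w_{max}}) + H(\rho_{i_{max}}^{w_{max}}),
\]
and rearranging these two inequalities produces exactly the bound asserted by the lemma. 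Here I am treating $\rho_{i_{max}}^{w_{max}}$ as the portion of the initial randomness and input that is consulted by (or charged to) window $w_{max}$; if $\rho$ is defined globally, the same bound is obtained by applying subadditivity once and replacing $H(\rho_{i_{max}}^{w_{max}})$ by the larger quantity $H(\rho_{i_{max}})$.

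I expect the information-theoretic manipulation to be routine; the one delicate step is the first one, namely certifying that $T_i^w$ really is a function of $\pi_i^w$ and $\rho_i$ alone. The care here is to check that the re-simulation procedure used in the modified setup does not implicitly depend on any state from earlier windows beyond what is retained in the window-memory snapshot and the $W$-round rolling storage buffer at the start of window $w$; once this bookkeeping is done, the rest of the argument is a direct translation of Lemma~\ref{lemma:entropy} with the window subscripts carried through.
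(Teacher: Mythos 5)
Your argument is the intended adaptation, and it is correct: the paper itself gives no proof beyond ``Adjusting the proof of Lemma~\ref{lemma:entropy}'', and the adaptation is exactly what you describe---certify that $T_i^w$ is a deterministic function of the node's window-$w$ memory $\pi_i^w$ together with its initial state $\rho_i$ (the whole point of the modified setup's storage/window/working-memory bookkeeping), then apply data processing for $E \to (\pi_{i_{max}}^{w_{max}}, \rho_{i_{max}}) \to T_{i_{max}}^{w_{max}}$ and subadditivity of entropy, and rearrange. You are right that the delicate point is the functional dependence, and your check that the re-simulation of rounds $r_{prev}{+}1,\dots,r'$ reads only from the window-memory snapshot, the $W$-round storage buffer, and the private randomness is the substantive step. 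One small wrinkle worth noting: the paper's statement writes $\rho_{i_{max}}^{w_{max}}$ while the running text defines only $\rho_i$ (the initial state, independent of $w$), and you should treat these as the same object; your closing remark about ``replacing by the larger quantity'' is phrased in the wrong direction if $\rho_i^w$ were a strict sub-part of $\rho_i$ (that would make the claimed inequality \emph{stronger}, not weaker), but since the downstream use only bounds this entropy term above by $n$, the distinction is immaterial.
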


We now combine~\Cref{lemma:mutualInfoPerRound} and~\Cref{lemma:entropyPerRound} to conclude
 that 
 \begin{multline*}
     H(\pi_{i_{max}}^{w_{max}}) \geq I(E; T_{i_{max}}^{w_{max}}) - H(\rho_{i_{max}}^{w_{max}})\\ \geq c\cdot(n^{k-1}/L)^{2/k} - n \geq \Omega(n^{2(k-1)/k}/L^{2/k}).
 \end{multline*}

Since the entropy of the local memory lower bounds the average number of bits of local memory, this implies that $\memory\ = \Theta(3\memory) = \Omega(n^{2(k-1)/k}/L^{2/k})$, which in turn means that the number of windows $L$ is bounded from below by $\Omega(n^{k-1}/\memory^{k/2})$. Since the size of the windows is $\memory/\ell$, this implies that the number of rounds is at least $\Omega(n^{k-2}/(\memory^{k/2-1}\cdot \ell))$, as claimed.
\end{proof}

\subsection{Warm-Up: An Upper Bound for Clique Listing in All-to-All Communication with Bounded Memory}
\label{subsec:listingUBcc}
As a warm-up for our clique listing algorithms, we show how to list $k$-cliques if all-to-all communication was allowed. This would resemble the \clique model~\cite{LotkerPPP05}, in which the communication graph is a complete graph but the input graph is arbitrary, with the additional constraint that each node has only $\memory$ memory. We refer to this model as the $\mu$-\clique model, and we emphasize that we use it as a sandbox rather than an end goal, since our goal is to abstract arbitrary networks. Our approach follows the listing approach of \cite{DolevLP12}, but is memory-sensitive.


Before we delve into our main result, the nature of $\mu$-\clique does not allow us to use the typical Lenzen's routing scheme \cite{lenzenRouting} from \clique to route $\omega(n)$ messages. To recap, Lenzen's routing scheme in \clique states that if every node desires to send and receive at most $O(n)$ messages, then all of these messages can be delivered in $O(1)$ rounds. For any $x \geq 1$, this can be generalized to the statement that if every node desires to send and receive $O(x\cdot n)$ messages, then this can be done in $O(x)$ rounds. However, in our case, $x\cdot n$ might be greater than $\mu$, in which case we would want to somehow batch the data so that each node would receive $\mu$ messages, process those messages, and then receive the next $\mu$ messages. While we show how to balance the receiving part (i.e. that each node receives messages in chunks of size $\mu$), it becomes challenging to show that each node sends messages in chunks of size $\mu$. Thus, we turn to the \mbox{following lemma from \cite{loadBalancedRouting}.}

\begin{lemma}[Lemma 9 in \cite{loadBalancedRouting}]\label{load-balanced-routing-clique}
Any routing instance in which every node $v$ is
the target of up to $O(n)$ messages, and $v$ locally computes the messages it desires to send
based on at most $|R(v)| = O(n \log n)$ bits which it knows, can be performed in $O(1)$ rounds of \clique.
\end{lemma}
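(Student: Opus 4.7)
The plan is to reduce the instance to one that satisfies the preconditions of Lenzen's standard routing scheme by load-balancing the sending side. Since every destination receives $O(n)$ messages, the total number of messages is $\sum_v |M_v| = O(n^2)$, where $M_v$ denotes the messages $v$ wants to send; on average a node is responsible for only $O(n)$ of them. The only obstacle is that a particular $v$ may have to generate $|M_v| = \omega(n)$ messages from its private knowledge $R(v)$. I would designate, for each $v$, a team of $r_v \eqdef \lceil |M_v|/n \rceil$ helper nodes, each responsible for recomputing and forwarding a $1/r_v$ fraction of $M_v$. Because $\sum_v r_v \leq \sum_v (|M_v|/n + 1) = O(n)$, these helper slots can be assigned to distinct actual nodes in such a way that every node serves as helper for at most $O(1)$ sources.

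The key technical step is delivering $R(v)$ to each of $v$'s helpers in $O(1)$ rounds. A direct broadcast by $v$ is too slow, since $|R(v)|$ is $\Theta(n)$ words and $v$'s outgoing bandwidth is $O(n)$ words per round, so copying $R(v)$ even to a super-constant number of helpers one at a time cannot finish in $O(1)$ rounds. I would split the delivery into two phases. In the \emph{sharding} phase (one round), each source $v$ slices $R(v)$ into $n$ chunks of $O(\log n)$ bits each and ships the $i$-th chunk to node $i$; this uses exactly one word on each $(v,i)$ link and leaves every node storing one chunk of every $R(\cdot)$. In the \emph{pulling} phase ($O(1)$ rounds), each helper $h$ assigned to source $v$ requests from each node $i$ the chunk of $R(v)$ that $i$ stores and then reassembles $R(v)$ locally. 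Because the number of helpers is $O(n)$ and the assignment is balanced, every link carries only $O(1)$ requests and $O(1)$ one-word replies, so the phase fits within $O(1)$ rounds of $\clique$ bandwidth.

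After the pulling phase, each helper locally recomputes its prescribed share of $M_v$; by construction this share has size $O(n)$. Together with the assumption that every target receives only $O(n)$ messages, the resulting routing instance meets the hypothesis of Lenzen's scheme and can be executed in $O(1)$ additional rounds. The helper assignment itself is cheap to compute: each $v$ broadcasts $r_v$ (only $O(\log n)$ bits) in a single round, after which every node locally computes the same canonical assignment, e.g., by prefix-summing the $r_v$'s in a fixed node order.

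The main obstacle will be to show that the sharding and pulling phases are congestion-free \emph{per link}, rather than merely in total bandwidth. A careful deterministic assignment of helpers is needed so that different sources' chunk requests and replies do not pile up on a single $(u,w)$ link; given the bound $\sum_v r_v = O(n)$ this reduces to a scheduling argument akin to the one hidden inside Lenzen's original proof, but tailored to the matrix-transpose structure induced by the sharding step. Once that scheduling is in place, the rest of the proof is a direct composition with Lenzen's routing.
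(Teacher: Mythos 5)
The paper does not actually prove this lemma: it is stated verbatim as Lemma~9 of~\cite{loadBalancedRouting} and invoked as a black box (the paper's only addition is the observation, stated afterward, that the algorithm of~\cite{loadBalancedRouting} can be inspected to use $\Theta(n)$ memory). So there is no paper-internal proof to compare against, and your argument is effectively a reconstruction of the cited result.

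That said, your reconstruction looks correct and self-contained, and you underestimate how finished it already is. The ``main obstacle'' you flag at the end --- per-link congestion in the sharding and pulling phases --- is in fact already discharged by the bound you established earlier: since $\sum_v r_v = O(n)$, the canonical assignment (prefix-sum the $r_v$'s, assign slot $k$ to node $k \bmod n$) gives each node $O(1)$ helper slots; each chunk of $R(v)$ is $O(\log n)$ bits, i.e.\ $O(1)$ words; hence in the pulling phase, node $i$ sends to each helper $h$ only the $O(1)$ chunks corresponding to the $O(1)$ sources $h$ serves, which is $O(1)$ words on the link $(i,h)$, symmetrically for receipt. No additional Lenzen-style scheduling argument is needed --- the per-link load bound falls out of the pigeonhole directly, precisely because you sharded $R(v)$ uniformly over all $n$ nodes so that every helper pulls exactly one $O(1)$-word chunk from every node. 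One presentational nit: since the assignment is computed identically by all nodes after the $r_v$-broadcast round, the pulling phase needs no explicit request messages; nodes can push the chunks proactively, which makes the round count and the link bound even more immediate.
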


In our case, the number of bits based on which a node needs to generate its outgoing messages is bounded by the size of its local memory as well as its incoming messages. Note that while the lemma is proven in \cite{loadBalancedRouting} for the \clique model, it is possible to observe in their proof that the algorithm uses $\Theta(n)$ memory, and as such as long as $\mu \geq n$, we can use this lemma, to get the following lemma in $\mu$-\clique.

\begin{lemma}\label{load-balanced-routing-clique-switch}
Given $\mu \geq n$, any routing instance in which every node $v$ is
the target of up to $O(n)$ messages, and $v$ locally computes the messages it desires to send
based on at most $|R(v)| = O(n \log n)$ bits, can be performed in $O(1)$ rounds of $\mu$-\clique.
\end{lemma}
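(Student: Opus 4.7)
The plan is to invoke Lemma \ref{load-balanced-routing-clique} essentially as a black box; the only gap to close is to verify that the protocol of \cite{loadBalancedRouting} never exceeds the $\mu \geq n$ memory budget at any node. Since \clique and $\mu$-\clique differ only in the per-node memory constraint (bandwidth and topology are identical), any \clique algorithm whose peak memory footprint is $O(n)$ words ports unchanged to $\mu$-\clique with the same round complexity. Thus the proof reduces to a careful memory audit of the existing scheme.

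The first step is to revisit the load-balanced routing protocol of \cite{loadBalancedRouting} and explicitly bound the peak memory usage at any node. At any moment in the execution, a node $v$ holds at most three kinds of data: (i) its local input $R(v)$, of $O(n\log n)$ bits by hypothesis; (ii) a buffer of outgoing messages destined to its targets, also $O(n\log n)$ bits because $v$ both sends and receives at most $O(n)$ messages of $O(\log n)$ bits each; and (iii) transient tokens it relays on behalf of other nodes, which by the load-balancing guarantees of the scheme never exceed $O(n)$ words at a time. Summing these contributions yields $O(n)$ words, which fits within the hypothesis $\mu \geq n$.

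The second step is to verify that the local computation of outgoing messages from $R(v)$ also stays within budget. Both the input $R(v)$ and the output set of $O(n)$ messages have total size $O(n\log n)$ bits, and the construction in \cite{loadBalancedRouting} is a direct enumeration over $R(v)$ with only constant auxiliary overhead per message produced. The main (and only) obstacle I anticipate is that the original analysis might implicitly rely on unbounded transient auxiliary structures (for example, hash tables used to schedule relay tokens or to deduplicate addresses); I would therefore audit each phase of the protocol individually to confirm that every such auxiliary structure can be realized within $O(n)$ words, replacing any bulkier substructure by a streaming or bucketed variant if necessary. Once the $O(n)$-word bound is confirmed phase by phase, the protocol runs verbatim in $\mu$-\clique and terminates in $O(1)$ rounds, yielding the claim.
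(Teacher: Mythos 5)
Your proposal is correct and takes essentially the same route as the paper: both observe that the protocol of \cite{loadBalancedRouting} already has an $O(n)$-word peak memory footprint, and hence ports verbatim to $\mu$-\clique when $\mu \geq n$. The paper states this observation in a single sentence; your version simply makes the memory audit explicit, which the paper leaves to the reader.
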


We are now ready to prove our main result for this warm-up.


\begin{restatable}{theorem}{ListingUBcc}
\label{thm: ub-listing}
Given $n \le \memory \le n^{2-2/k}$, there exists a deterministic algorithm  for listing all the $k$-node subgraphs within $O(n^{k-2}/\memory^{k/2 - 1})$ rounds in the $\mu$-\clique model. 
\end{restatable}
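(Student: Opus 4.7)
The plan is to adapt the Dolev-Lenzen-Peled $k$-clique listing scheme to respect the local memory constraint by choosing the partition granularity as a function of $\mu$. The case $k=2$ is trivial since every edge is known locally, so I focus on $k \geq 3$. Partition $V$ deterministically into $t = \lceil n/s \rceil$ groups $V_1, \dots, V_t$ of equal size $s = \Theta(\sqrt{\mu})$ (e.g.\ by vertex ID modulo $t$). The atoms of work are the $T = \Theta(t^k) = \Theta(n^k/\mu^{k/2})$ size-$k$ multisets of groups; to each such multiset $\{V_{i_1}, \dots, V_{i_k}\}$ assign a unique responsible node via lexicographic enumeration. Because $\mu \leq n^{2-2/k}$ guarantees $T \geq n$ (and in general $T/n = \Theta(n^{k-1}/\mu^{k/2})$), this distributes the work in $B = \lceil T/n \rceil = O(n^{k-1}/\mu^{k/2})$ batches so that every node owns at most one multiset per batch, and every $k$-clique of $G$, classified by the multiset of groups that contain its vertices, lies entirely within exactly one such multiset.

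In each batch, the responsible node must collect the subgraph induced on $V_{i_1} \cup \dots \cup V_{i_k}$, which contains at most $\binom{ks}{2} = O(k^{2}\mu) = O(\mu)$ edges, and therefore fits within its $\mu$-word memory. Aggregated over one batch, the total routing volume is $O(n\mu)$ messages. Subdivide each batch into $\lceil \mu/n \rceil$ sub-rounds in which each node is the target of $O(n)$ incoming messages. Since the partition and the deterministic multiset-to-node mapping are common knowledge, every sender can locally determine, using only its own $O(n)$-word adjacency list, which of its incident edges to dispatch to which responsible node in each sub-round. Lemma~\ref{load-balanced-routing-clique-switch} then executes each sub-round in $O(1)$ rounds of $\mu$-\clique. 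Once a node holds all $O(\mu)$ edges of its current multiset, it enumerates the $k$-cliques of the induced subgraph locally, outputs those whose group-multiset matches the one it owns, and clears memory for the next batch. Multiplying yields $B \cdot O(\mu/n) = O(n^{k-2}/\mu^{k/2-1})$ total rounds, as required.

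The main obstacle is guaranteeing that the routing stays balanced on both the sender and the receiver side while respecting the $\mu$-word memory cap: in particular, a sender cannot afford to precompute and store its entire list of outgoing messages for the batch, which can be as large as $\Theta(\mu)$. Lemma~\ref{load-balanced-routing-clique-switch} is tailor-made for this regime, since the sender's schedule is a deterministic function of public data (partition, multiset ordering, batch index, and sub-round index) together with its own adjacency list, and can therefore be reconstructed on the fly without any auxiliary storage. Correctness is immediate from the observation that each $k$-clique, labelled by its group-multiset, is entirely visible to exactly the node responsible for that multiset, and a canonical output rule at the responsible node ensures each clique is reported exactly once. Determinism follows because both the partitioning step and the assignment of multisets to nodes are fixed in advance from the vertex IDs.
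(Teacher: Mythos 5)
Your proposal is correct, and the core strategy coincides with the paper's: partition the vertex set so that every responsible node must ingest only $\Theta(\mu)$ induced edges at a time, iterate over $\Theta(n^{k-1}/\mu^{k/2})$ such units of work distributed $n$-at-a-time, and invoke Lemma~\ref{load-balanced-routing-clique-switch} so each unit costs $O(\mu/n)$ rounds, giving $O(n^{k-2}/\mu^{k/2-1})$ total. The difference is purely organizational: you flatten the construction into a single-level partition into $\Theta(n/\sqrt{\mu})$ groups of size $\Theta(\sqrt{\mu})$ and enumerate $k$-multisets of groups directly, whereas the paper retains the two-level Dolev--Lenzen--Peled scaffolding (first $n^{1/k}$ outer groups of size $n^{(k-1)/k}$, assigning one $k$-tuple of outer groups per master, then an explicit $(a,b,c)$ subset-cover inside each master's domain). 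Your flat formulation arguably makes the coverage invariant crisper (each clique is owned by the unique group-multiset labelling it) and avoids introducing the subset-cover abstraction; the paper's two-level version is a more literal adaptation of the classical $\clique$ algorithm and makes the connection to prior work explicit. Both yield the same round and memory bounds and both hinge on the same load-balanced routing lemma to cope with senders that cannot store their full outgoing message list.
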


\begin{proof}
Our first step is to split the graph $G$ into $n^{1/k}$ sets of subgraphs, each of size $n^{(k-1)/k}$ nodes, and assign every $k$-tuple of the sets to a unique \emph{master node} for listing $k$-cliques. Thus, each node is a master node that is responsible for $k$-cliques consisting of only $kn^{(k-1)/k}$ nodes. To list the cliques at each master node, we use the following approach.

\begin{definition}
An \emph{$(a, b, c)$ subset cover} is a sequence of sets $(S_1, \ldots, S_z)$ for some value $z$, such that (i) for all $1\leq i \leq z$ we have $S_i \subseteq \{1, \ldots, a\}$ and $|S_i|=b$, and (ii) for every set of $c$ elements $L = \{l_1, \ldots, l_c\} \subseteq \{1, \ldots, a\}$, there exists a set $S_i$ such that $L \subseteq S_i$.
\end{definition}

To construct a subset cover, partition the $a$ elements into sets of size $b/c$, denoted by $T_1, T_2, \ldots, T_{ac/b}$. Then, for each $c$-tuple $(T_{i_1}, \ldots, T_{i_c})$ of these sets, let $S_i = \cup_j T_{i_j}$. There are in total $z=(ac/b)^c$ tuples, and it is easy to verify that $(S_1, \ldots, S_z)$ satisfies the requirements of a subset cover.  

The way we use $(a,b,c)$ subset covers is as follows. For each master node, we solve the $k$-clique listing problem for its $kn^{(k-1)/k}$ nodes by using a construction of an $(kn^{(k-1)/k}, \sqrt{\memory}, k)$ subset cover. This construction is predetermined and requires no communication. Then, for $z$ iterations, in iteration $i$, all the edges $\{u, v\}$ such that $u, v \in S_i$ are sent by their endpoints to the master node, and thus it can list all the $k$-cliques in $S_i$. 

\paragraph{Analysis:}
First, it is easy to see that this satisfies the $O(\memory)$ memory constraint at the master node because we only send to it $(\sqrt{\memory})^2 = \memory$ edges during an iteration.

It remains to bound the number of rounds. Note that we have $z=(ac/b)^c$ in our construction, and by plugging in the values for $a,b,c$, we have $((kn^{(k-1)/k})k/\memory^{1/2})^k = (k^{2k})(n^{k-1}/\memory^{k/2}) = O(n^{k-1}/\memory^{k/2})$ iterations.  



For $\Theta(n) \le \memory \le \Theta(n^2)$, we use $\memory/n$ rounds to process each subset $S_i$. Indeed, using Lemma~\ref{load-balanced-routing-clique-switch}, since each master node receives $O(\mu)$ messages, the round complexity for processing one subset is $O(\memory / n)$. The round complexity is thus $O((n^{k-1}/\memory^{k/2}) \cdot (n/\memory)) = O(n^{k-2}/\memory^{k/2 - 1})$.

Note that if $\memory > \Theta(n^{2-2/k})$, then the memory is large enough to store all the edges in the assigned subgraph component (with $O(n^{(k-1)/k})$ nodes and $O(n^{2-2/k})$ edges). In that case, the only communication is to send all the $O(n^{2-2/k})$ edges to the master node, thereby requiring only $O(n^{1-2/k})$ rounds, due to Lenzen's routing scheme~\cite{Lenzen13}, which allows sending and receiving $O(n)$ messages by each node in $O(1)$ rounds. 
\end{proof}

\subsection{The Upper Bound for Triangle Listing in $\mu$-\congest}
\label{subsec:listingUBcongest}

We are now derive our \switch algorithm for clique listing.

\ListingUBcongest*

In the \congest model, this result is shown in \cite{ChangPSZ21}. We desire to replicate their results in the \switch model, however, many parts of their algorithm break due to the memory constraints in \switch. Thus, we show a variety of new techniques to overcome these challenges.

We caution the reader that as it is crucial in the \switch model to not exceed the memory limit per node, the following proofs are extremely detailed in order to ensure correctness. Therefore, they are deferred to \Cref{appendix:triangle-listing}.

\section{Simulation of Streaming Algorithms}
\label{sec:streaming}
In the centralized setting, handling memory constraints is typically done using {streaming} algorithms.
Such algorithms make $p\ge 1$ passes over the data (e.g., graph edges) and are allowed to store only a limited amount $M$ of information at any point in time. 
In this section, we explore how such algorithms can be leveraged for designing \switch algorithms, by simulating them at the nodes of the network.

\paragraph{The edge-streaming model.} A special case of the streaming model, that well fits our motivation, is called edge-streaming.
Since the number of edges in the graph might be too large to fit in memory, algorithms may only store $M=o(n^2)$ words. 
The edges of the graph are processed as a stream of information, perhaps in several passes, and according to the setting, their order could be arbitrary, adversarial, or random.
When it is clear from context, we write ``streaming'' algorithm (instead of edge-streaming). 

A special case of the above is the popular \emph{semi-streaming} model, where $M=O(n \cdot  \text{polylog } n)$ \cite{feigenbaum2005graph}.
As some tasks have superior guarantees (use fewer passes or obtain better approximations) when edges arrive in random order, we discuss how to shuffle the edges distributedly to emulate random order in \cref{subsec:random-order}. 

We start by assuming a single \emph{simulator node} which runs the edge-streaming algorithm. 
The vanilla approach of collecting all edges at each pass to a single simulating node takes $O(n\cdot \Delta\cdot p)$ rounds, where $\Delta$ is the maximum degree of the graph. Instead, we propose a distributed \switch simulator of the edge-streaming model that completes in just $O(n\cdot (\Delta+p))$ rounds when $\mu=\Omega(n)$. In contrast, we show that any single-node simulation of a $p$-pass streaming algorithm requires $\Omega(n\cdot \Delta\cdot p)$ rounds if $\memory = o(n)$. Note that the benefit of an $o(n)$ simulator would have been small anyhow {since even elementary graph properties like connectivity are intractable with less space~\cite{feigenbaum2005graph}.}

To adapt edge-streaming algorithms to \switch, we need to consider the overhead of sending the edges to the simulator node (which runs the streaming algorithm). 
We show that by picking the largest degree node as the simulator node, we can avoid collecting the edges from scratch at each pass.
%
%
%
The following theorem summarizes our result.


\StreamingSimulation*

\begin{proof}
Let $v$ be the node with the largest degree in the graph, i.e., $deg(v)=\Delta$. Since the number of edges is $|E|\le n\cdot \Delta$, we have that the neighbors of $v$ have enough memory to (collectively) store all the edges of the graph, which can be achieved by storing $n$ edges at each neighbor.
Our simulator then first forwards all the edges to $v$'s neighbors, after which they stream the edges one by one for each of the $p$ passes. Observe that it requires at most $O(n\cdot \Delta)$ rounds to collect the edges at the neighbors (by forwarding edges to the simulator node taking $O(n\cdot\Delta)$ rounds and then routing them to the neighbors), and another $O(n\cdot p)$ rounds to send them to the simulator, giving a total of $O(n\cdot (\Delta+p))$ rounds. 
Note that this improves over the simple solution of collecting the edges from scratch at each round, which would take $O(n\cdot \Delta\cdot p)$ rounds. 
%
\end{proof}


In contrast, we show that using sublinear memory does not allow efficient simulation, as stated in the following theorem.

\StreamingSimulationLB*

\begin{proof}
    Consider a graph with $n/(\Delta-1)$ cliques of size $\Delta-1$ that are connected in a cycle. Formally, the set of nodes is 
    $V = C_0\cup\ldots\cup C_{n/(\Delta-1)-1}$, where $C_i=\set{v_{i,0},\ldots,v_{i,\Delta-2}}$, and the set of edges is $E=\set{\set{v_{i,j},v_{i,k}}\mid i\in\brackets{n/(\Delta-1)}, j,k\in[\Delta-1], j\neq k} \cup \set{\set{v_{i,0},v_{(i+1\mod (\Delta-1)),0}}}$.
    Suppose we choose node $v$ within the $i$-th clique as the simulator node. The neighbors of $v$ consist of $\Delta - 2$ nodes within the clique and perhaps two additional nodes from the $(i\pm1)$-th cliques. 
    The nodes in $v$'s clique can only store at most $\mu \cdot (\Delta-1) < n\cdot\Delta/4$ edges. Therefore, the remaining $\Omega(n\cdot \Delta)$ edges of the graph are stored outside the $i$-th clique. Since only two edges connect $C_i$ to the rest of the graph, sending these $\Omega(n\cdot \Delta)$ edges require $\Omega(n\cdot \Delta)$ rounds, and this process has to repeat for each of the $p$~passes.
\end{proof}

\subsection{Mergeable Streaming Simulations}
\label{sec: mergeable streaming}
\label{subsec:merge-reduce}
In this section, we consider general streaming algorithms (not necessarily edge-streaming). In this model, we assume that every node in the graph has some initial input data $I_v = a_{v,1}, a_{v,2}, \dots, a_{v,t_v}$ where $a_{v,i} \in U$ for a universe of size $|U|=n^{O(1)}$, and our goal is to execute a streaming algorithm over the multi-set $\mathcal I=\bigcup_{v\in V}I_v$. Intuitively, a streaming algorithm on a stream $I$ can be viewed as maintaining a \emph{summary} $S(I)$ of the data $I$. The summary is simply the memory of the streaming algorithm after processing the stream. 

The algorithm in the proof of \Cref{thm:streamingSimulation} leverages the memory of neighboring nodes to reduce congestion for the simulator node. However, such an optimization only starts to reduce the round complexity from the second pass onward; during the first pass, all the input data $\mathcal I$ across the graph still has to be transmitted to the simulator node $v$, incurring a $\Theta(n \cdot \Delta)$ round complexity in the worst case, which in general cannot be improved. However, for some streaming algorithms satisfying \emph{mergeability}, we can further reduce the round complexity. 
Intuitively, mergeability allows us to run streaming algorithms on subsets of the data and then merge their summaries to enable queries about the complete data.

Our key idea is to partition the graph into several disjoint subgraphs, within each of which a streaming simulator is established to compute a solution for this subgraph. Then mergeability allows us to directly join solutions from different subgraphs together.

\subsubsection{One-way mergeable summaries}
Let us formally define this notion of mergeability, slightly modified from~\cite{agarwal2013mergeable}:
\begin{definition}   
A summary $S(I)$ of size $M$ is one-way mergeable if there exist two algorithms $A_1$ and $A_2$ such that, (1) given any $I$, $A_2$ creates a summary $S(I)$; (2) given any $M$-sized $S(I_2)$ produced by $A_2$ and any $M$-sized $S(I_1)$ produced by $A_1$ or $A_2$, $A_1$ builds an $M$-sized merged summary $S(I_1\cup I_2)$.
\end{definition}

We analyze the distributed complexity of simulating a one-way mergeable streaming algorithm.

\oneway*

\begin{proof}
    Our goal is to divide the graph into disjoint clusters, run a streaming algorithm in each cluster, and send the summaries to a leader node that merges them.

    We start by dividing the graph into disjoint clusters, each containing $s=\sqrt{|\mathcal I|\cdot M}$ elements. To construct the clusters, we first build a BFS tree. We define the algorithm recursively, starting from the leaves. 
    Every leaf $v$ does the following. If $t_v>s$, then $v$ is a cluster and $v$ its parent $0$. Otherwise, $v$ is not a cluster and it sends $t_v$ to its parent. Every non-leaf node $v$ that receives the following values $\set{\tau_{u_1},\ldots,\tau_{u_t}}$ for every child node ($u_1, \ldots ,u_t$) and creates clusters by greedily partitioning them into clusters $\set{u_i,u_{i+1},\ldots,u_j}$ for which $\sum_{i\le z\le j} \tau_z\in[s,3s]$. That is, the first cluster would be $\set{u_1,\ldots,u_i}$ for the smallest $i$ for which $\sum_{1\le z\le i} \tau_z\in[s,3s]$; the next clusters will start from $i+1$, etc. Let $u_\ell,\ldots,u_t$ be the nodes that are left unclustered (as $\sum_{\ell\le z\le t} \tau_z< s$).
    If $t_v\ge s$, then $\set{v}$ is a cluster and $\set{u_\ell,\ldots,u_t}$ is another cluster, and the message $v$ sends its parent is $0$.
    Otherwise (if $t_v< s$), $v$ sends its parent $t_v+\sum_{\ell\le z\le t} \tau_z$. Notice that the message $v$ sends is smaller than $2s$, and thus clusters must be of size at most $3s$.

    For each cluster $\set{u_i,\ldots,u_j}$, we pick the lowest ID node as the cluster's leader, which then uses $A_2$ to summarize the data (which can be of size at most $3s$) of all nodes in the union of the subtrees rooted at $\set{u_i,\ldots,u_j}$.  
     Note that although clusters are not necessarily connected, $u_i,\ldots,u_j$ may communicate with each other via $v$. This may cause congestion on the edge $(v, u_i)$. However, within a cluster, we simply pipeline all elements to the leader; therefore, the leader still receives all elements within $3s$ rounds. 
     

    The summaries are then communicated to a global leader node that runs $A_1$ to merge the summaries. There are at most $\min \set{n, 2|\mathcal{I}|/s}= O(\min \set{n,\sqrt{|\mathcal{I}|/M}})$ clusters. This is because either a cluster is of size at least $s$, or its parent $v$ has $t_v>s$ elements. Since there can be at most $|\mathcal{I}|/s$ such $v$ vertices, the overall number of clusters is no larger than $2|\mathcal{I}|/s$.
    This means that sending all summaries to the global leader can be done in $O\parentheses{\min \set{n M,\sqrt{|\mathcal{I}|\cdot M}} + D}$ rounds.
    Finally, the global leader broadcasts the result to all nodes.
\end{proof}

\subsubsection{Fully mergeable summaries}
We formally define this notion of mergeability by adapting it from~\cite{agarwal2013mergeable}. Notice that unlike in the one-way mergeable case, both merged summaries can be created by either $A_1$ or $A_2$.
\begin{definition} 
A summary $S(I)$ of size $M$ is fully-mergeable if there exist two algorithms $A_1$ and $A_2$ such that, (1) given any $I$, $A_2$ creates a summary $S(I)$; (2) given any $M$-sized summaries $S(I_1), S(I_2)$ produced by \underline{$A_1$ or $A_2$}, $A_1$ builds an $M$-sized merged summary $S(I_1\cup I_2)$.
\end{definition}

\fullym*

\begin{proof}
To leverage mergeability, we recursively break the problem into smaller ones. Namely, we start by building a BFS tree $T$ rooted at an arbitrary node $v$. We want to find a node $u$ such that all subtrees $\mathcal{T} = \set{T_1,\dots ,T_\ell}$ of $T\setminus\set{u}$ contain at most $|\mathcal I|/2$ information. Initially, we set $u=v$. If $u$ is not such a node, there must be a subtree with more than $|\mathcal I|/2$ information, so we set $u$ to the root of this subtree. We repeat this process until the desired $u$ is found. This process is guaranteed to terminate within $O(D)$ rounds.

We recurse this process for all $\mathcal{T}$ where $|\mathcal I|_i$ in the recursion step for $T_i$ is the sum of all inputs for nodes in the subtree. That is, we have $|\mathcal I|=I_u+\sum_{i=1}^\ell |\mathcal I|_i$. When the subtree is a singleton we simply run the summary on the data of the node.

Let us now describe how to merge summaries when returning from the recursion for some node $u$ and subtrees $\mathcal{T} = \set{T_1,\dots ,T_\ell}$. We propagate the summary of $T_i$ to its root in $O(M+D)$ rounds. Then we run a summary over $I_u$ and start merging the summaries of $\mathcal{T}$ in a pairwise fashion. That is, we match elements 
in $\mathcal{T}$ in a pairwise fashion and use $u$ to forward summaries between matched subtrees. When a subtree receives the summary data of its matched subtree, it merges the two summaries. We repeat this until the total size of the summaries is at most $\mu$. At this point we sent all summaries to $u$ for merging. Note that while we are using just $O(M)$ memory in the neighboring nodes, it seems that one cannot leverage the additional memory to accelerate the merging process. This is because if an algorithm merges the summaries of, e.g., $T_1,T_2,T_3$ together at a node in $T_1$, we have $2M$ words that need to be sent over the edge between $u$ and $T_1$, thus requiring $2M$ rounds. Instead, we could merge $T_3$'s summary into $T_2$ in $M$ rounds and then merge the result into $T_1$ in another $M$ rounds. Further, merging summaries at $u$ throughout the pairwise merging process does not asymptotically decrease the round complexity. 

Notice that we have $O(\log\frac{\Delta}{\mu / M})$ iterations of pairwise-merging, which means that the entire merging operation takes $O(M\cdot \log\frac{\Delta}{\mu / M})$ rounds.
Finally, observe that in each recursion step, we at least halve the amount of information in each subtree, thus the recursion ends after no more than $\log |\mathcal I|$ steps.
Put together, this give a round complexity of $O\parentheses{\log |\mathcal I|\cdot\parentheses{M\cdot \log\frac{\Delta}{\mu / M}+D}}$. Finally, note that by summarizing the information of each node $v$ for which $|I_v|>M$ before starting the recursion, we can refine the complexity to $O\parentheses{\parentheses{\log\min\set{n\cdot M, |\mathcal I|}}\cdot\parentheses{M\cdot \log\frac{\Delta}{\mu / M}+D}}$.
\end{proof}
\subsubsection{Composable summaries}
We present a third notation of mergeability that allows summaries to be merged on the fly, without needing to store them in memory first. That is, a node can receive summaries from its neighbors and merge them using $M$ memory in $O(M)$ rounds.
\begin{definition} 
A summary $S(I)$ of size $M$ is fully-mergeable and composable if there exist two algorithms $A_1$ and $A_2$ such that, (1) given any $I$, $A_2$ creates a summary $S(I)$; (2) given any $M$-sized summaries $S(I_1),\ldots S(I_\ell)$ produced by {$A_1$ or $A_2$}, there exists a function $g$ such that $g(S(I^j)) = \set{w_i^j}_{i=1}^M$ and $A_1$ can process $\set{\set{w_i^j \mid j\in[\ell]}\mid i\in[M]}$ in a streaming fasion using $M$ memory to compute $A_1(\bigcup_{j=1}^\ell S(I^j))$.
%
\end{definition}

Using the above definition we can improve over the running time of the algorithm in Theorem~\ref{thm:fullym}.

\composable*
\begin{proof}
We repeat the same recursive process as in the proof of Theorem~\ref{thm:fullym} but merge all summaries of $u$'s subtrees directly at $u$. This means that each recursion step requires $O(M+D)$ rounds, for a total $O\parentheses{\parentheses{\log\min\set{n\cdot M, |\mathcal I|}}\cdot\parentheses{M+D}}$.
\end{proof}

\section{Discussion and future works}
\label{sec:discussion}

Some of our algorithms are provably tight, such as the triangle listing algorithm. We believe that listing larger cliques can also be done within the round complexity given in our lower bound, but this requires additional machinery. However, we do not rule out the possibility of improved algorithms in \switch for other tasks, and certainly there are tasks that are not addressed in our work -- all of these are intriguing future research directions. 

Moreover, once memory limitations are imposed, many distributed algorithms in other models do not work as is. For example, in the 
\local model \cite{Linial92}, where there are no bandwidth restrictions, there exist fast $(1\pm \epsilon)$-approximation algorithms for all covering/packing integer programming problems, such as minimum dominating set, and more~\cite{chang2023complexity,GhaffariKM17}. These algorithms are based on a ball-carving approach which inherently collects large neighborhoods, and hence if the memory is bounded, new approaches must be incorporated into this general algorithmic structure.

One can also consider the case where $\memory\leq\Delta$. This requires a notion of an \emph{adversary}. 
In the standard \congest model, the typical setting is that an  adversary (knowing the algorithm) selects a worst-case input graph (network topology). In \switch with very low memory, the order of messages received from neighbors can also affect certain algorithms.
An \emph{oblivious} adversary does this without knowing the random coin flips of the algorithm, while an \emph{adaptive} adversary may choose the order of arriving messages at round $r$ after knowing all of the random bits used by the algorithm up to round $r$. This opens additional research \mbox{directions for future work}.

\update{Finally, in real-world networking scenarios, the behaviors of network switches such as Tofino~\cite{Tofino} are characterized in a more complex way, which the \switch model does not capture. For example, they are inherently limited in their total computational capacity for processing each incoming message (packets), and do not natively support for-loops~\cite{ben2018efficient}. While packets can be recirculated~\cite{tofino-spec} to emulate unlimited \textit{per-packet computation}, the total packet-processing computation over all incoming packets of a switch remains a constant value. To make the model more faithful to the switch's computational capacity, we may want to further bound the per-round computational complexity by $O(n)$. Another example is that programmable switches feature limited support for complex arithmetic primitives. For instance, Tofino~\cite{Tofino} supports constant-time ternary lookup via its TCAM memory~\cite{hucaby2004ccnp}, but the TCAM memory capacity is extremely limited. Moreover, complicated arithmetic operations such as logarithms and divisions are not intrinsically supported, and these operations can only be approximated, which again consumes TCAM memory~\cite{ben2020pint}. We claim that extending \switch to model these practical switch hardware limitations is interesting to study, and we leave it as future work.}

\begin{acks}
Keren Censor-Hillel is supported in part by the Israel Science Foundation (grant 529/23).
Yi-Jun Chang is supported in part by the Ministry of Education, Singapore, under its Academic Research Fund Tier 1 (24-1323-A0001). Ran Ben Basat is supported in part by a gift funding from Huawei. Gregory Schwartzman is supported by the following research grants: KAKENHI 21K17703, KAKENHI 25K00370, JST ASPIRE JPMJAP2302 and JST CRONOS Japan Grant Number JPMJCS24K2.
\end{acks}


\bibliographystyle{plainurl}
\bibliography{reference}

\appendix
\section{Distributed expander decomposition and routing}\label{sect:expander-appendix}

In this section, we examine the space complexity of distributed algorithms for expander decomposition and expander routing, which are useful building blocks in designing distributed graph algorithms, particularly for distributed subgraph detection problems. 
All algorithms in this section are in the \congest model or in the \switch model whenever a memory size bound $\mu$  is specified. In this section, we allow the possibility that  the memory size $\mu$ of a node $v$ is a function of $\deg(v)$. The round-space tradeoffs presented in this section allow $\mu$ to go below $\deg(v)$.

We begin with some basic graph terminology. Let $G=(V,E)$ be any graph, For any set of nodes $S \subseteq V$, we define the \emph{volume} $\vol(S)$ of $S$ as $\sum_{v \in S} \deg(v)$, and we define the \emph{conductance} $\Phi(S)$ of $S$ as the number of edges crossing $S$ and $V \setminus S$ divided by $\min\{\vol(S), \vol(V \setminus S)\}$. Intuitively, the conductance of $S$ is a measure of the size of the cut $(S, V \setminus S)$ that takes the volume of $S$ and $V \setminus S$ into consideration. 
We define the conductance $\Phi(G)$ of a graph $G$ as the minimum value of $\Phi(S)$, ranging over all $S\subseteq V$ such that $S \neq \emptyset$ and $S \neq V$.

We say that a graph $G=(V,E)$ is an \emph{$\phi$-expander} if $\Phi(G) \geq \phi$. An \emph{$(\epsilon, \phi)$-expander decomposition} is a partition $V=V_1 \cup V_2 \cup \cdots \cup V_s$ satisfying the following two conditions.
\begin{itemize}
\item The subgraph $G[V_i]$ induced by each part $V_i$ with $|V_i| > 1$ is a $\phi$-expander.
\item The number of edges crossing different parts is at most $\epsilon |E|$.
\end{itemize}

There are several natural variants of the definition of expander decomposition. The distributed algorithms in~\cite{changS19,ChangPSZ21} computes a stronger variant of expander decomposition that requires not only $G[V_i]$ but also $G\{V_i\}$ to be a $\phi$-expander. Here $G\{V_i\}$ is defined as the result of adding $\deg_{G}(v) - \deg_{G[V_i]}(v)$ self-loops to each $v \in V_i$ in $G[V_i]$, and each self-loop contributes one to the calculation of degree and volume. The distributed algorithms in~\cite{ChangS20} computes a weaker variant of expander decomposition where each $\phi$-expander in the decomposition is not required to be a node-induced subgraph: Their algorithm removes at most $\epsilon$ fraction of the edges such that each remaining connected component with more than one node is a $\phi$-expander. 

In the subsequent discussion, we call these two variants \emph{strong} and \emph{weak} $(\epsilon, \phi)$-expander decompositions. For $\phi = 1/\poly(1/\epsilon, \log n)$, a weak $(\epsilon, \phi)$-expander decomposition can be computed in $\poly(1/\epsilon, \log n)$ rounds with high probability~\cite{ChangS20}, and a strong $(\epsilon, \phi)$-expander decomposition can be computed in $\poly(1/\epsilon) \cdot n^c$ rounds with high probability, with an arbitrarily small constant exponent $c > 0$~\cite{changS19,ChangPSZ21}.

Intuitively, expander decomposition is useful because it allows us to focus on $\phi$-expanders, where we may apply an efficient \emph{expander routing} algorithm to let each node $v$ communicate with arbitrary $\deg(v)$ nodes and not just the local neighbors of $v$. 

Formally, expander routing is a routing problem where each node $v$ is the source and the sink for at most $L \cdot \deg(v)$ messages of $O(\log n)$ bits. In a $\phi$-expander, expander routing can be solved with high probability in $L \cdot \poly(1/\phi) \cdot 2^{O\left(\sqrt{\log n}\right)}$ rounds~\cite{GhaffariKS17,GhaffariL2018} or $L \cdot \poly(1/\phi, \log n)$ rounds after a preprocessing step that costs $\poly(1/\phi) \cdot O(n^c)$ rounds, with an arbitrarily small constant exponent $c > 0$~\cite{changS19,ChangPSZ21}.

\subsection{Existing expander routing algorithms}\label{sect:routing-algo}
In a \emph{lazy random walk} in a graph, in each  step, the walk stays at the current node $v$ with probability $1/2$ and moves to a uniformly random neighbor of $v$ with probability $1/2$. 
Same as~\cite{GhaffariKS17,GhaffariL2018}, we define the \emph{mixing time} $\tmix$ of a graph $G=(V,E)$ as the smallest integer such that for any $u \in V$, $v \in V$, and $t \geq \tmix$, we have 
$$\left|p_v^{t}(u) - \frac{\deg(u)}{\sum_{w \in V} \deg(w)}\right| \leq \frac{1}{|V|} \cdot \frac{\deg(u)}{\sum_{w \in V} \deg(w)},$$
where $p_v^{t}(u)$ is the probability that a lazy random walk starting at $v$ is at $u$ after $t$ steps. It is known~\cite{JerrumS89} that $\tmix = O\left(\frac{\log n}{\phi^2}\right)$ for any $\phi$-expander.

\subsubsection{The Ghaffari--Kuhn--Su algorithm}

The algorithm of Ghaffari, Kuhn, and Su~\cite{GhaffariKS17} solves the expander routing problem in $L \cdot \tmix \cdot 2^{O\left(\sqrt{\log n \log \log n}\right)}$ rounds with high probability. We briefly review their algorithm and verify that the algorithm costs $\mu = \deg(v) \cdot \left(L + \tmix \cdot \poly \log n\right)$ space per node $v \in V$.

The algorithm solves the expander routing problem in a recursive manner. In the first step, the algorithm simulates a random graph of $2|E|$ nodes of degree $O(\log n)$ by letting each node $v \in V$ locally simulate $\deg(v)$ virtual nodes. The edges in the random graph are computed by performing lazy random walks of length $2 \tmix$.  The embedding of the random graph has congestion $c = \tmix \cdot O(\log n)$ and dilation $d = 2\tmix$ in the sense that each edge in the random graph is simulated as a path of length $d$ where each edge in the original graph is used for at most $c$ times. The storage of the random graph already costs $\deg(v) \cdot \tmix \cdot  O(\log n)$ space per node $v \in V$.

There is one subtle issue that was discussed in a remark in~\cite{ChangPZ18,ChangPSZ21}: We need to be able to translate a routing request in the original graph to a routing request in the random graph. One natural approach is the following~\cite{ChangPZ18,ChangPSZ21}. Suppose we need to send a message from $u$ to $v$ in the original graph. Then the new source of the message will be a virtual node of $u$ sampled uniformly at random from all the virtual nodes corresponding to $u$, and the new destination of the message will be a virtual node of $v$ sampled uniformly at random from all the virtual nodes corresponding to $v$. Doing so guarantees that the maximum load per node in the random graph will be $O(L + \log n)$ with high probability.

For the next level of the recursion, the algorithm partitions the $2|E|$ virtual nodes in the random graph into $\beta$ parts $V_1 \cup V_2 \cup \cdots \cup V_{\beta}$, and then the algorithm computes a simultaneous embedding of $\beta$ random graphs of degree $O(\log n)$ to each part, with polylogarithmic congestion and dilation. The total number of levels is $k = O(\log_{\beta} n)$. In the implementation in~\cite{GhaffariKS17}, they set $\beta = 2^{O\left(\sqrt{\log n \log \log n}\right)}$, so $k = O\left(\sqrt{\frac{\log n}{\log \log n}}\right)$. Similarly, for each level of recursion, the cost of storing the embedding is $\deg(v) \cdot  \poly\log n$ space per node $v \in V$ in the original graph.

To transmit the messages across different parts, the approach taken in the Ghaffari--Kuhn--Su algorithm is as follows. If $v \in V_i$ has a message whose destination is in $V_j$ with $j\neq i$, then $v$ needs to first route the message to another node $u \in V_i$ that is adjacent to a node in $V_j$, and then the message can be transmitted to $V_j$ via an edge incident to $u$. Such a node $u$ is called a portal of $v$ for $V_j$. The portals can be computed by lazy random walks within the random graph embedded to $V_i$. The information about portals does not need to be stored.

In the last level of recursion where the number of nodes is already $2^{O\left(\sqrt{\log n \log \log n}\right)}$, partitioning the node set into $\beta = 2^{O\left(\sqrt{\log n \log \log n}\right)}$ is not helpful and we need to solve the routing problem without recursion. The algorithm for the last level of recursion was not described in~\cite{GhaffariKS17}. One natural approach is to simply deliver all the messages by running lazy random walks until all the messages are delivered. The round complexity of the algorithm is $2^{O\left(\sqrt{\log n \log \log n}\right)}$.

To summarize, the overhead in the space complexity is an additive term of $\deg(v) \cdot  \poly\log n$ per level of recursion, except for the first level where the cost is $\deg(v) \cdot \tmix \cdot O(\log n)$ for each node $v \in V$, so the overall memory complexity is  $\mu = \deg(v) \left(O(L) + \tmix \cdot \poly \log n\right)$ space per node $v \in V$. The extra $\deg(v) \cdot O(L)$ term is the memory cost for actually routing the messages. Note that we can always perform lazy random walks to balance the load per node to ensure that at any time the load per virtual node is $O(L + \log n)$ with high probability, so the load per node $v$ in the original graph is  $\deg(v)  \cdot O(L + \log n)$ with high probability throughout the algorithm.

\subsubsection{The Ghaffari--Li algorithm}
Subsequent to~\cite{GhaffariKS17}, Ghaffari and Li~\cite{GhaffariL2018} presented an improved algorithm that costs $L \cdot \tmix \cdot 2^{O\left(\sqrt{\log n }\right)}$ rounds.
The improvement in the round complexity comes from a more efficient emulation of random graphs in each level, with only constant  congestion and dilation, so they are able to set $\beta = 2^{O\left(\sqrt{\log n }\right)}$ and $k = O\left(\sqrt{\log n }\right)$. The improvement is achieved by reusing some randomness used in the previous level.

The improvement comes at a cost that in each level, the algorithm needs to emulate random graphs of degree $2^{O\left(\sqrt{\log n }\right)}$ and not $O(\log n)$. Storing the information about the incident edges and the corresponding path in the underlying graph already costs an overhead of $2^{O\left(\sqrt{\log n }\right)}$ in the space complexity. In particular, even the cost of building the first-level random graph is already $\deg(v) \cdot \tmix \cdot 2^{O\left(\sqrt{\log n }\right)}$ space per node $v\in V$. In its current form, the Ghaffari--Li is not as space-efficient as the Ghaffari--Kuhn--Su algorithm.
The space complexity of the Ghaffari--Li algorithm is $\mu = \deg(v) \cdot \left( O(L) + \tmix \cdot 2^{O\left(\sqrt{\log n }\right)}\right)$ per node $v\in V$.

\subsubsection{Polylogarithmic-round algorithms with preprocessing}

As observed in~\cite{ChangPSZ21,changS19}, the Ghaffari--Kuhn--Su algorithm can be turned into one that costs small polynomial rounds in preprocessing and then polylogarithmic rounds for each subsequent routing task. This is achieved by setting $\beta = O(n^c)$ for any arbitrarily small constant $c > 0$, so the number of levels can be a constant, meaning that the overall congestion and dilation for routing the messages is only $\poly \log n$. Specifically, the cost of the preprocessing is  $\tmix \cdot n^{O(c)}$ rounds and the cost for routing is $L \cdot \tmix \cdot \poly \log n$ rounds.
As we discuss below, this approach inherently needs polynomial space per node. We will show that the memory cost is $\mu = \deg(v) \cdot \left(O(L) +  \tmix \cdot \poly \log n +  n^{O(c)}\right)$ per node $v\in V$.

We briefly explain the source of the additive polynomial term $n^{O(c)}$.
The first reason is that now we need to store all $\beta-1$ portals per virtual node during the preprocessing because we do not know which portals will be used in each of the subsequent routing requests. This already adds a polynomial space overhead, as $\beta = O(n^c)$.

The second reason is that the last level of the recursion does not seem to have an implementation that is efficient in both 
 round and space complexities. The method discussed earlier is space-efficient but it costs super-linear rounds, which we cannot afford here as we aim for a polylogarithmic-round routing algorithm. The implementation of the last step was not discussed in~\cite{ChangPSZ21,changS19}. Below we describe a possible implementation. We locally store a $2 \tmix$-length random walk from $u$ to $v$ locally at $u$, for each pair $(u,v)$, where  $\tmix = O(\log n)$ for the random graph constructed in the Ghaffari--Kuhn--Su algorithm. To use this information in the routing algorithm, for each message needed to be sent from $u$ to $v$, we simply use the pre-computed walk from $u$ to $v$, so  the routing algorithm is round-efficient in that it costs only $\poly \log n$ rounds. The preprocessing time and the space complexity will be polynomial in the number of nodes, which is $\beta^{O(1)} = n^{O(c)}$.

\subsubsection{Deterministic expander routing} Chang and Saranurak~\cite{ChangS20} showed that expander routing can be solved in $L \cdot \poly(1/\phi) \cdot 2^{O\left(\log^{2/3}n \log^{1/3} \log n\right)}$ rounds deterministically. The bound was recently improved to $L \cdot \poly(1/\phi) \cdot 2^{O\left(\sqrt{\log n \log \log n}\right)}$ by Chang, Huang, and Su~\cite{chang2024deterministic}.
It is difficult to calculate the precise space complexity of these algorithms, as they are inherently complicated. These algorithms are not space-efficient because there must be a sub-polynomial and super-polylogarithmic factor in the space complexity $\mu$. These algorithms have a similar recursive structure to the Ghaffari--Kuhn--Su algorithm. The main difference is that here in~\cite{chang2024deterministic,ChangS20}, some deterministic algorithms are used in the computation of a simultaneous embedding of an expander to each part of the partition $V_1 \cup V_2 \cup \cdots \cup V_{\beta}$. The congestion $c$ and the dilation $d$ of the embedding are sub-polynomial and super-polylogarithmic.  Existing techniques in deterministic algorithms do not allow us to make $c$ and $d$ polylogarithmic.

\subsection{Round-space tradeoff for expander routing}\label{sect:routing-tradeoff}

For simplicity of presentation, in the subsequent discussion, we sometimes assume $\tmix = \poly \log n$.
From the discussion in \cref{sect:routing-algo}, we know that when $L \geq n^c$ for some constant $c > 0$, then expander routing can be solved in $L \cdot \poly \log n$ rounds with memory cost $\mu = \deg(v) \cdot O(L)$ per node $v\in V$. That is, we can simultaneously achieve the best-known round complexity with the ideal memory cost.  We summarize this observation as a lemma.

\begin{lemma}\label{lem:tradeoff-1}
In a graph $G=(V,E)$ with $\tmix = \poly\log n$, if $L$ is at least polynomial in $n$, then expander routing can be solved with $\mu = \deg(v) \cdot O(L)$  space per node $v \in V$ in $T = L \cdot \poly \log n$ rounds with high probability in \switch.
\end{lemma}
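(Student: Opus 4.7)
The plan is to instantiate the polylogarithmic-round expander routing algorithm with preprocessing from~\cite{ChangPSZ21,changS19} (described earlier in \Cref{sect:routing-algo}), and then tune its internal branching parameter $\beta$ against the assumed lower bound on $L$ so that the polynomial additive overhead gets absorbed into the $O(L)$ space budget.

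First I would recall from the discussion in \Cref{sect:routing-algo} that the preprocessing-based variant of the Ghaffari--Kuhn--Su algorithm, run with branching parameter $\beta = n^{c}$ and a constant number of recursion levels $k = O(1/c)$, solves expander routing in $\tmix \cdot n^{O(c)}$ preprocessing rounds plus $L \cdot \tmix \cdot \poly\log n$ routing rounds, using memory per node $v$ equal to
\[
\mu \;=\; \deg(v)\cdot\Bigl(O(L) \;+\; \tmix\cdot\poly\log n \;+\; n^{O(c)}\Bigr),
\]
where the two additive overheads come from (i) storing the embeddings and stored random walks for portals, and (ii) storing, for each pair of virtual nodes in the last level of the recursion, a precomputed $O(\tmix)$-length random walk.

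Next, I would use the hypothesis that $L$ is at least polynomial in $n$: fix $c'$ with $L \geq n^{c'}$. Choose the preprocessing constant $c$ strictly smaller than $c'$, so that $n^{O(c)} = O(L)$. Under the additional assumption $\tmix = \poly\log n$, the term $\tmix\cdot\poly\log n$ is also $\poly\log n = O(L)$. Plugging these bounds back into the space expression yields $\mu = \deg(v)\cdot O(L)$, as desired. For the round complexity, the preprocessing cost becomes $\tmix\cdot n^{O(c)} = \poly\log n\cdot n^{O(c)} = O(L\cdot \poly\log n)$, while the routing cost is $L\cdot\tmix\cdot\poly\log n = L\cdot \poly\log n$, so the total is $T = L\cdot\poly\log n$.

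The only remaining thing to verify is that the underlying algorithm indeed runs within the \switch memory bound at every intermediate step, not only at its peak: in particular, the load per virtual node during routing stays $O(L + \log n)$ with high probability (using the same random choice of source/destination virtual nodes as in~\cite{GhaffariKS17,ChangPSZ21}), so the $\deg(v)\cdot O(L)$ slot per node $v$ suffices to hold all in-flight messages. I expect the main subtlety to be precisely this bookkeeping at the last recursion level, where naively storing all pairwise random walks can still balloon the memory; however, since each virtual node only needs to store the walks originating at it (of which there are $n^{O(c)}$, each of length $\poly\log n$), and the walks are distributed across the $\deg(v)$ virtual nodes of $v$, the per-node cost is $\deg(v)\cdot n^{O(c)}\cdot \poly\log n$, which is again absorbed into $\deg(v)\cdot O(L)$ by our choice of $c < c'$. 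The correctness and high-probability guarantees follow directly from those of the underlying preprocessing-based Ghaffari--Kuhn--Su algorithm.
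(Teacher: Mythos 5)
Your proposal is correct and takes essentially the same route as the paper: both instantiate the preprocessing variant of the Ghaffari--Kuhn--Su algorithm with branching parameter $\beta = n^c$, invoke the memory bound $\mu = \deg(v)\cdot\bigl(O(L) + \tmix\cdot\poly\log n + n^{O(c)}\bigr)$ derived in \Cref{sect:routing-algo}, and absorb the non-$L$ additive terms into $O(L)$ using $\tmix = \poly\log n$ together with $L \geq n^{c'}$. The only nitpick is that ``choose $c$ strictly smaller than $c'$'' is not by itself enough, since $n^{O(c)}$ hides an absolute constant $d$ in the exponent; you need $c$ small enough that $dc \le c'$, which is what the paper implicitly does by allowing $c>0$ to be arbitrarily small.
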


Things are more complicated when $L = n^{o(1)}$, in which case there is a tradeoff between the round complexity and the space complexity depending on which routing algorithm is used. When $L = n^{o(1)}$, for all algorithms discussed in  \cref{sect:routing-algo}, the memory cost is always super-linear in $\deg(v)$ for each node $v \in V$. In particular, we may not set $\mu=O(n)$ even when $L=O(1)$.

To deal with the above issue, we demonstrate a method to obtain a smooth round-space tradeoff for any given expander routing algorithm. In the subsequent discussion, we only focus on the case of $L=O(\log n)$, since we already know that the case where $L$ is large is easy, and also observe that we can always reduce the parameter $L$ to $O(\log n)$ by processing the routing requests in batches.

We will show that for any given constant $\alpha > 0$, we can reduce the space complexity $\mu$ of the routing algorithm by a factor of roughly $\alpha$  at the cost of increasing the round complexity of the routing algorithm  by a factor of roughly $\alpha^2$. 
Here we only consider the space complexity of the routing algorithm itself and do not consider the memory cost of generating and storing the messages at the sources and the destinations.
Therefore, even though the round-space tradeoff allows us to reduce the space complexity to $\deg(v) \cdot o(L)$, such a space complexity may or may not be feasible, depending on the application.

\subsubsection{The algorithm for the round-space tradeoff}
The algorithm for the round-space tradeoff is as follows. Like the existing routing algorithms, we let each node $v \in V$ locally simulate $\deg(v)$ virtual nodes, so we have $2|E|$ total number of virtual nodes. 
As discussed earlier, we can translate each routing request in the original graph into a routing request between two virtual nodes such that the maximum load per node is still $O(\log n)$.

Each virtual node samples itself with probability $1/\alpha$. We then emulate a random graph of degree $O(\log n)$ with these sampled virtual nodes only. We only aim for delivering messages whose source and destination are both sampled virtual nodes.
This random graph can be constructed by letting each sampled virtual node initiate  $O(\alpha \log n)$ lazy random walks of length $2\tmix$. With high probability, $O(\log n)$ of them will end at a sampled virtual node. In the actual implementation, we simulate these random walks in $\alpha$ batches of size $O(\log n)$, in parallel for all sampled virtual nodes. At any time step of the parallel random walk simulation, the load of a node $v$ is at most $\lceil\deg(v)/\alpha \rceil \cdot O(\log n)$ with high probability, so each round can be simulated with $O(\log n)$ rounds in the \congest model. The round complexity of the random graph computation is $O(\alpha \tmix \log n)$. The space complexity of the random graph computation is $\lceil\deg(v)/\alpha \rceil \cdot O(\tmix \log n)$. We omit the details as they are similar to the proofs in~\cite{GhaffariKS17}.

After constructing the random graph, we run any existing expander routing algorithm on this random graph. In the routing task on the random graph, the maximum load per node can still be upper bounded by $O(\log n)$ with high probability.  The random graph has mixing time $O(\log n)$ with high probability. The embedding of the random graph has congestion $c = O(\tmix \log n)$ and dilation $d = 2\tmix$, so each round in the random graph can be simulated in $O(\tmix \log n)$ rounds in the original graph.

There is one subtle issue in implementing the above approach: Each sampled virtual node needs to know whether the destinations of its messages are also sampled virtual nodes. Same as~\cite{GhaffariKS17}, to deal with this issue, instead of doing the sampling with full randomness, we do it with $O(\log n)$-wise independent randomness. This is enough since we only need to show that certain bad events occur with probability at most $1/\poly(n)$. Storing these $O(\log n)$-wise independent random variables costs only an additive $\poly \log n$ term in the space complexity.

We will repeat the above procedure for  $O(\alpha^2 \log n)$ iterations. In each iteration, for each message, with probability $1/\alpha^2$, both its source and destination are sampled, and the message is successfully sent in that iteration. Therefore, after $O(\alpha^2 \log n)$ iterations, we have taken care of all messages. 

Roughly speaking, we save a factor of roughly $\alpha$ in the space complexity due to the fact that in each iteration, each node $v \in V$ in the original graph only has $\deg(v) / \alpha$ sampled virtual nodes in expectation. The increase in the round complexity is a factor of roughly $\alpha^2$ because we repeat for $O(\alpha^2 \log n)$ iterations.

\subsubsection{The round and space complexities}
We analyze the round complexity and space complexity of the algorithm. We write $T(n,\tmix)$ and  $S(n,\tmix) \cdot \deg(v)$ to denote the round complexity and space complexity of the given expander routing algorithm with $L=O(\log n)$. The round complexity and space complexity of the algorithm of the above round-space tradeoff with parameter $\alpha$ are as follows. 

The round complexity is 
$$O(\alpha^2 \log n) \cdot \left( O(\tmix \log n) \cdot \left( T\left(\poly(n), O(\log n)\right) +  O(\alpha \tmix \log n)\right)\right),$$
where $T\left(\poly(n), O(\log n)\right)$ is the cost of the routing algorithm in the random graph, $O(\alpha \tmix \log n)$ is the cost of constructing the random graph, $O(\tmix \log n)$ is the overhead in simulating one round of the random graph in the original graph, and $O(\alpha^2 \log n)$ is the number of iterations.

The space complexity is 
\begin{align*}
 &S\left(\poly(n), O(\log n)\right) \cdot 
 O(\log n) \cdot O\left(\lceil\deg(v)/\alpha \rceil + O(\log n)\right)\\
 &+  \lceil\deg(v)/\alpha \rceil \cdot O(\tmix \log n)\\
 &+ \poly\log n.  
\end{align*}
Here $S \left(\poly(n), O(\log n)\right) \cdot O(\log n)$ is the space complexity  of the routing algorithm in the random graph, as the graph has degree $O(\log n)$. To translate the space complexity into the space complexity of the original graph, we multiply this by $O\left(\lceil\deg(v)/\alpha \rceil + O(\log n)\right)$, which is the number of sampled virtual nodes for $v$. The term $\lceil\deg(v)/\alpha \rceil \cdot O(\tmix \log n)$ corresponds to the space complexity of the construction of the random graph, and the term $\poly\log n$ is the cost of storing $O(\log n)$-wise independent random numbers.

Let us do the calculation for the Ghaffari--Li algorithm~\cite{GhaffariL2018}. Suppose $\tmix=\poly \log n$ and $L = O(\log n)$. The above algorithm with parameter $\alpha$ has space complexity $$\lceil \deg(v)/\alpha \rceil \cdot  2^{O\left(\sqrt{\log n}\right)}$$ and round complexity $$\alpha^2 \cdot 2^{O\left(\sqrt{\log n}\right)}.$$

For example, with $\alpha = 2^{O\left(\sqrt{\log n}\right)}$, we may achieve the memory bound $\mu = O(n)$ with the same round complexity bound $2^{O\left(\sqrt{\log n}\right)}$ in the original Ghaffari--Li algorithm~\cite{GhaffariL2018}. We can afford to make $\mu$ sublinear in $n$. For example, with $\alpha = n^{0.1+o(1)}$, we can achieve $\mu = O(n^{0.9})$ with the round complexity $n^{0.2 + o(1)}$. We summarize the above discussion as a lemma.

\begin{lemma}\label{lem:tradeoff-2}
For any $\alpha \geq 1$, in a graph $G=(V,E)$ with $\tmix = \poly\log n$, expander routing can be solved with $\mu = \lceil \deg(v)/\alpha \rceil \cdot  2^{O\left(\sqrt{\log n}\right)}$ space per node $v \in V$ in $T = L \cdot \alpha^2 \cdot 2^{O\left(\sqrt{\log n}\right)}$ rounds with high probability in \switch.
\end{lemma}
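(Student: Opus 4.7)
The plan is to follow the round–space tradeoff scheme sketched in \cref{sect:routing-tradeoff}, instantiated with the Ghaffari--Li algorithm~\cite{GhaffariL2018} as the base routing primitive. First I would reduce to the case $L = O(\log n)$ by splitting the routing instance into batches of $O(\log n)$ messages per node and handling them sequentially; this contributes only a factor of $L$ in the round complexity and does not affect the per-node memory bound. Then, each node $v \in V$ locally simulates $\deg(v)$ virtual nodes, and each outgoing message is rerouted so that its virtual source and destination are chosen uniformly at random among the virtual copies of the actual source/destination; by a Chernoff-type argument this keeps the per-virtual-node load $O(\log n)$ w.h.p.

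The core step is a sampling reduction, repeated for $R = O(\alpha^2 \log n)$ independent iterations. In each iteration, every virtual node marks itself ``active'' independently with probability $1/\alpha$, using $O(\log n)$-wise independent random bits (stored locally in $\poly \log n$ space, and shared enough across $v$ and its message destinations so that each source can tell whether the destination is active). I would then construct, on the active virtual nodes only, a random regular graph of degree $O(\log n)$ by having each active virtual node launch $O(\alpha \log n)$ lazy random walks of length $2\tmix$, split into $\alpha$ batches so that at any moment each physical node $v$ carries only $\lceil \deg(v)/\alpha\rceil \cdot O(\log n)$ walk tokens w.h.p. This embedding has congestion $O(\tmix \log n)$ and dilation $2\tmix$, and its storage costs $\lceil \deg(v)/\alpha \rceil \cdot O(\tmix \log n) = \lceil \deg(v)/\alpha \rceil \cdot \poly\log n$ memory per node. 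On this random graph I run the Ghaffari--Li routing algorithm to deliver every message whose source and destination are both active; since a fixed (source, destination) pair is simultaneously active with probability $1/\alpha^2$, after $R$ iterations every message has been delivered w.h.p.

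For the complexity accounting, one iteration costs $O(\alpha \tmix \log n)$ rounds to build the random graph, plus $O(\tmix \log n) \cdot T_{\mathrm{GL}}(\poly(n), O(\log n)) = 2^{O(\sqrt{\log n})}$ rounds to simulate the Ghaffari--Li routing (each random-graph round is emulated in $O(\tmix \log n)$ original rounds because of the congestion/dilation bound). Multiplying by $R = O(\alpha^2 \log n)$ and the reduction factor $L$ yields the claimed $T = L \cdot \alpha^2 \cdot 2^{O(\sqrt{\log n})}$. On the memory side, the dominant term is the Ghaffari--Li storage on the sampled random graph, which costs $S_{\mathrm{GL}}(\poly(n), O(\log n))\cdot O(\log n) = 2^{O(\sqrt{\log n})}$ per virtual node times the number of active virtual copies of $v$, giving $\lceil \deg(v)/\alpha\rceil \cdot 2^{O(\sqrt{\log n})}$; the random-walk embedding and the independence seed contribute only lower-order terms, and we can discard the structures between iterations because each iteration is independent.

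The main obstacle I anticipate is controlling the memory during the random-walk construction rather than the routing itself: naively launching $O(\alpha \log n)$ walks from each of $\deg(v)/\alpha$ active virtual copies of $v$ would temporarily concentrate $\Theta(\deg(v)\log n)$ tokens at $v$, wiping out the factor of $\alpha$ we are trying to save. The fix is to process the walks in $\alpha$ batches of $O(\log n)$ walks per active virtual node, so that at every time step the physical load at $v$ is $\lceil\deg(v)/\alpha\rceil \cdot O(\log n)$ w.h.p.\ (via a standard Chernoff bound combined with the $\tmix$-step mixing guarantee). A secondary subtlety is that the source must know, \emph{before} the routing step begins, whether the intended destination is active; I would handle this by drawing the activity bits from an $O(\log n)$-wise independent hash family whose seed is broadcast once at the start of each iteration, so each node independently recomputes the activity status of any virtual node it needs to reason about. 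Once these two points are handled, the parameters slot in cleanly to give the stated bounds.
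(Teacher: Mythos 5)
Your proposal reproduces the paper's own round--space tradeoff argument essentially verbatim: simulate $\deg(v)$ virtual nodes, sample active virtuals with probability $1/\alpha$ via $O(\log n)$-wise independence, embed an $O(\log n)$-degree random graph on the sampled nodes with $\alpha$-batched random walks of length $2\tmix$, run Ghaffari--Li on the embedded graph, and repeat $O(\alpha^2 \log n)$ iterations, then account for the round and space costs exactly as in the paper. You also identify the same two subtleties (the batching of random-walk tokens, and sources needing to know destinations' activity) and resolve them the same way, so there is nothing to add.
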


As discussed earlier, we emphasize the space complexity of \cref{lem:tradeoff-2} does not consider the cost of preparing the messages at the source nodes and the cost of storing the messages at the destination nodes. In the worst case, such a cost can be as high as $\deg(v) \cdot L$ for a node $v \in V$.

Observe that \cref{lem:tradeoff-2} achieves sublinear-in-degree memory in a way that for each node $v \in V$, it is guaranteed that in each round, $v$ never receive more than $\mu = \lceil \deg(v)/\alpha \rceil \cdot  2^{O\left(\sqrt{\log n}\right)}$ messages. Therefore, the sublinear memory bound of \cref{lem:tradeoff-2} is not sensitive to the choice of the model for the ordering of the incoming messages within a round in the regime of $\mu < \deg(v)$.

\subsection{Existing expander decomposition algorithms}\label{sect:decomposition}

In the \congest model, expander routing is often used in combination with expander decomposition, so in this section, we examine the space complexity of existing distributed construction of expander decomposition~\cite{ChangPSZ21,ChangPZ18,changS19,ChangS20}.
Roughly, there are two different approaches in distributed expander decomposition algorithms: the nibble algorithm of Spielman and Teng~\cite{spielman2004nearly} and cut-matching games~\cite{khandekar2007cut,KRV09,RST14}. As we will later see, the distributed algorithms~\cite{ChangPSZ21,ChangPZ18,changS19} based on the nibble algorithm admits space-efficient implementation, and the distributed algorithms~\cite{ChangS20} based on cut-matching games seem to inherently require a large memory per node.

\subsubsection{Distributed expander decompositions via the nibble algorithm}\label{sect:decomposition-nibble}
It was shown in~\cite{ChangPSZ21,changS19} that a strong $(\epsilon,\phi)$-expander decomposition with $\phi = 1/\poly(1/\epsilon, \log n)$ can be computed in $O(n^c)$ rounds with high probability, where $c > 0$ can be any small constant. Recall that in a strong  $(\epsilon,\phi)$-expander decomposition, it is required that each $G\{V_i\}$ is a $\phi$-expander.
We briefly review their algorithm and check that the algorithm costs  $\deg(v) \cdot O(\log n)$ space per node $v \in V$.

At a high level, their algorithm recursively finds a low-diameter decomposition or a sparse cut until each remaining connected component is a $\phi$-expander.
A low-diameter decomposition removes a small fraction of edges to decompose the graph into small-diameter clusters. There are two popular ways to compute such a decomposition in polylogarithmic rounds. One is the randomized algorithm of Miller, Peng, and Xu~\cite{miller2013parallel}. The other one is the deterministic algorithm of Rozho\v{n} and Ghaffari~\cite{RozhonG20}.

The Miller--Peng--Xu algorithm is very simple: Each node grows its cluster by a BFS with a random start time, and then once the BFS reaches a node that has not joined any cluster, the node joins the cluster. The algorithm can be implemented with space complexity $\mu = O(1)$ per node, as each node just needs to remember its own random start time and the identifier of the cluster that it has joined. If multiple clusters reach a node simultaneously in a round, then a node will receive multiple messages in that round. Even in this case, the node does not need a super-constant space to process these messages, as the tie can be broken arbitrarily. 
The Rozho\v{n}--Ghaffari algorithm is inherently less space-efficient, as the algorithm requires storing a simultaneous embedding of trees where each edge $e$ can be involved in the embedding of $O(\log n)$ trees, so its space complexity is at least $\deg(v) \cdot O(\log n)$ per node $v \in V$.

Both two low-diameter decomposition algorithms can be used in the expander decomposition algorithm of~\cite{ChangPSZ21,changS19}. 
The Rozho\v{n}--Ghaffari algorithm was used in~\cite{ChangPSZ21}, and a modified version of the Miller--Peng--Xu algorithm was used in~\cite{changS19}. The purpose of the modification in~\cite{changS19} is to ensure that the bound on the number of inter-cluster edges holds not only in expectation but also with high probability. In the modified version, there is a preprocessing step that requires each node to estimate the size of its $k$-distance neighborhood for some $k$ up to some constant factor. After that, a partial cluster is computed. The computation of the partial cluster is simple and requires only BFS of a bounded number of steps. It is not hard to verify that the preprocessing step can still be done with $\mu = O(\log n)$ space per node.
To minimize the space complexity, we always use the Miller--Peng--Xu algorithm and not the Rozho\v{n}--Ghaffari algorithm for the low-diameter decomposition computation, so the memory cost is $\mu = O(\log n)$ space per node for this part of the expander decomposition algorithm.

Next, we consider the sparse cut computation in~\cite{ChangPSZ21,changS19}, which is based on the nibble algorithm of Spielman and Teng~\cite{spielman2004nearly}.
The nibble algorithm works roughly as follows. For some integer $t$, some node $v$, and some threshold $\epsilon'$, we calculate the probability distribution of a lazy random walk of length $t$ starting from $v$ in such a way that whenever the normalized probability at a node is smaller then $\epsilon'$, we truncate the probability to zero.
Here the normalized probability is defined as the random walk probability divided by the degree of the node. After finishing the calculation, we rank all the nodes in the non-increasing order of normalized probabilities.
It was shown in~\cite{spielman2004nearly} that if the underlying graph has a sparse cut, then with suitable choices of $t$, $v$, and $\epsilon'$, there exists an index $j$ such that taking the set of the first $j$ nodes in the above ranking gives us a sparse cut.

The sparse cut computation in~\cite{ChangPSZ21,changS19}  runs multiple instances of the nibble algorithm in parallel, with a guarantee that each node can participate in at most $O(\log n)$ instances of nibble algorithms in the sense that each node has non-zero random walk probability after truncation  for at most $O(\log n)$ instances of the nibble algorithms, with high probability.

After the random walk probability calculation,  the sparse cut computation is done in parallel, over all instances of the nibble algorithm. For each instance, the space complexity needed for the sparse cut computation is only $O(1)$ per node, as all we need to do is to find a tree spanning all nodes with non-zero random walk probabilities, and then these nodes can communicate along the tree edges. Although a node might receive a lot of messages, it does not need a large memory to process and store the messages, as all these messages can be combined together. For example, if we want to compute the volume of a subset of nodes, then we just need to compute the summation of the degrees over all nodes in the subset and do not need to remember the degree of each node in the subset. Since each node participates in at most $O(\log n)$ instances of the nibble algorithm, the overall space complexity needed is $O(\log n)$ per node for the sparse cut computation.

The bottleneck in the space complexity is the random walk probability computation. Although each node indeed only needs to maintain at most $O(\log n)$ truncated random walk probabilities at any time, a node $v \in V$ may still receive up to $\deg(v) \cdot O(\log n)$ random walk probabilities from its neighbors, so $v$ still needs to allocate a space of size  $\deg(v) \cdot O(\log n)$ to store and process these incoming messages. For this reason, the distributed expander decomposition algorithm of~\cite{ChangPSZ21,changS19} costs $\deg(v) \cdot O(\log n)$ space per node $v \in V$.

\subsubsection{Distributed expander decompositions via cut-matching games}\label{sect:decomposition-cutmatch}

A weak $(\epsilon,\phi)$-expander decomposition removes at most $\epsilon$ fraction of edges such that each remaining connected component is a $\phi$-expander. Using cut-matching games~\cite{khandekar2007cut,KRV09,RST14},  it was shown in~\cite{ChangS20} that for $\phi = 1/\poly(1/\epsilon, \log n)$, a weak $(\epsilon, \phi)$-expander decomposition can be computed in $\poly(1/\epsilon, \log n)$ rounds with high probability, and for $\phi = \poly(\epsilon)\cdot n^{-o(1)}$, a weak $(\epsilon, \phi)$-expander decomposition can be computed in $\poly(1/\epsilon)\cdot n^{o(1)}$ rounds deterministically.

Roughly speaking, a cut-matching game is a procedure that either computes an embedding of an expander to the underlying graph with small congestion and dilation or computes a sparse cut. Due to the need to store the embedding, this approach naturally requires super-linear-in-degree memory, just like the routing algorithms discussed earlier.

Since the algorithms of~\cite{ChangS20} do not require any local computation that is space-inefficient, the trivial space complexity bound $\mu = O(\deg(v)) \cdot T(n,\epsilon)$ holds, where $T(n,\epsilon)$ is the round complexity of the expander decomposition algorithm. For the randomized one,  $T(n,\epsilon) = \poly(1/\epsilon, \log n)$. For the deterministic one,  $T(n,\epsilon) = \poly(1/\epsilon)\cdot n^{o(1)}$. By very carefully examining the details of the algorithms, it might be possible to improve the trivial memory bound a bit, but it is unlikely that we can achieve $\mu = O(\deg(v))$ or better unless we make a significant change to the algorithm.

\subsection{Round-space tradeoff for expander decomposition}\label{sect:decomposition-tradeoff}
In this section, we show that the expander decomposition 
 algorithm of~\cite{ChangPSZ21,changS19}, which is based on the nibble algorithm of Spielman and Teng~\cite{spielman2004nearly}, can be implemented in such a way that yields a smooth round-space tradeoff. As discussed earlier, the distributed algorithm of~\cite{ChangPSZ21,changS19} inherently does not require the nodes to store a lot of information, 
but each node still needs a large space to process the incoming messages, and that will be the main difficulty in designing a space-efficient implementation.  

  Our goal is to show that for any $\alpha \geq 1$, the expander decomposition 
 algorithm of~\cite{ChangPSZ21,changS19} can be implemented to cost only $\mu = \lceil \deg(v) / \alpha \rceil \cdot O(\log n)$ space per node $v \in V$ and takes in $\alpha \cdot O(n^c)$ rounds with high probability, where $c > 0$ can be any small constant. That is, roughly speaking, the space complexity can be improved by a  factor of $\alpha$ at the cost of increasing the round complexity by a factor of $\alpha$. 
 
If each node $v$ has a memory of size $O(\deg(v))$, then $v$ can afford to remember all its deleted incident edges.
For the case $\mu = o(\deg(v))$, this is not doable.
In this case, instead of remembering all the deleted edges, we may simply let the nodes in the same connected component share a common identifier to indicate that they belong to the same component, and this costs only $O(1)$ space per node. When a node sends a message, it always attaches the identifier of its component. If a node $u$ receives a message from a node $v$ such that $u$ and $v$ belong to different components, then $u$ simply discards the message. 

We already know that the low-diameter decomposition needed in~\cite{changS19} can be computed with $O(\log n)$ space per node, so in the subsequent discussion, we focus on the implementation of the nibble algorithm.  As discussed earlier, after calculating the probabilities, the sparse cut computation costs only $O(\log n)$ space per node, as each node is involved in at most $O(\log n)$ instances of the nibble algorithm, so we only need to consider the calculation of the  random walk probabilities.

In the calculation of the  random walk probabilities, a node $v \in V$ may receive up to $\deg(v) \cdot O(\log n)$ random walk probabilities from its neighbors, so $v$ still needs to allocate a space of size  $\deg(v) \cdot O(\log n)$ to store and process these numbers in a straightforward implementation. To improve the above space complexity, we  allocate $\alpha$ rounds to do the above task. We let each instance of the nibble algorithm generate a random number  $x \in \{1,2, \ldots, \alpha\}$, and all the random walk probabilities associated with this instance will be transmitted at the $x$th round. Therefore, in each round of communication, with high probability, each node $v$ receives random walk probabilities from at most $\lceil \deg(v)/\alpha \rceil \cdot O(\log n)$ instances of the nibble algorithm, so $v$ only needs to allocate a space of size $\lceil \deg(v)/\alpha \rceil \cdot O(\log n)$ to store the summation of probabilities. After finishing the calculation of the summations, we may apply the truncation, and only $O(\log n)$ of them can be nonzero, with high probability, so the space of size $\lceil \deg(v)/\alpha \rceil \cdot O(\log n)$ can be reused in the subsequent rounds.
To summarize, at the cost of increasing the round complexity by a factor of $\alpha$, we may reduce the space complexity to $\lceil \deg(v) / \alpha \rceil \cdot O(\log n)$.

\begin{lemma}\label{lem:decomposition}
For any $\alpha \geq 1$ and $0 < \epsilon < 1/2$, 
a strong $(\epsilon,\phi)$-expander decomposition of  $G=(V,E)$ with $\phi = 1/\poly(1/\epsilon, \log n)$ can be computed with $\mu = \lceil \deg(v) / \alpha \rceil \cdot O(\log n)$ space per node $v \in V$ in $\alpha \cdot O(n^c)$ rounds with high probability in \switch, where $c > 0$ can be any small constant. 
\end{lemma}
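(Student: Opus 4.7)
The plan is to start from the $O(n^c)$-round strong expander decomposition algorithm of \cite{ChangPSZ21,changS19}, whose structure was already dissected in \Cref{sect:decomposition-nibble}, and upgrade only the step that is the actual memory bottleneck, namely the parallel invocation of the nibble random-walk procedure. The algorithm is a recursion that alternates between a low-diameter decomposition and a sparse-cut search, and I would keep the recursion skeleton unchanged. For the low-diameter decomposition I would use Miller--Peng--Xu rather than Rozho\v{n}--Ghaffari, since the former was already observed to run in $O(\log n)$ space per node (each node stores only its start time, its cluster ID and breaks ties arbitrarily on multi-reception). For bookkeeping of ``deleted'' edges incurred by recursive calls, I would not let a node store the identities of the removed edges (which could already exceed $\mu$); instead each node keeps a single component identifier and stamps outgoing messages with it, so a recipient simply discards mismatched stamps in $O(1)$ extra space.

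The next step is the sparse-cut computation via the nibble algorithm. As noted in the excerpt, once the truncated random-walk distributions are in place, the actual sparse-cut search (sorting by normalized probability and locating the right prefix) costs only $O(1)$ space per node per instance, and since each node participates in at most $O(\log n)$ instances w.h.p., this part already fits within $O(\log n)$ memory. So the entire tradeoff reduces to one question: how to perform one step of the truncated lazy random walk, in parallel across all $O(\log n)$ nibble instances in which a node participates, under the budget $\mu = \lceil\deg(v)/\alpha\rceil \cdot O(\log n)$.

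For this step I would replace each communication round of the original simultaneous random-walk update by a block of $\alpha$ sub-rounds. For each nibble instance, I draw an independent uniform slot $x \in \{1,\dots,\alpha\}$ (using $O(\log n)$-wise independent randomness, as in \Cref{sect:routing-tradeoff}, so the slot assignments need only $\poly\log n$ stored bits per node and are known to the participants of that instance). Every message a neighbor sends for instance $I$ is transmitted only in sub-round $x(I)$. By a standard Chernoff / limited-independence concentration argument applied to the $\deg(v)\cdot O(\log n)$ incoming probability messages at $v$, w.h.p.\ at most $\lceil \deg(v)/\alpha \rceil \cdot O(\log n)$ of them land in any single sub-round. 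Since the update within an instance is an addition (we only accumulate the incoming probabilities into one number per instance), each sub-round can process its arrivals on the fly, accumulate into the at-most-$O(\log n)$ running sums $v$ maintains for its active instances, apply the truncation rule, and then the temporary buffer of size $\lceil\deg(v)/\alpha\rceil\cdot O(\log n)$ is freed and reused for the next sub-round. Thus one step of the parallel random walks is simulated in $\alpha$ rounds using $\mu = \lceil\deg(v)/\alpha\rceil \cdot O(\log n)$ space.

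Putting the pieces together: the number of recursion levels and the number of steps per nibble call are unchanged, so the total round count is blown up by exactly the factor $\alpha$, giving $\alpha \cdot O(n^c)$ rounds; the space at every moment is dominated by the per-sub-round inbox of the random-walk step, which is $\lceil\deg(v)/\alpha\rceil \cdot O(\log n)$, so the memory bound $\mu$ claimed in the statement holds. The correctness of the nibble-based expander decomposition is unaffected, since we only reschedule \emph{when} messages are sent across the $\alpha$ sub-rounds, without changing which messages are sent or how they are combined. The main obstacle I expect is the concentration argument for the per-sub-round load: I have to verify that the $O(\log n)$-wise independence used for the slot assignment is enough to give a high-probability bound of $\lceil\deg(v)/\alpha\rceil\cdot O(\log n)$ at every node and every sub-round simultaneously, and that this bound continues to hold when $\alpha$ is as large as, say, $\Theta(\deg(v)/\log n)$ where one expects only $O(\log n)$ arrivals per sub-round and Chernoff-type tails become delicate. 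Modulo this, the argument follows the same template as the round-space tradeoff for expander routing proven in \Cref{lem:tradeoff-2}.
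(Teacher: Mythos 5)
Your proposal matches the paper's own argument for \cref{lem:decomposition} essentially step for step: keep the recursion skeleton of the nibble-based decomposition of~\cite{ChangPSZ21,changS19}, use Miller--Peng--Xu for the low-diameter decomposition, replace stored deleted-edge lists with a per-node component identifier used to filter messages, observe that the post-random-walk sparse-cut search already fits in $O(\log n)$ space, and spread the per-step random-walk aggregation over $\alpha$ sub-rounds by assigning each nibble instance a random slot so that each sub-round's inbox at $v$ has size $\lceil\deg(v)/\alpha\rceil\cdot O(\log n)$ w.h.p. The only place you go beyond the paper's exposition is in flagging the slot-assignment randomness (limited independence, and how it is agreed upon by the participants of an instance) and the per-sub-round concentration bound as details to verify; the paper leaves these implicit, and your caution there is warranted but does not change the argument's structure.
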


Unlike \cref{lem:tradeoff-2}, here in \cref{lem:decomposition}, when $\mu = o(\deg(v))$, we allow the possibility that a node $v$ sends and receives more than $\mu$ messages in one round. We emphasize that the correctness of the lemma is not sensitive to the choice of the model of how the incoming messages arrive within a round. In particular, the lemma still holds even if the ordering of the incoming messages in a round is \emph{adversarial} and $v$ can only process the messages in a streaming fashion with a space bound $\mu$.

If we want to improve \cref{lem:decomposition} to ensure  that each node never receives more than $\mu$ messages in a round even when $\mu = o(\deg(v))$, one way to achieve this is to pay an extra factor of $\alpha$ in the round complexity: Using $\alpha$ rounds to simulate one round of the underlying distributed algorithm and letting each node send its messages at a random round in these $\alpha$ rounds. With high probability, a node $v$ receives at most $\mu = \lceil \deg(v) / \alpha \rceil \cdot O(\log n)$ messages in a round.

\section{Triangle Listing in \switch} \label{appendix:triangle-listing}
\subsection{Tools for Clique Listing in \switch}

We begin by presenting a set of basic building blocks for clique listing.

Denote by $N(v)$ the neighbors of node $v$. We begin with the following straightforward claim.

\begin{theorem}\label{thm: ub-listing-congest-local-listing}
Any node $v$ can learn all of the $K_3$ which it is a part of within $O(\deg(v))$ rounds of the \switch model. Further, this uses only the edges incident to $v$ for communication.
\end{theorem}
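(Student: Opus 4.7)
The plan is to run the standard \congest ``triangle-query'' procedure at $v$ and to verify that it fits within the memory budget $\mu \geq \Delta \geq \deg(v)$ guaranteed by the model. First, $v$ fixes an arbitrary enumeration $u_1, u_2, \ldots, u_{\deg(v)}$ of its neighbors; this list is stored in $v$'s memory and uses $\deg(v) \leq \mu$ words. Then, for $i = 1, 2, \ldots, \deg(v)$, $v$ performs two communication rounds: (i) it broadcasts the identifier of $u_i$ to every neighbor along its incident edges, and (ii) each neighbor $w \in N(v)$ replies with a single bit indicating whether the edge $\{u_i, w\}$ exists. For every $w$ whose reply is ``yes'', $v$ immediately emits the triangle $\{v, u_i, w\}$ as output.

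The round count is clearly $2 \deg(v) = O(\deg(v))$, and only edges incident to $v$ are used for communication. For correctness, observe that any triangle containing $v$ has the form $\{v, u, w\}$ with $u, w \in N(v)$ and $\{u, w\} \in E$; when the outer loop reaches $u_i = u$, neighbor $w$ replies affirmatively and the triangle is listed. Conversely, a triangle is emitted only when the corresponding edge exists, so no spurious triangles are produced.

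The one point that needs checking under the \switch constraint is memory usage. At any moment, $v$ holds (a) the enumeration of $N(v)$, which is $O(\deg(v))$ words, and (b) the incoming response vector for the current iteration, which is $\deg(v)$ bits. Both fit within $\mu$ since we assume $\mu \geq \Delta \geq \deg(v)$. Because triangle outputs do not occupy $v$'s memory once emitted (as specified in the model), there is no accumulation of state across iterations and in particular no need to materialize the full edge set $N(v) \times N(v)$. Thus the memory constraint is respected throughout, and there is no real obstacle to the argument beyond this bookkeeping check.
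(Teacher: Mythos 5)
Your proof is correct and follows essentially the same approach as the paper: iterate over each neighbor $u_i$, broadcast its identifier, and have each $w \in N(v)$ report whether $\{u_i, w\} \in E$, outputting triangles on the fly. You additionally spell out the memory accounting and the correctness argument, which the paper leaves implicit, but the algorithm is identical.
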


\begin{proof}[Proof of \Cref{thm: ub-listing-congest-local-listing}]
Node $v$ iterates over its $\deg(v)$ neighbors, each time choosing some $u \in N(v)$. It broadcasts $u$ to $N(v)$ in $1$ round, and each $w \in N(v)$ responds whether or not the edge $\{u, w\}$ exists. Node $v$ now lists any triangle it sees, and continues to iterate over $N(v)$.
\end{proof}

We now slightly extend a claim presented in \cite{CensorCLL21} that says that if the average degree in a graph is small, then we can use the above algorithm to remove at least half of the nodes in the graph.
\begin{lemma}[Claim 4.1 in \cite{CensorCLL21}]\label{ub-listing-congest-helper1}
Let $c > 1$ be any value. Denote by $\gamma = 2m/n$ the average degree of a graph. If we remove all nodes with degree at most $c\gamma$, then we are left with less than $n/c$ nodes.
\end{lemma}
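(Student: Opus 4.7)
The plan is to prove this by a direct averaging argument based on the handshake lemma, specifically $\sum_{v \in V} \deg(v) = 2m = \gamma n$. The statement essentially says that at most $n/c$ nodes can have degree strictly greater than $c\gamma$, which is a standard Markov-type bound.

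First I would let $H \subseteq V$ denote the set of nodes remaining after the removal, i.e., the set of nodes $v$ with $\deg(v) > c\gamma$. By the handshake lemma applied to the whole graph, we have $\sum_{v \in V} \deg(v) = \gamma n$. Since every $v \in H$ contributes strictly more than $c\gamma$ to this sum (and all other degrees are non-negative), we obtain $\gamma n = \sum_{v \in V} \deg(v) \geq \sum_{v \in H} \deg(v) > |H| \cdot c\gamma$. Dividing through by $c\gamma$ (which is positive, since the claim is only nontrivial when $\gamma > 0$) yields $|H| < n/c$, which is precisely the conclusion.

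There is essentially no obstacle here; the only minor care needed is to handle the degenerate case $\gamma = 0$ separately (in which the graph has no edges, so no node has positive degree and after removal $H = \emptyset$, trivially satisfying $|H| < n/c$ for any $c > 1$). The inequality is strict because we assumed strict inequality $\deg(v) > c\gamma$ on $H$; equivalently, ``at most $c\gamma$'' on the removed set corresponds to ``greater than $c\gamma$'' on the retained set. This gives the claim verbatim as stated.
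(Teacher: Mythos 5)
Your proof is correct and is essentially the same argument as the paper's: both rest on the handshake lemma $\sum_v \deg(v) = 2m = \gamma n$ and the observation that the surviving nodes each contribute strictly more than $c\gamma$ to the degree sum. The only cosmetic difference is that the paper phrases it as a proof by contradiction (assume $|H| \ge n/c$, derive $m > m$), whereas you give the equivalent direct Markov-style bound.
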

\begin{proof}[Proof of \Cref{ub-listing-congest-helper1}]
    Assume for the sake of contradiction that we are left with at least $n/c$ nodes -- i.e. there are at least $n/c$ nodes with degree more than $c\gamma$. This means that there are more than $(n/c)\cdot(c\gamma/2) = n\gamma/2$ edges in the graph, i.e. $m > n\gamma/2$. However, by definition, $\mu = n\gamma/2$, which is a contradiction.
\end{proof}

We also slightly extend a claim from \cite{CensorCLL21} that says the following. Assume that we have some partitioning of the graph $G=(V, E)$ such that $V = V_1 \cup \dots \cup V_x$ for some $x$ and $V_i \cap V_j = \emptyset$ if $i\neq j$. For each $V_i$, denote $E_i = E \cap (V_i \times V_i)$, and let $C_i = (V_i, E_i)$ be called a \emph{cluster}. Further, denote by $\gamma = 2m/n$ the average degree of the graph. Let there be any value $c > 1$. Finally, call a cluster $C_i$ a \emph{low-average-degree cluster} if $2|E_i|/|V_i| < \gamma / c$ and denote by $V_{low}, E_{low}$ the set of all nodes, edges (respectively) in low-average-degree clusters. Then, the following holds.

\begin{lemma}[Claim 4.4 in \cite{CensorCLL21}]\label{ub-listing-congest-helper2}
Let $d > 1$ be some value and assume the definitions from the paragraph above. If the total number of edges inside clusters is at least a $1/d$ fraction of the total edges in the graph (i.e. $\sum_{i \in [x]}|E_i| \geq m/d$), then $|E_{low}| \leq d\cdot m/c$.
\end{lemma}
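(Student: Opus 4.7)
The plan is to prove the bound by a direct counting argument that uses only the definition of a low-average-degree cluster, without needing to invoke the hypothesis $\sum_i |E_i| \ge m/d$.

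First I would unpack the defining inequality. If $C_i$ is a low-average-degree cluster, then by definition $2|E_i|/|V_i| < \gamma/c$, which rearranges into the per-cluster edge bound
\[
|E_i| \;<\; \frac{|V_i|\,\gamma}{2c}.
\]
Next I would sum this inequality over exactly the low-average-degree clusters. Because the partition is node-disjoint, the low clusters contribute a disjoint collection of node sets, so $\sum_{i:\,C_i\text{ low}} |V_i| \le n$. Combining these and substituting $\gamma = 2m/n$ gives
\[
|E_{low}| \;=\; \sum_{i:\,C_i\text{ low}} |E_i| \;<\; \frac{\gamma}{2c}\sum_{i:\,C_i\text{ low}} |V_i| \;\le\; \frac{\gamma}{2c}\cdot n \;=\; \frac{m}{c}.
\]
Since $d > 1$, this immediately yields $|E_{low}| < m/c \le d\cdot m/c$, matching the claimed bound.

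The main obstacle here is essentially nonexistent: the proof is a one-line pigeonhole-style averaging argument using only that the low clusters are node-disjoint and that each is sparse relative to the global average degree $\gamma$. The hypothesis $\sum_i |E_i| \ge m/d$ is not actually consumed in deriving the stated conclusion, and in fact the sharper estimate $|E_{low}| < m/c$ falls out for free. The role of $d$ and of the hypothesis presumably becomes relevant only when this lemma is chained together with the surrounding recursive structure of the clique-listing algorithm (e.g.\ to argue that enough edges have been removed relative to the intra-cluster mass $m/d$); for the lemma statement itself, the slack in the bound makes the proof completely routine.
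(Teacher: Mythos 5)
Your proof is correct and uses the same core computation as the paper: bound each low cluster's edge count by $|V_i|\gamma/(2c)$, sum over low clusters using disjointness to get $|V_{low}|\le n$, and substitute $\gamma=2m/n$. The only cosmetic difference is that you argue directly while the paper phrases the same inequalities as a proof by contradiction, ending with $n<|V_{low}|$. You are also right that the hypothesis $\sum_i|E_i|\ge m/d$ is never consumed --- the paper's contradiction version doesn't use it either (the factor $d>1$ enters only to weaken the bound from $m/c$ to $dm/c$); that hypothesis matters for the downstream argument in \cite{CensorCLL21}, not for this lemma in isolation.
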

\begin{proof}[Proof of \Cref{ub-listing-congest-helper2}]
    Assume for the sake of contradiction that $|E_{low}| > d\cdot m/c$. As for every low-average-degree cluster $C_i$ it holds that $2|E_i|/|V_i|<\gamma/c$, then it also holds that $|E_i| < |V_i|\cdot \gamma/ (2c)$ and so from linearity we get that $|E_{low}| < |V_{low}|\cdot \gamma/ (2c)$. Put together, this emplies that $d\cdot m/c < |E_{low}| < |V_{low}|\cdot \gamma/(2c)$. Plugging in $2m/n = \gamma$, we get that $d\cdot m < |V_{low}|\cdot (2m/n)/2$, which means $d\cdot n < |V_{low}|$. As $d>1$, this implies that $n < |V_{low}|$, which is a contradiction.
\end{proof}

We prove the following auxiliary routines which are specific to the \switch model. Denote by $D$ the diameter of the given graph.

\begin{lemma}\label{ub-listing-congest-helper3}
Let $x$ be some value such that $0 \leq x \leq \mu$ and each node $v$ originally holds $x$ values $v_1, \dots, v_x$. Further, let $\circ$ be some aggregation function (i.e. commutative and associative). It is possible to construct a directed tree $T$ where: (1) $T$ spans $V$, (2) $T$ is a subgraph of $G$, (3) the diameter of $T$ is $O(D)$, and (4) for every node $v$ denote by $Des(v)$ the descendants of $v$ in $T$; for each $i \in [x]$, node $v$ knows the values $v^{tot}_i=\sum_{u \in Des(v) \cup \{v\}}u_i$, where $\sum$ applies $\circ$. This algorithm requires $O(x + D)$ rounds and each node holds at most $x \geq \mu$ additional messages during the algorithm compared to what it holds before the algorithm begins.
\end{lemma}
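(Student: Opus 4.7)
The plan is to construct $T$ as a BFS tree rooted at an arbitrarily chosen node $r$ and then perform a standard pipelined convergecast over $T$, transmitting the $i$-th coordinate one index per round along each tree edge. Constructing the BFS tree satisfies properties (1)--(3) in $O(D)$ rounds using only $O(1)$ extra memory per node (a parent pointer and its depth), and yields a tree of diameter at most $2D$. In an additional $O(D)$-round phase I would propagate subtree heights upward so that every node $v$ learns $h_v = \max_{u \in \mathrm{subtree}(v)} \mathrm{dist}(v,u)$ in $T$.

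For the aggregation I schedule each non-root node $v$ to transmit the partial aggregate for index $i$ to its parent at round $h_v + i$, where a leaf has $h_v=0$ and simply sends $v_i$ at round $i$. Since every child $u$ of $v$ satisfies $h_u \le h_v-1$, the index-$i$ contribution from $u$ arrives at $v$ by the end of round $h_u + i \le h_v - 1 + i$, so all children's contributions for index $i$ reach $v$ in time to be composed and forwarded at round $h_v + i$. Because $\circ$ is commutative and associative, $v$ maintains a single running aggregate per index $i$: upon receiving a child's value for index $i$ it is folded into this running value in place; once every child has reported for index $i$, the running value equals $v^{tot}_i$ and is stored in the slot originally holding $v_i$, which directly establishes property (4). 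The total round complexity is $O(D)$ for BFS and height propagation plus $h_r + x = O(x + D)$ rounds for the convergecast itself.

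The main obstacle is the memory accounting when the tree is unbalanced, since $v$ may begin receiving index-$i$ contributions from shallow subtrees well before the deep subtrees have reported, and must buffer these partial sums across many rounds. This is exactly where commutativity and associativity of $\circ$ are essential: at any round $v$ only needs to remember the current partial aggregate for each of the $x$ indices, not the individual child values, so by overlaying each partial aggregate onto the slot that originally stored $v_i$, the number of additional words held at any time never exceeds $x$. A secondary subtlety is that in a single round $v$ may receive one value per child on distinct edges, which is consistent with the model's $\mu \ge \Delta$ assumption and does not cause any congestion beyond one word per tree edge per round.
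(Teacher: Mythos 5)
Your proposal is correct and takes essentially the same approach as the paper: construct a BFS tree from an arbitrary root (giving properties (1)--(3) with diameter at most $2D$) and then perform a pipelined convergecast of the $x$ coordinates up the tree, using commutativity/associativity of $\circ$ to maintain only a running partial aggregate per index, for a total of $O(x+D)$ rounds and $O(x)\le\mu$ extra memory. Your explicit round-$h_v+i$ scheduling via an extra height-propagation phase is just one concrete way to instantiate the paper's terse ``using pipelining,'' so the two arguments coincide in substance.
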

\begin{proof}[Proof of \Cref{ub-listing-congest-helper3}]
Let $v_0$ be some arbitrary node, construct a BFS tree from $v_0$ and denote this tree by $T$. Clearly, $T$ spans $V$, is a subgraph of $G$, and has diameter of at most $2D = O(D)$. Using pipelining, we perform the $\circ$ operation up the tree in $O(x + D)$ rounds. At any given time, each node stores at most $x \leq \mu$ new values (on top of what it stored before the execution of the algorithm) -- specifically, the $x$ intermediate results of applying $\circ$ on the values received from its children in the tree.
\end{proof}

\begin{lemma}[Lemma 1 in \cite{ChangPSZ21}]\label{ub-listing-congest-helper4}
In $O(\log(n) + D)$ rounds we can assign each node a new identifier in $[|V|]$ with the following property: this algorithm requires $O(\log n)$ memory per node throughout its computation, and if this memory is stored after the computation, then every node $u$ can locally compute for every node $v$ the value $\left \lfloor \log(\deg(v)) \right \rfloor$.
\end{lemma}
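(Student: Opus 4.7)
The plan is to partition the nodes into $L+1 = O(\log n)$ buckets according to $b_v := \lfloor \log_2 \deg(v) \rfloor$, and to choose the new identifiers so that the bucket of $v$ is decoded from $v$'s identifier via a globally known prefix-sum vector that every node stores. The post-algorithm decoding is then immediate: given any new identifier $I$, a node locates the unique bucket $b$ with $P_b < I \le P_{b+1}$, and outputs $b = \lfloor \log \deg(v)\rfloor$.

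First I would build a BFS tree $T$ of $G$ rooted at an arbitrary leader in $O(D)$ rounds, costing $O(1)$ memory per node; each $v$ then computes $b_v$ locally in one round. Next, for every bucket $i \in \{0,\dots,L\}$ in parallel, aggregate the subtree counts $\mathrm{cnt}_v^i$ of bucket-$i$ descendants by invoking \Cref{ub-listing-congest-helper3} with $x = L+1 = O(\log n)$ additive values per node; this finishes in $O(\log n + D)$ rounds while each node holds only $O(\log n)$ words. The root $r$ now knows the global counts $c_i = \mathrm{cnt}_r^i$, forms the prefix sums $P_i = \sum_{j<i} c_j$, and broadcasts the length-$(L+2)$ vector $(P_0,\dots,P_{L+1})$ down $T$ by pipelining in $O(\log n + D)$ more rounds. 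After this phase every node persistently stores $(P_0,\dots,P_{L+1})$, which is exactly the $O(\log n)$-word state needed for decoding.

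It remains to assign each $v$ a unique intra-bucket rank $r_v \in \{1,\dots,c_{b_v}\}$ so that the final identifier $P_{b_v} + r_v$ is well-defined and uses the bucket window completely. I would carry this out with a pipelined top-down traversal of $T$: each node $v$ receives from its parent a length-$(L+1)$ vector of start offsets $(s_0^v,\dots,s_L^v)$, sets $r_v = s_{b_v}^v + 1$, and then, for each child $c$ in a canonical order (say ascending original ID), forwards the offset vector obtained from $(s_0^v,\dots,s_L^v)$ by adding the indicator $[i = b_v]$ and the bucket-$i$ subcounts of all earlier siblings of $c$. Because offsets are passed as $O(\log n)$-word vectors along tree edges and the depth of $T$ is $O(D)$, ideal pipelining achieves round complexity $O(\log n + D)$.

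The main obstacle I expect is enforcing the $O(\log n)$-word per-node memory bound during this last step: the naive implementation would make an internal node $v$ cache the length-$(L+1)$ subcount vectors of all $\deg(v)$ of its children, blowing up to $\deg(v)\cdot O(\log n)$ words. To circumvent this, I would have each child $c$ re-transmit its subcount vector on demand, precisely when $v$ is ready to emit $c$'s offset vector; $v$ then keeps only a single $O(\log n)$-word running cumulative-sum vector, discarding each child's contribution as soon as the corresponding offset has been produced. By interleaving this request/response pattern across buckets and across tree levels in a pipelined schedule, the $O(\log n + D)$ round bound and the $O(\log n)$ persistent memory bound are simultaneously preserved, completing the argument.
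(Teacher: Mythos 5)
Your approach is essentially identical to the paper's: bucket nodes by $\lfloor\log\deg(v)\rfloor$, aggregate per-bucket subtree counts up a BFS tree via Lemma~\ref{ub-listing-congest-helper3}, have the root carve $[|V|]$ into contiguous bucket intervals and pipeline sub-intervals down the tree, and broadcast the bucket counts so every node can decode any new identifier to a log-degree. The extra care you devote to the top-down pass (re-requesting per-child subcount vectors and keeping only a running cumulative offset so that a node never stores $\deg(v)\cdot O(\log n)$ words) addresses a detail the paper's proof leaves implicit under its pipelining claim, and your resolution is consistent with the intended implementation.
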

\begin{proof}[Proof of \Cref{ub-listing-congest-helper4}]
First, every node $v$ creates $\log (n)$ boolean values: $v_1, \dots, v_{\log(n)}$, where $v_i = 1$ if and only if $\left \lfloor \log(\deg(v)) \right \rfloor = i$. Then, using \Cref{ub-listing-congest-helper3}, we generate a tree $T$ in $O(\log(n) + D)$ rounds such that every node knows the point-wise sum of the boolean vectors of its descendants in $T$. Let $r$ be the root of $T$. Notice that $r$ now holds the frequencies of the log degrees of all the nodes in the graph -- that is, values $d_1, \dots, d_{\log (n)}$ such that $d_i$ is how many nodes $v$ exist such that $\left \lfloor \log(\deg(v)) \right \rfloor = i$. Node $r$ breaks $[|V|]$ into intervals matching $d_1, \dots, d_{\log (n)}$, i.e. the first interval is $[1, d_1]$, followed by $[d_1 + 1, d_1 + d_2]$, and so forth. It takes the first identifier in the interval matching its degree, i.e. the interval for $\left \lfloor \log(\deg(r)) \right \rfloor$ and forwards the remaining intervals to its children. Each node in the tree repeats this process until every node has a new identifier. This takes $O(\log (n) + D)$ rounds using pipelining. Finally, in a final $O(\log (n) + D)$ rounds, node $r$ broadcasts $d_1, \dots, d_{\log(n)}$ to all the graph -- given this information, every node $u$ can compute for every node $v$ the value $\left \lfloor \log(\deg(v)) \right \rfloor$, as it knows to which interval $v$ belongs.
\end{proof}

\subsection{General Routing Tool in \switch}

We desire to show the following general routing theorem. This subroutine proves very useful in our applications to clique listing in \switch.

\begin{theorem}\label{congest-general-routing1}
    Let $G$ be a graph with $\tmix = \poly\log n$, $\Delta$ its maximal degree, $\mu$ the memory bound, $1 \leq x \leq \Delta$ be some positive integer and $S \subseteq V$ be some nodes where the degree (in $G$) of every node in $S$ is at least $x$. Assume that all nodes in the graph know which nodes belong to $S$. Let there be a routing instance where nodes $V$ desire to send messages to nodes $S$, and the following hold:
    \begin{enumerate}
        \item Every node in $S$ is the target of at most $\mu$ messages.
        \item Each node $v \in V$ originally has at most $\deg(v)$ values $m_{v, 1}, \dots, m_{v, \deg(v)}$ which need to be sent to nodes in $S$. Each value might potentially be sent as multiple messages to different nodes in $S$ -- for value $m_{v, i}$, denote by $A_{v, i} \subseteq S$ the set of nodes in $S$ which need to receive $m_{v, i}$.
        \item Given $m_{v, i}$, all nodes can locally compute $A_{v, i}$.
    \end{enumerate}
    Then, this routing can be accomplished in $\mu/x \cdot n^{o(1)}$ rounds in \switch.
\end{theorem}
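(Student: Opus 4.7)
The plan is to invoke expander routing as a black-box primitive and reduce the given instance to a constant number of its calls. Since $\tmix = \poly\log n$, Lemma~\ref{lem:tradeoff-2} provides an expander-routing algorithm that delivers any instance with per-node source/sink load at most $L \cdot \deg(v)$ in $L \cdot n^{o(1)}$ rounds using memory $\mu$ per node (after choosing the tradeoff parameter $\alpha$ to absorb the $2^{O(\sqrt{\log n})}$ memory overhead into the $n^{o(1)}$ slack). Setting $L = \mu/x$ matches the target round count $\mu/x \cdot n^{o(1)}$ exactly, so it suffices to produce a routing instance whose source and sink loads are everywhere at most $L \cdot \deg(v)$.

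First I would verify the sink side, which is the easy direction: every $u \in S$ is the destination of at most $\mu$ messages and has $\deg(u) \geq x$, so its load is at most $\mu \leq (\mu/x)\cdot \deg(u) = L \cdot \deg(u)$. The source side is the core challenge, because a single value $m_{v,i}$ may have up to $|A_{v,i}|$ destinations, and naively sending each copy from $v$ can far exceed $L \cdot \deg(v)$. I would therefore split the routing into two expander-routing calls. In Phase~1, each $v$ sends one copy of each $m_{v,i}$ to a canonical representative $r_{v,i} \in A_{v,i}$, defined (say) as the smallest-identifier element of $A_{v,i}$; since $A_{v,i}$ is locally computable from $m_{v,i}$, both endpoints know $r_{v,i}$ with no extra communication. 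The source load at $v$ is then $\deg(v)$ and the sink load at each $u \in S$ is at most $\mu$, so one call to expander routing with $L = \mu/x$ handles Phase~1 in $\mu/x \cdot n^{o(1)}$ rounds.

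In Phase~2, each $r_{v,i}$ multicasts $m_{v,i}$ to the remaining destinations of $A_{v,i}$. The sink load stays bounded by $\mu$ at each $u\in S$, but the source load at a representative $r$ can be as large as $\mu \cdot \mu$ since $r$ may hold up to $\mu$ values, each with fan-out up to $\mu$. To tame this I would insert a one-round local-redistribution step along $S$'s incident edges: each $r \in S$ balances its $\leq \mu$ stored values across its $\geq x$ neighbors, so that every helper node holds at most $\lceil \mu/x\rceil$ values; the Phase~2 source load at any helper is then at most $\lceil\mu/x\rceil \cdot \mu = L \cdot \mu$, which fits within $L \cdot \deg(v)$ whenever $\deg(v) \geq \mu$, and for lower-degree helpers a second redistribution round along edges adjacent to $S$ finishes the balancing. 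The \switch memory constraint is respected throughout because at no point does any node store more than $O(\mu)$ values; it is merely the bookkeeping of this invariant that is the main obstacle to formalizing.

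Finally, if $G$ itself is not an expander and we only have $\tmix = \poly\log n$ on each cluster of a decomposition, I would first apply the expander decomposition from Lemma~\ref{lem:decomposition}, route within each expander cluster using the two-phase scheme above, and recurse on the $\varepsilon$-fraction of cut edges; with recursion depth $O(\log n)$ this contributes only another $n^{o(1)}$ factor, keeping the overall bound at $\mu/x \cdot n^{o(1)}$.
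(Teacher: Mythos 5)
Your high-level setup is correct: invoking Lemma~\ref{lem:tradeoff-2} with a target per-node load of $L\cdot\deg(v)$ for $L = O(\mu/x)$, choosing $\alpha$ to bring the memory term down to $O(\mu)$, matches what the paper does, and your Phase~1 (one copy of each $m_{v,i}$ to a canonical representative $r_{v,i}\in A_{v,i}$) is fine. But Phase~2 has a genuine gap. The quantity you need to control for Lemma~\ref{lem:tradeoff-2} is the total number of message-copies each node originates, not the number of distinct values it holds, and your redistribution only balances the latter. A single representative $r\in S$ can hold $\Theta(\mu)$ values each with fan-out $\Theta(|S|)$, giving a total outgoing load of $\Theta(\mu\cdot|S|)$ concentrated at $r$. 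Spreading $r$'s $\mu$ values over its $\geq x$ neighbors cuts each helper's \emph{count} of values by a factor of $x$, but each helper still has to emit $|A_{v,i}|$ copies per value it holds, so a helper's outgoing load can still be $\lceil\mu/x\rceil\cdot\Theta(|S|)$, which is nowhere near $L\cdot\deg(w)$. Worse, the helpers are arbitrary neighbors of $S$ and need not have degree $\geq x$, so the degree-weighted load bound of Lemma~\ref{lem:tradeoff-2} can fail even for small fan-outs; the proposed ``second redistribution round along edges adjacent to $S$'' is not an algorithm and there is no reason it terminates with all loads bounded.

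The paper's proof resolves exactly this replication bottleneck by a global interval assignment rather than by per-representative balancing: it lays out all (value, destination) pairs as disjoint intervals $I(m_{v,i})$ of total length at most $|S|\cdot\mu$ and also assigns each node of $S$ a contiguous block of $\mu$ indices, so that every value $m_{v,i}$ is responsible for roughly $|A_{v,i}|/\mu$ blocks of $S$-nodes, each of which will handle only $O(\mu)$ of the fan-out. The value is then replicated across those $S$-nodes by a doubling scheme over $O(\log|S|)$ rounds of expander routing, each round with $O(\mu)$ load per $S$-node, before the final delivery. Because all intermediate carriers are in $S$, they all have degree $\geq x$, which is what makes the $L=O(\mu/x)$ bound hold in every routing call. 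Your construction needs an analogous mechanism to shard the fan-out of a high-multiplicity value across \emph{many} nodes of $S$; a single representative plus local spill-over cannot do it.

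Two smaller remarks: (i) to even set up your Phase~1 expander-routing instance you should mirror the paper and use the aggregation tree of Lemma~\ref{ub-listing-congest-helper3} to compute and propagate the prefix sums of the $cost(v)$'s --- otherwise the representatives cannot be coordinated without blowing up memory; and (ii) your last paragraph invoking Lemma~\ref{lem:decomposition} is unnecessary and misreads the hypothesis: the theorem already assumes $\tmix(G)=\poly\log n$, so no decomposition and no recursion on cut edges is needed here.
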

\begin{proof}[Proof of \Cref{congest-general-routing1}]
The idea behind this subroutine is to first shift all the values from $V$ to $S$ in a way that every node $v \in S$ holds some $\mu$ messages (not necessarily the ones intended towards $v$), and then reshuffle the messages with $S$ using \Cref{lem:tradeoff-2} -- this allows us to use \Cref{lem:tradeoff-2} with a lower bound of $x$ on the degree of every node sending or receiving a message, as all nodes in $S$ have degree at least $x$.

Every node $v \in V$ locally computes the value $cost(v) = |A_{v, 1}|+\dots+|A_{v, \deg(v)}|$. Notice that if we sum the costs of all nodes, we receive at most $|S| \cdot \mu$ as each node in $S$ is the target of at most $\mu$ messages, and so no more than $|S| \cdot \mu$ messages are desired to be sent. We now use \Cref{ub-listing-congest-helper3} to compute a tree $T$ such that every node in the tree knows the sum of the costs of all the nodes in its subtree; this takes $\widetilde O(1)$ rounds as $\tmix = \poly\log n$. Using $T$, we split the interval $[1, |S| \cdot \mu]$ into non-overlapping subintervals such that each node $v$ has a contigous interval of size $cost(v)$ -- i.e. the root $r$ of the tree takes the interval $[1, cost(r)]$, and the splits $[cost(r), |S| \cdot \mu]$ across its children according to the total cost of each of their subtrees; in turn, this propagates down the tree.

Now, we construct a routing instance between $V$ and $S$ and use \Cref{lem:tradeoff-2} to execute it. We subdivide $[1, |S| \cdot \mu]$ into $S$ intervals ($[1, \mu]$, $[\mu + 1, 2\mu]$, etc.) such that each node in $S$ is responsible for one interval. Recall that we assume that every node in $V$ knows which nodes are in $S$, and so every node locally knows which node in $S$ is responsible for which interval. Next, every node $v \in V$ takes its subinterval of size $cost(v)$ and splits it into subintervals of sizes $|A_{v, 1}|, \dots, |A_{v, \deg(v)}|$, according to each of its values $m_{v, 1}, \dots, m_{v, \deg(v)}$. For a given value $m_{v, i}$, denote by $I(m_{v, i})$ the subinterval assigned to $m_{v, i}$ and by $S(m_{v, i}) \subseteq S$ the nodes in $S$ whose intervals overlap $I(m_{v, i})$. Finally, for every $m_{v, i}$, node $v$ desires to send the messages $m_{v, i}$ and the endpoints of $I(m_{v, i})$ to the first node in $S(m_{v, i})$. Notice that in this routing task, every $v \in V$ desires to send at most $3 \cdot \deg(v)$ messages, and each node in $S$ receives at most $3\mu$ messages. Thus, this routing task can be accomplished in $\mu/x \cdot n^{o(1)}$ rounds using \Cref{lem:tradeoff-2}.

Observe that now for each $m_{v, i}$, the first node in $S(m_{v, i})$ knows $m_{v, i}$ as well as $I(m_{v, i})$. Let $s \in S(m_{v, i})$ be this node. Using $I(m_{v, i})$, node $s$ can compute $S(m_{v, i})$. Thus, we now wish to duplicate the messages $m_{v, i}$ and $I(m_{v, i})$ such that all of $S(m_{v, i})$ know these -- i.e. node $s$ sends these messages to the second node in $S(m_{v, i})$, and then they respectively send the messages to the third and fourth nodes in $S(m_{v, i})$, and so forth. Each such iteration can be executed in $\mu/x \cdot n^{o(1)}$ rounds using \Cref{lem:tradeoff-2}, as each node in $S$ sends and receives at most $3\mu$ messages. Within $\log|S|$ iterations, every node in $S(m_{v, i})$ knows $m_{v, i}$ and $I(m_{v, i})$.

Finally, we are at the state that every node in $S(m_{v, i})$ knows $m_{v, i}$ and $I(m_{v, i})$ and recall that given each value $m_{v, i}$, any node in the graph can locally compute $|A_{v, i}|$. Thus, every node in $S(m_{v, i})$ can compute $|A_{v, i}|$ and knows to which of these nodes to forward the value $m_{v, i}$. As each node in $S$ desires to send and receive at most $\mu$ messages, this routing task is accomplished in $\mu/x \cdot n^{o(1)}$ rounds using \Cref{lem:tradeoff-2}.
\end{proof}

\subsection{Proof of \Cref{thm: ub-listing-congest}}

We are now ready to prove \Cref{thm: ub-listing-congest}.
To do so, we present a subroutine that removes some nodes and edges from the graph, while listing all the triangles these nodes or edges are involved in, such that either one of the following conditions hold: (1) a constant fraction of the nodes is removed, or, (2) a constant fraction of the edges is removed. This subroutine completes in $n^{1+o(1)}/\mu^{1/2}$ rounds, therefore, repeating this subroutine for $O(\log n)$ times concludes the proof of \Cref{thm: ub-listing-congest}. All that is left to do is to show this subroutine.

Denote by $\gamma = 2m/n$ the average degree of the graph -- this value exists but is not necessarily known to any node. The first action we perform in the subroutine is that for $x = \widetilde\Theta(1)$ (that is, a specific polylogarithmic value which is set to satisfy requirements further down in the proof), every node $v$ with degree $\deg(v) \leq x n^{1/3}$ lists all the $K_3$ which it is a part of, in $\tilde{O}(n^{1/3}) = \widetilde{O}(n/\mu^{1/2})$ rounds, using \Cref{thm: ub-listing-congest-local-listing}, as $x = \widetilde{O}(1)$. By using $c=2$ in \Cref{ub-listing-congest-helper1}, we get that if $2\gamma \leq xn^{1/3}$, then we removed at least half of the nodes in the graph at this stage.

Therefore, from here on we assume that $\gamma > (x/2)n^{1/3}$. We can assume this since our subroutine has to at least a constant fraction of the nodes or of the edges. If $\gamma \leq (x/2)n^{1/3}$, then we already removed at least half the nodes and so it does not matter what happens from here on. Therefore, we assume from here on that $\gamma > (x/2)n^{1/3}$. In the case that the assumption does not hold, then the following algorithm will not necessarily execute in $n^{1+o(1)}/\mu^{1/2}$ rounds, however, we can just stop it after $n^{1+o(1)}/\mu^{1/2}$ -- it is guaranteed that stopping it in the middle of operation does not hurt the correctness of the general algorithm, as no edge is removed from the graph without listing all of the triangles which it is a part of.

Now, execute the strong  $(\epsilon, \phi)$-expander decomposition  algorithm from \Cref{lem:decomposition} with $\epsilon = 1/\poly \log (n)$ and $\phi = 1/\poly \log (n)$. The algorithm costs $O(n^{0.001})$ rounds and uses only $O(\deg(v))=O(\mu)$ memory per node $v$ by setting $\alpha = \Theta(\log n)$ and $c = 0.0005$ in \Cref{lem:decomposition}. Refer to~\cref{sect:expander-appendix} for the precise definition of the decomposition computed by \Cref{lem:decomposition}. 

We summarize the key properties of the decomposition that we need.
Let $V = V_1 \cup V_2 \cup \cdots$ be the decomposition computed by \Cref{lem:decomposition}.
Let $E_i$ be the set of edges in the subgraph $G[V_i]$ induced by $V_i$.
Denote by $E_m = E_1 \cup E_2 \cup \cdots$ the edges of the clusters in the decomposition and by $E_r = E \setminus E_m$ the edges between the clusters. The following properties hold.

\begin{enumerate}
    \item $|E_r|\leq \epsilon|E| = |E|/\poly\log(n)$.
    \item The mixing time within each cluster is $O(\phi^{-2} \log n) =  \widetilde{O}(1)$.
    \item For a cluster $C = (V_i, E_i)$, and a node $v \in V_i$, it holds that $\deg_C(v) \geq \deg(v)/\poly\log(n)$, where $\deg_C(v)$ is the number of edges incident to $v$ in $C$. 
\end{enumerate}

Notice that  $|E_r|\leq |E|/\poly\log(n)$ implies $|E_m| \geq |E|/2$.
The third property follows from the fact that in a strong  $(\epsilon, \phi)$-expander decomposition, not only $G[V_i]$ but also $G\{V_i\}$ is an $\phi$-expander. For $G\{V_i\}$ to has conductance at least $1/\poly\log(n)$, it is necessary that $\deg_C(v) \geq \deg(v)/\poly\log(n)$ for all $v \in V_i$.


Using the definitions of \Cref{ub-listing-congest-helper2}, denote a cluster $C = (V_i, E_i)$ as a low-average-degree cluster if $2|E_i|/|V_i|<(x/2)n^{1/3}/8$. As we assume that $(x/2)n^{1/3} < \gamma$, this implies that if $C$ is a low-average-degree-cluster, then $2|E_i|/|V_i| < \gamma / 8$. Denote by $E_{low}$ all of the edges inside low-average-degree clusters. According to \Cref{ub-listing-congest-helper2}, as $|E_m| \geq |E|/2$, then $|E_{low}|\leq |E|/4$.

Denote by $E_{high} = E_m \setminus E_{low}$ all the edges in the clusters which are not low-average-degree clusters. From here on, the clusters whose edges are in $E_{high}$ are referred to as high-average-degree clusters. As $|E_m| \geq |E|/2$ and $|E_{low}| \leq |E|/4$, then $|E_{high}|\leq |E|/4$. We now proceed to listing all of the triangles with edges in $E_{high}$, which then allows us to safely remove all the edges in $E_{high}$. By doing so, we remove at least a constant fraction of the edges in the graph, as required.

Let $C = (V_i, E_i)$ be a high-average-degree cluster. We list all of the triangles involving at least one edge in $E_i$. Notice that these triangles can include other edges incident to $C$ from $V \setminus V_i$, but not edges which do not have incident nodes in $V_i$. As such, define $E_i^{'} = E \cap (V_i \times (V \setminus V_i))$ as the edges with one node in $V_i$ and the other outside it. Using this definition, we desire to list all triangles with edges in $E_i \cup E_i^{'}$. For simplicity, we denote by $V_i^{'} \subseteq V \setminus V_i$ the set of nodes outside of $V_i$ which have edges incident to them in $E_i^{'}$.

Compute $|V_i|$ using \Cref{ub-listing-congest-helper3} in $\widetilde{O}(1)$ rounds,
as the mixing time of $C$ is $\widetilde{O}(1)$ and so its diameter is $\widetilde{O}(1)$.
If $|V_i| \leq n^{1/3}$, then for any $v \in |V_i|$, we know that $\deg_C(v) \leq n^{1/3}$ and so according to the guarantees of the decomposition, $\deg(v) = \widetilde{O}(n^{1/3})$. As such, all nodes in $V_i$ can list all of the triangles they are involved in using \Cref{ub-listing-congest-helper1} and we are done.

Thus, we are now left with the case that $|V_i| > n^{1/3}$. Denote $n_i = |V_i|, m_i = |E_i|, m_i^{'} = |E_i^{'}|$ and also $\widetilde{m_i} = m_i + m_i^{'}$. Notice that due to the definition of the expander decomposition, it holds that $\widetilde{m_i} = \widetilde{O}(m_i)$. From here on, the term ``with high probability'' (``w.h.p.'') refers to $n_i$. However, as $n_i$ is a polynomial of $n$, then this implies high probability w.r.t. $n$ (by changing constants). Our resulting algorithm runs in $n^{1+o(1)}_i/\mu^{1/2} = n^{1+o(1)}/\mu^{1/2}$ rounds.

Using the notations from \cite{ChangPSZ21}, let $\delta = 2^{\left \lfloor \log(2m_i/n_i) \right \rfloor}$ be the average degree inside $C$ rounded down to the nearest power of 2. Notice that $\delta$ can be computed in $\widetilde {O}(1)$ rounds using \Cref{ub-listing-congest-helper3}. Denote $k_v = \deg_C(v)/\delta$. If $k_v < 1/2$, then $v$ is a \emph{class-0} node, and if $2^{i-2} \leq k_v < 2^{i-1}$, the $v$ is a \emph{class-$i$} node.
By executing \Cref{ub-listing-congest-helper4}, the nodes are relabelled such that every node can locally compute the class of every other node. Finally, notice that $\sum_{v \in V_i : k_v \geq 1/2} 2 k_v \geq n_i$.

At this moment, our algorithm diverges significantly from that of \cite{ChangPSZ21}. Recall that we observe the graph $C' = (V_i \cup V_i^{'}, E_i \cup E_i^{'})$ and desire to list all triangles in this graph. In the \congest algorithm, the nodes of $C'$ are uniformly at random split into $n^{1/3}_i$ sets $A_1, \dots, A_{n^{1/3}_i}$. As there are $n_i$ options to choose 3 sets, and also $\sum_{v \in V_i : k_v \geq 1/2} 2 k_v \geq n_i$, then if each node with $k_v \geq 1/2$ chooses between $2k_v$ and $4k_v$ triples of sets and learns the edges between them, then we will certainly list all triangles. Further, as the number of triples each node chooses is proportional to $k_v = \deg_C(v) / \delta$, then every node $v$ will receive a number of messages which is proportional to its degree, and this will guarantee the round complexity.

In our case of \switch, working with nodes from different \emph{classes} (i.e. nodes whose degrees differ greatly) is problematic as they will take different number of rounds to learn $\mu$ messages. Thus, we observe all the $\log(n_i)$ different node \emph{classes}. For each $j \in [|\log(n_i)|]$, denote by $K_j$ all nodes in $v$ which are \emph{class-$j$} nodes. Recall that $\sum_{v \in V_i : k_v \geq 1/2} 2 k_v \geq n_i$. Therefore, since $\sum_{v \in V_i : k_v \geq 1/2} 2 k_v = \sum_{j \in [\log(n_i)]}\sum_{v \in K_j} 2 k_v$, then there exists some $j^* \in [|\log(n_i)|]$ such that $\sum_{v \in K_{j^*}} 2 k_v \geq n_i/\log(n_i)$. From here on, we use only the nodes in $K_{j^*}$ for listing triangles

We must now break our algorithm into two cases: whether or not $\mu$ is large enough for a node to learn all of the edges between a given triplet of sets $A_i$. If $\mu$ is large enough, then we can proceed with an algorithm similar to that in the \congest model. Otherwise, we have to partition the nodes of $C'$ slightly differently.

In both cases we require the following lemma which is proven in \cite{ChangPSZ21}.

\begin{lemma}[2 in \cite{ChangPSZ21}]\label{ub-listing-congest-k3-helper}
Consider a graph with $\bar{n}$ nodes and $\bar{m}$ edges. We generate a subset $S$ of nodes, by letting each node join $S$ independently with probability $p$. Suppose that the maximum degree is $\Delta \leq \bar{m}p/20\log(\bar{n})$ and $p^2 \bar{m} \geq 400 \log^2(\bar{n})$. Then, w.h.p., the number of edges in the subgraph induced by $S$ is at most $6p^2\bar{m}$.
\end{lemma}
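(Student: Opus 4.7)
The plan is to prove this by a concentration of measure argument around $\mathbb{E}[X] = p^2 \bar{m}$, where $X := |E(G[S])|$ is obtained by linearity of expectation (each edge survives with probability $p^2$). The goal is then to show that $X$ exceeds $6\mathbb{E}[X]$ only with polynomially small probability in $\bar{n}$. The natural difficulty is that the indicators $X_e = \mathbb{1}[e \subseteq S]$ are not mutually independent: two edges sharing a vertex $v$ are positively correlated through $\mathbb{1}[v \in S]$.

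My preferred route handles this dependence by decomposing via Vizing's theorem: edge-color the graph with $\Delta + 1$ colors, yielding matchings $M_1, \ldots, M_{\Delta+1}$. Within each matching the edges are vertex-disjoint, so the indicators $X_e$ for $e \in M_i$ are mutually independent $\mathrm{Ber}(p^2)$ variables. Writing $X_i := \sum_{e \in M_i} X_e$ with $\mu_i := p^2 |M_i|$, a standard Chernoff bound splits into two regimes: for a ``large'' class with $\mu_i \ge \log \bar{n}$, the multiplicative form gives $X_i \le 2\mu_i$ with probability at least $1 - \bar{n}^{-\Omega(1)}$; for a ``small'' class, the additive form gives $X_i \le O(\log \bar{n})$ with the same probability. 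Union-bounding over the $\Delta + 1$ color classes then yields
\[
X \le 2 \mathbb{E}[X] + O(\Delta \log \bar{n})
\]
with high probability, and the hypothesis $\Delta \log \bar{n} \le \bar{m} p / 20$ combined with $p^2 \bar{m} \ge 400 \log^2 \bar{n}$ is exactly what is needed to absorb the error term into $4\mathbb{E}[X]$, delivering $X \le 6 p^2 \bar{m}$.

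The main obstacle is verifying that the ``small-class'' losses really do sum to $O(\mathbb{E}[X])$ across all regimes, because naively the $O(\log \bar{n})$ bound per class multiplied by $\Delta + 1$ classes gives $O(\Delta \log \bar{n}) = O(p\bar{m})$, not $O(p^2 \bar{m})$. To close this gap, one must either exploit both hypotheses simultaneously or, as an alternative, replace the edge-coloring step by a vertex-exposure Doob martingale with Freedman's (Bernstein-type) inequality, whose bound uses the predictable quadratic variation $O(p^2 \bar{m} \Delta + p^3 \bar{m} \Delta^2)$ rather than the crude per-step bound $\Delta$. In either formulation, the ratio of the target deviation $5\mathbb{E}[X]$ to the standard deviation is made logarithmic in $\bar{n}$ exactly by the two hypotheses, which is why both inequalities appear together in the statement rather than a single one.
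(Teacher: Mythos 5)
The paper under review does not actually prove this statement; it cites it as Lemma~2 of \cite{ChangPSZ21}, so there is no in-paper proof to compare against. Your proposal therefore has to stand on its own, and it does not yet close.

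You correctly identify that the na\"ive Vizing decomposition with a union bound fails: the small-class losses sum to $\Theta((\Delta+1)\log\bar{n}) = \Theta(p\bar{m})$ by the first hypothesis, which is a factor of $1/p$ too large. The gap is in your proposed fix. With the predictable quadratic variation you write down, $\sigma^2 = O(p^2\bar{m}\Delta + p^3\bar{m}\Delta^2)$, the Freedman exponent with $t = 5p^2\bar{m}$ and increment bound $M = O(\Delta)$ is
\[
\frac{t^2}{2\sigma^2 + \frac{2}{3}Mt} \;=\; \Theta\!\left(\frac{p^2\bar{m}}{\Delta\,(1+p\Delta)}\right).
\]
When $p\Delta\le 1$ this is $\Theta(p^2\bar{m}/\Delta)\ge 20p\log\bar{n}$ by the first hypothesis, which still carries the unwanted factor $p$; when $p\Delta>1$ it becomes $\Theta(p\bar{m}/\Delta^2)\ge 400\log^2\bar{n}/(p\bar{m})$, which can even be $o(1)$. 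Concretely, in the first application of the lemma inside the paper's triangle-listing argument one has $p\approx n_i^{-1/3}$, $\Delta = \widetilde{O}(n_i)$, $\bar{m}\approx n_i^{4/3}$ (up to polylog factors), so your Freedman exponent is $o(1)$ and the bound is vacuous. The claim that ``in either formulation the ratio $\ldots$ is made logarithmic in $\bar{n}$'' is therefore not verified; the two hypotheses as you are using them never kill the residual $p$.

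The missing ingredient is a structural fact that lets you replace the $\Delta$ appearing in the increments by $\sqrt{\bar{m}}$, and this is precisely where the second hypothesis $p^2\bar{m}\ge 400\log^2\bar{n}$ really enters (it gives $p\sqrt{\bar{m}}\ge 20\log\bar{n}$). Expose the vertices in \emph{non-increasing order of degree}. Then the already-revealed neighbors of the $k$-th vertex $v_k$ are exactly its up-neighbors $\{u\sim v_k: \deg(u)\ge\deg(v_k)\}$, and every vertex has at most $\sqrt{2\bar{m}}$ up-neighbors, so the ``revealed-in-$S$'' count $\alpha_k$ satisfies $\alpha_k\sim\mathrm{Bin}(\deg^{\mathrm{up}}(v_k),p)$ with $\deg^{\mathrm{up}}(v_k)\le\sqrt{2\bar{m}}$. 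By Chernoff and $p\sqrt{\bar{m}}\ge 20\log\bar{n}$, w.h.p.\ $\alpha_k = O(p\sqrt{\bar{m}})$ for all $k$, whereas the ``unrevealed'' part of the increment is at most $p\deg(v_k)\le p\Delta$ deterministically. Combined with $\sum_k\alpha_k\le\sum_{u\in S}\deg(u)\le 4p\bar{m}$ w.h.p.\ (a plain Chernoff using the first hypothesis), this gives $V = O(p^3\bar{m}^{3/2} + p^3\Delta\bar{m})$ and $M = O(p\sqrt{\bar{m}}+p\Delta)$, whence the Freedman exponent is $\Omega\!\left(\min(p\sqrt{\bar{m}},\, p\bar{m}/\Delta)\right) = \Omega(\log\bar{n})$ by the two hypotheses. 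So the martingale route does work, but it needs both the degree-ordered exposure and the $\sqrt{2\bar{m}}$ up-degree bound, neither of which appears in your sketch; without them the residual factor of $p$ never goes away.
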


\paragraph{The case where $\widetilde{m_i} / n_i^{2/3} \leq \mu$.} In this case, we proceed with an algorithm similar to that in \congest. We create the sets $A_i$ above as follows.
Every node in $C^{'}$ uniformly at random chooses to join one of $A_1, \dots, A_{n^{1/3}_i}$ and informs its neighbors 
of its choice. Now, for every edge, both of its endpoints know to which set each of them belongs.

Observe any pair $A_{j}, A_{k}$ and denote $S = A_{j} \cup A_{k}$. Using the notations of \Cref{ub-listing-congest-k3-helper}, each node in $C^{'}$ chooses to join $S$ with probability $p = n^{-1/3}_i$ (or $p = 2\cdot n^{-1/3}_i$ if $j = k$). Let us verify that the conditions of \Cref{ub-listing-congest-k3-helper} hold. As  $C$ is a high-average-degree cluster, $2m_i/n_i > x \cdot n^{1/3}/16 \geq x \cdot n^{1/3}_i/16$ and as $x = \widetilde{\Theta(1)}$, then $m_i = \widetilde{\Omega}(n^{4/3}_i)$. Further, observe that $\Delta_{C^{'}} = \widetilde{O}(n_i)$ as every $v \in V_{i}$ has $\deg_{C}(v) \leq n_i$ as so $\deg(v) = \widetilde{O}(n_i)$, and every $u \in V_{i}^{'}$ has edges in $C^{'}$ only towards nodes in $V$ and so $\deg_{C^{'}}(u) \leq n_i$. As such, $\widetilde{m_i}\cdot p/20\log(n_i) = \widetilde\Omega(n^{4/3}_i \cdot n^{-1/3}_i) = \widetilde\Omega(n_i) \geq \Delta_{C^{'}}$, and so we finally get $\Delta_{C^{'}} \leq \widetilde{m_i}\cdot p/20\log(n_i)$. Further, $p^2\cdot \widetilde{m_i} = \widetilde\Omega(\widetilde{m_i}/n_i^{2/3}) = \widetilde\Omega(n_i^{2/3}) \geq 400 \log^2(n_i)$. Thus, as the conditions of the lemma hold, we know that the number of edges between $A_j$ and $A_k$ is at most $6p^2\cdot \widetilde{m_i} = O(\widetilde{m_i}/n_{i}^{2/3})$, w.h.p.

Each node in $K_{j^*}$ is uniquely assigned at most $2k_v \log(n_i)$ triples of sets $A_i$ -- as stated above, due to the choice of $K_{j^*}$, this is enough sets per node in $K_{j^*}$ to ensure that every possible choice of triplets $A_i$ is chosen by a node in $K_{j^*}$.
For each node $v \in K_{j^*}$, denote the set of triplets of sets $A_i$ it is assigned using $T(v)$. As all nodes can locally compute the \emph{class} of every node, then all nodes know which nodes are in $K_{j^*}$ and can locally compute $T(v)$ for any $v \in K_{j^*}$.

Finally, for every $v \in K_{j^*}$, node $v$ iteratively learns triplets from $T(v)$, lists the triangles it sees using these triplets, throws away this information and continues on to the next triplets. Notice that each triplet requires learning $y = O(\widetilde{m_i}/n_{i}^{2/3})$ edges. 

If $y \geq \deg_C(v)$, then node $v$ learns each triplet in $T(v)$ one at a time using \Cref{congest-general-routing1} with a memory bound $\mu' = \Theta(\widetilde{m_i}/n_{i}^{2/3}) = O(\mu)$ and $x = 2^{j^* - 2} \cdot m_i/n_i$. Notice that all the conditions of \Cref{congest-general-routing1} hold: $C$ has $\tmix = \poly \log n$, all nodes in $K_{j^*}$ have degree at least $2^{j^* - 2} \cdot m_i/n_i$, all the nodes in the graph know which nodes belong to $K_{j^*}$, node $v$ is the target of at most $\mu' = \Theta(\widetilde{m_i}/n_{i}^{2/3})$ messages, each node $u \in C$ has at most $\deg(u)$ values to send (i.e. its incident edges) and given each such edge, as all nodes can locally compute $T(v)$, then it is possible to know to which nodes in $K_{j^*}$ to send this edge at this moment. Thus, the routing completes in $\mu'/x \cdot n_i^{o(1)} = \widetilde{m_i}/n_{i}^{2/3} / (2^{j^* - 2} \cdot m_i/n_i) \cdot n_i^{o(1)}$ rounds. This is repeated for $|T(v)| \leq 2k_v \log (n_i)$ times, for a total complexity of $2k_v \log (n_i) \cdot \widetilde{m_i}/n_{i}^{2/3} / (2^{j^* - 2} \cdot m_i/n_i) \cdot n_i^{o(1)}$, and as $k_v \leq 2^{j^* - 1}$, this comes out to $\widetilde{m_i}/(n_{i}^{2/3} \cdot m_i/n_i)\cdot n_i^{o(1)} = n_i^{1/3} \cdot n_i^{o(1)} = n^{1/3 + o(1)}$ rounds, as required.

Otherwise, $y < \deg_C(v)$. In this case, $v$ may need to learn several triplets in $T(v)$ at once, in order to utilize all of its available bandwidth. As $2^{j^* - 2} \cdot m_i/n_i \leq \deg_C(v)$, then $v$ learns $\floor{(2^{j^* - 2} \cdot m_i/n_i)/y}$ triplets at once using \Cref{congest-general-routing1} with a memory bound $\mu' = 2^{j^* - 2} \cdot m_i/n_i \leq \deg_C(v) \leq \mu$ and $x = 2^{j^* - 2} \cdot m_i/n_i$, within $n^{o(1)}$ rounds. This is repeated $|T(v)|/\floor{(2^{j^* - 2} \cdot m_i/n_i)/y} \leq 2k_v \log(n_i) / \floor{(2^{j^* - 2} \cdot m_i/n_i)/y} = \widetilde{O}(y/(m_i/n_i)) = \widetilde {O}((\widetilde{m_i}/n_{i}^{2/3}) / (m_i/n_i)) = \widetilde{O}(n_i^{1/3})$ times, for a total of $n^{1/3+o(1)}$ rounds.

\paragraph{The case where $\widetilde{m_i} / n_i^{2/3} > \mu$.} In this case, we partition the nodes of $C'$ slightly differently. Set $s = (\widetilde{m}_i/\mu)^{1/2}$, and partition the nodes into sets $B_1, \dots B_s$, similarly to how the sets $A_i$ are chosen above.

Observe any pair $B_{j}, B_{k}$ and denote $S = B_{j} \cup B_{k}$. Using the notations of \Cref{ub-listing-congest-k3-helper}, $p = 1/s$ (or $p = 2/s$ if $j = k$). We verify the conditions of \Cref{ub-listing-congest-k3-helper}. As above, we know $\Delta = \widetilde{O}(n_i)$ and $\Delta = \widetilde{O}(\mu)$, therefore $\Delta = \widetilde{O}(n_i^{1/2} \cdot \mu^{1/2})$. As we saw above, $m_i = \widetilde{\Omega}(n^{4/3}_i)$, as so $n_i = o(\widetilde{m}_i)$. Putting this together we get that $\Delta = o(\widetilde{m}_i^{1/2}\cdot \mu^{1/2})$ and therefore $\Delta \leq (\mu\widetilde{m}_i)^{1/2}/(20\log(n_i)) = \widetilde{m}_i\cdot p/(20\log(n_i))$, as required. Further, $400 \log^2(n_i) \leq \mu$, as $\mu$ is larger than $\Delta$ and every node in $C$ has degree which is at least polynomial. On the other hand, $\mu = p^2\cdot \widetilde{m_i}$, and so $400 \log^2(n_i) \leq \widetilde{m_i}$, as required. As the conditions of the lemma hold, the number of edges between $B_j$ and $B_k$ is at most $6p^2\cdot \widetilde{m_i} = O(\mu)$, w.h.p.

Each node in $K_{j^*}$ is uniquely assigned at most $2k_v \log(n_i)  
\cdot (\widetilde{m}_i/\mu)^{3/2} / n_i$ triples of sets $B_i$. Notice that there are $s^3 = (\widetilde{m}_i/\mu)^{3/2}$ ways to choose triplets $B_j$, and so since $2k_v \log(n_i) \geq n_i$, then this is enough triplets per node in $K_{j^*}$ to ensure that every possible choice of triplets $B_j$ is chosen by a node in $K_{j^*}$.
For each node $v \in K_{j^*}$, denote the set of triplets of sets $B_j$ it is assigned using $T(v)$. 

Finally, for every $v \in K_{j^*}$, node $v$ iteratively learns triplets from $T(v)$, lists the triangles it sees using these triplets, throws away this information and continues on to the next triplets. Notice that each triplet requires learning $O(\mu)$ edges and as such this is accomplished in $\mu/(2^{j^*-2} \cdot m_i/n_i) \cdot n^{o(1)}_i$ rounds using \Cref{congest-general-routing1}. This is repeated at most $2k_v \log(n_i)  
\cdot (\widetilde{m}_i/\mu)^{3/2} / n_i$ times, for a total complexity of $(2k_v \log(n_i)  
\cdot (\widetilde{m}_i/\mu)^{3/2} / n_i) \cdot \mu/(2^{j^*-2} \cdot m_i/n_i) \cdot n^{o(1)}_i$ rounds. As $k_v \leq 2^{j^*-1}$, this comes out to $((\widetilde{m}_i/\mu)^{3/2} / n_i) \cdot \mu/(m_i/n_i) \cdot n^{o(1)}_i = ((\widetilde{m}_i/\mu)^{3/2} / n_i) \cdot \mu n_i/m_i \cdot n^{o(1)}_i = (\widetilde{m}_i/\mu)^{3/2} \cdot \mu/m_i \cdot n^{o(1)}_i = (\widetilde{m}_i/\mu)^{1/2} n^{o(1)}_i$ rounds. As $\widetilde{m_i}^{1/2} = \widetilde{O}(n_i)$, we get a final complexity of $n_i^{1+o(1)}/\mu^{1/2} = n^{1+o(1)}/\mu^{1/2}$, as required.

\section{Random Order Streams}\label{subsec:random-order}
It is often beneficial to allow random order streams, as it may allow better space-passes-approximation ratio tradeoffs than adversarial order ones (e.g.,~\cite{guha2009stream}).
In our context, we show that (assuming $\mu=\Omega(n+\Delta^2)$) one can still assume adversarial inputs, and use the network to randomly shuffle the data. 
As in the above, we start by storing all edges at the neighbors of the highest degree node $v$, which is picked as the simulator.
Without loss of generality, we assume that every neighbor has exactly $n$ {edges (as we can add ``dummy edges'' that will not affect our simulation).}

Before formally giving our \switch algorithm, we describe our approach. 
We would like $v$ to generate a random permutation over the $n\cdot \Delta$ edges and have its neighbors stream the edges according to the selected order.
Note that a na\"ive approach is to let $v$ explicitly generate a random permutation and store the shuffled edges in the neighbors so that in the next phase, we stream the edges back to $v$ and their order in the stream is random. However, such an approach requires $\log (n\cdot \Delta)! \approx n\cdot\Delta\cdot \log n$ bits at $v$, which is too costly with respect to memory. 

Instead, we show we can distributedly implement the Fischer-Yates random shuffling algorithm~\cite{fisher1953statistical} using only $O(\Delta^2)$ memory. The Fischer-Yates algorithm for shuffling an array of size $s$ works as follows. Let $A$ be the array we wish to shuffle and let $A'$ be the output (shuffled) array. Elements in $A$ can be ``struck out'' to denote that they have already been selected. Go over elements from 0 to $s-1$, the $i$-th element in $A'$ is selected uniformly at random from the non-struck out elements of $A$ and is then struck out. 

Let us consider the elements that we wish to shuffle as an array of size $n\cdot\Delta$. Consider the following equivalent formulation of the algorithm: Divide $A$ into $\Delta$ consecutive buckets of size $n$, that is, each bucket is with indices $B_k = [nk, n(k+1)-1]$ for $0\leq k\leq \Delta-1$. Let us denote by $a_k$ the number of elements in the $k$-th bucket that are not struck out.
Instead of picking $A'[i]$ directly, we first pick a bucket, where the $k$-th bucket is picked with probability $a_k / (n \Delta - i)$, and then we pick an element from the bucket uniformly at random. We strike out this selected element. Finally, we shuffle every non-overlapping consecutive $\Delta$ elements of $A'$ sequentially, using the original Fischer-Yates approach. The last step is redundant in the sequential algorithm, but will be critical for showing equivalence to our distributed implementation.

Formally speaking, we achieve the following result.

\StreamingRandom*

\begin{proof}
Our distributed shuffling algorithm has two stages. First, we randomly partition the edges between $v$'s neighbors. Then, the neighbors stream the edges one by one while $v$ is shuffling the $\Delta$ incoming edges at each round before inserting them into the random-order streaming algorithm. Intuitively, the last shuffle (corresponding to the last step of the ``bucketized'' Fischer-Yates algorithm described above) is done in order to remove any dependency on the order in which edges arrive within a round. 


In our \switch implementation of a random order stream at the simulator node $v$, the node $v$ locally decides on an arbitrary order of edges with respect to nodes, that is, it considers the first $n$ indices in the (unshuffled) array $A$ to be those at neighbor $v_0$, the next to be at neighbor $v_1$, and so on up to neighbor $v_{\Delta-1}$. The node $v$ executes the first step (bucket selection) of the equivalent formulation of the Fischer-Yates algorithm described above, and then delegates the second step (item selection within the bucket) to the neighboring nodes. Specifically, it is enough to provide every neighbor $v_{\ell}$ with a list of size $\Delta$ (denoted as $L_{\ell}$) that contains how many edges from its bucket need to be moved to every other neighbor $v_j$. $v_{\ell}$ then generates the relevant sets of edges for each other neighbor of $v$. That is, $v_{\ell}$ randomly partitions its edge-set such that $L_{\ell}[0]$ edges will be sent to $v_0$, $L_{\ell}[1]$ edges shall be sent to $v_1$, etc.
At the end of this step, for every edge $e$, neighbor $v_{\ell}$ knows the target neighbor that $e$ will be sent to in the next step. In the next step, edges are then routed to their target neighbor through the node $v$ and are shuffled by the target neighbors, which guarantees this is equivalent to the Fischer-Yates algorithm.

The way that the lists $L_{\ell}$ are created at $v$ is as follows. Initialize a $\Delta \times \Delta$ matrix $L$ to be all zeros. Iterate over $i$ from 0 to $n\cdot \Delta - 1$ and select a bucket index, and increment an entry in the matrix as follows. Using the above formulation, let $j$ be the index of $A$ that lands in $A'[i]$ in the permutation. Then, we increment $B[\floor{i / n} ][\floor{j / n}]$ to record that another edge should be passed from bucket $\floor{j / n}$ to bucket $\floor{i / n}$. Now, for $0\leq j \leq \Delta-1$, the $j$-th neighbor $v_{j}$ of $v$ receives the $j$-th column of the matrix $B$ as its list $L_j$. 

\paragraph{Avoiding congestion.}
Using the above approach indeed allows every neighbor of $v$ to compute a set of edges it must send to every other neighbor of $v$ such that the resulting order of edges within nodes is a random permutation. However, invoking the above na\"ively can lead to congestion, as it might be the case that many edges are sent to a single \mbox{node in one round (and perhaps none at other rounds).}

To overcome this, we observe that $(1/n)\cdot B$ (multiplying each coordinate of $B$ by $1/n$) gives a doubly stochastic matrix (whose sums over each row and each column are exactly 1). By Birkhoff's Theorem~\cite{birkhoff1946tres}, this implies that there exist $\Delta$ permutation matrices $P_0,\ldots,P_{\Delta-1}$ and coefficients $\gamma_0,\ldots,\gamma_{\Delta-1} > 0$ such that $(1/n)\cdot B = \sum_{i=0}^{\Delta-1} \gamma_i\cdot P_i$. Specifically, this means each such permutation matrix corresponds to a perfect matching in the bipartite graph $H=(U^1,U^2,E_H)$, where $U^1$ and $U^2$ are both copies of $[\Delta]$, and in which there exists an edge $(i,j)\in E_H$ iff $B[i][j]>0$. We use this matching to schedule one iteration without congestion, by having the simulator node $v$ invoke this approach to create the permutation matrices and thus compute a perfect matching (e.g., using alternating paths) in $O(\Delta^2)$ memory, and then it lets each neighbor know where to send an edge to. For each edge $(i,j)$ in the perfect matching, we decrement $B[i][j]$ to reflect that there is one fewer edge that needs to be routed from $i$ to $j$. This is essentially equivalent to setting $B=B-P$, where $P$ is the permutation matrix that corresponds to our perfect matching. As a consequence, $B$'s rows and columns now sum up to $n-1$, and $(1/(n-1))\cdot B$ is doubly stochastic.
This process can then be repeated $n$ times until all edges are partitioned as desired.
\end{proof}


\end{document}